\def\@seccntformat#1{%
	\protect\textup{\protect\@secnumfont
		\ifnum\pdfstrcmp{subsection}{#1}=0 \bfseries\fi
		\ifnum\pdfstrcmp{subsubsection}{#1}=0 \itshape\fi
		\csname the#1\endcsname
		\protect\@secnumpunct
	}%
}
\renewcommand{\@upn}{}
\DeclareRobustCommand{\crefnosort}[1]{%
	\begingroup\@cref@sortfalse\cref{#1}\endgroup
}
\numberwithin{equation}{section}
\newtheorem{thm}{Theorem}
\newtheorem{lem}[thm]{Lemma}
\newtheorem{prop}[thm]{Proposition}
\newtheorem{cor}[thm]{Corollary}
\declaretheoremstyle[
notefont=\bfseries, notebraces={}{},
bodyfont=\normalfont\itshape,
headformat=\NAME \NOTE
]{nopar}
\declaretheorem[style=nopar]{assumption}
\theoremstyle{definition}
\renewcommand*{\thehyp}{\Alph{hyp}}
\theoremstyle{remark}
\newtheorem{rem}{Remark}[section]
\newtheorem{ex}[rem]{Example}
\crefname{hyp}{Hypothesis}{Hypotheses}
\Crefname{hyp}{Hypothesis}{Hypotheses}
\crefname{lem}{Lemma}{Lemmas}
\Crefname{lem}{Lemma}{Lemmas}
\crefname{thm}{Theorem}{Theorems}
\Crefname{thm}{Theorem}{Theorems}
\crefname{prop}{Proposition}{Propositions}
\Crefname{prop}{Proposition}{Propositions}
\crefname{enumi}{}{}
\Crefname{enumi}{}{}
\crefname{equation}{}{}
\Crefname{equation}{}{}
\crefname{rem}{Remark}{Remarks}
\Crefname{rem}{Remark}{Remarks}
\crefname{ex}{Example}{Examples}
\Crefname{ex}{Example}{Examples}
\renewcommand{\@upn}{} 
\newlist{enumthm}{enumerate}{1} 
\setlist[enumthm]{label=\upshape(\roman*),ref=\thethm~(\roman*)}  
\newlist{enumcor}{enumerate}{1}
\setlist[enumcor]{label=\upshape(\roman*),ref=\thecor~(\roman*)}
\newlist{enumlem}{enumerate}{1}
\setlist[enumlem]{label=\upshape(\roman*),ref=\thelem~(\roman*)}
\newlist{enumprop}{enumerate}{1}
\setlist[enumprop]{label=\upshape(\roman*),ref=\theprop~(\roman*)}
\newlist{enumhyp}{enumerate}{1}
\setlist[enumhyp]{label=\upshape(\roman*),ref=\thehyp~(\roman*)}
\newlist{enumproof}{enumerate*}{1}
\setlist[enumproof]{label=\upshape(\roman*)}
\newlist{enumdef}{enumerate}{1}
\setlist[enumdef]{label=\upshape(\roman*),ref=\thedefn~(\roman*)}
\newcounter{subcreftmpcnt} %
\newcommand\romansubformat[1]{(\roman{#1})} 
\def\subcref{\@ifstar\@@subcref\@subcref}
\newcommand\@subcref[2][\romansubformat]{%
	\ifcsname r@#2@cref\endcsname
	\cref@getcounter {#2}{\mylabel}%
	\setcounter{subcreftmpcnt}{\mylabel}%
	\hyperref[#2]{\romansubformat{subcreftmpcnt}}%
	\else ?? \fi}   
\newcommand\@@subcref[2][\romansubformat]{%
	\ifcsname r@#2@cref\endcsname
	\cref@getcounter {#2}{\mylabel}%
	\setcounter{subcreftmpcnt}{\mylabel}%
	\romansubformat{subcreftmpcnt}%
	\else ?? \fi}   
\DeclareRobustCommand{\crefnosort}[1]{%
	\begingroup\@cref@sortfalse\cref{#1}\endgroup
}
\def\endstepsymbol{$\lozenge$}
\def\endclaimsymbol{$\lozenge$}
\newcounter{proofstep}
\crefname{proofstep}{Step}{Steps}
\Crefname{proofstep}{Step}{Steps}
\newcounter{proofclaim}
\crefname{proofclaim}{Claim}{Claims}
\Crefname{proofclaim}{Claim}{Claims}
\newcommand{\eps}{\varepsilon}\newcommand{\ph}{\varphi}
\newcommand{\supp}{\operatorname{supp}}
\DeclareMathOperator*{\slim}{s-lim}
\renewcommand{\bar}[1]{\overline{#1}}
\DeclareFontFamily{U}{mathx}{\hyphenchar\font45}
\DeclareFontShape{U}{mathx}{m}{n}{
	<5> <6> <7> <8> <9> <10>
	<10.95> <12> <14.4> <17.28> <20.74> <24.88>
	mathx10
}{}
\DeclareSymbolFont{mathx}{U}{mathx}{m}{n}
\DeclareMathAccent{\widecheck}{0}{mathx}{"71}
\DeclareMathAccent{\wideparen}{0}{mathx}{"75}
\DeclareFontFamily{OMX}{MnSymbolE}{}
\DeclareFontShape{OMX}{MnSymbolE}{m}{n}{
	<-6>  MnSymbolE5
	<6-7>  MnSymbolE6
	<7-8>  MnSymbolE7
	<8-9>  MnSymbolE8
	<9-10> MnSymbolE9
	<10-12> MnSymbolE10
	<12->   MnSymbolE12}{}
\DeclareSymbolFont{mnlargesymbols}{OMX}{MnSymbolE}{m}{n}
\DeclareMathDelimiter{\llangle}{\mathopen}{mnlargesymbols}{'164}{mnlargesymbols}{'164}
\DeclareMathDelimiter{\rrangle}{\mathclose}{mnlargesymbols}{'171}{mnlargesymbols}{'171}
\DeclareMathDelimiter{\lsem}{\mathopen}{mnlargesymbols}{'102}{mnlargesymbols}{'102}
\DeclareMathDelimiter{\rsem}{\mathclose}{mnlargesymbols}{'107}{mnlargesymbols}{'107}
\DeclareMathDelimiter{\langlebar}{\mathopen}{mnlargesymbols}{'152}{mnlargesymbols}{'152}
\DeclareMathDelimiter{\ranglebar}{\mathclose}{mnlargesymbols}{'157}{mnlargesymbols}{'157}
\DeclareMathDelimiter{\lWavy}{\mathopen}{mnlargesymbols}{'137}{mnlargesymbols}{'137}
\DeclareMathDelimiter{\rWavy}{\mathopen}{mnlargesymbols}{'137}{mnlargesymbols}{'137}
\newcommand{\chr}{\mathbf 1}
\newcommand{\abs}[1]{\lvert#1\lvert}
\newcommand{\norm}[1]{\lVert#1\lVert}
\newcommand{\ad}{a^\dagger}
\title[Ergodicity of Renormalized Nelson Semigroups]{On the Ergodicity of Renormalized\\ Translation-Invariant Nelson-Type Semigroups}
\author{Benjamin Hinrichs}
\address{Benjamin Hinrichs, Universit\"at Paderborn, Institut f\"ur Mathematik, Institut f\"ur Photonische Quantensysteme, Warburger Str. 100, 33098 Paderborn, Germany}
\email{benjamin.hinrichs@math.upb.de}
\author{Fumio Hiroshima}
\address{Fumio Hiroshima, Faculty of Mathematics, Kyushu University
	819-0395 744 Nishi-ku Motooka, Fukuoka, Japan}
\email{hiroshima.fumio.965@m.kyushu-u.ac.jp}
\subjclass[2020]{Primary 81S40; Secondary 47D08.}
\newcommand{\nr}{\mathsf{nr}}
\newcommand{\sr}{\mathsf{sr}}
\newcommand{\FP}{{\mathcal F}\!_+}
\newcommand{\FBP}{\bar{\mathcal F}\!_+}
\newcommand{\dGn}[1]{{\rm d}\Gamma\!_{#1}}
\begin{document}
\setlength{\baselineskip}{15pt}
\begin{abstract} 
	\noindent
	We present a simple functional integration based proof that the semigroups generated by the ultraviolet-renormalized translation-invariant non- and semi-relativistic Nelson Hamiltonians are po\-si\-ti\-vi\-ty improving (and hence ergodic) with respect to the Fr\"ohlich cone for arbitrary values of the total momentum.
	Our argument simplifies known proofs for ergodicity and the result is new in
	the semi-relativistic case.
\end{abstract}

\maketitle

\section{Introduction}
A Hilbert space operator is called {\em positivity improving} if it maps the non-zero elements of a self-dual convex cone (of `positive' elements) to its strictly positive elements, i.e., the elements having strictly positive inner product with every element of the cone. It is called {\em ergodic} if its $n$-th power is positivity improving for some $n\in{\mathbb N}$. Here ${\mathbb N}=\{1,2,\ldots\}$.  This property has significant applications in spectral theory, since it implies that any maximal eigenvalue of the operator is non-degenerate and the corresponding eigenspace is spanned by a unique strictly positive vector, by the Perron--Frobenius--Faris theorem \cite{Gross.1971,Faris.1972}.
In the case of strongly continuous selfadjoint semigroups the notions positivity improving and ergodic coincide \cite{Simon.1973} and are hence often used interchangeably, as in this paper.

In the present article, we study the ergodicity of semigroups obtained as the limit of ergodic semigroups.
Since strict inequalities do not carry over in limits, it is natural that concrete properties have to be utilized to prove ergodicity in this case. 
Our proof hereby makes use of a functional integral representation of the semigroup, explicitly given in a Feynman--Kac type formula.

The models we are interested in stem from quantum field theory. In this field of research, ergodicity arguments and hence the non-degeneracy of ground states has been exploited in many works on spectral theory, see for example \cite{BachFroehlichSigal.1998a,Faris.1972,Frohlich.1974,GlimmJaffe.1970b,Hiroshima.2000b,Moller.2005}.

Explicitly, we are
concerned with the three-dimensional non-relativistic and the two-dimensional semi-relativistic 
Nelson models from quantum field theory.
They describe a spinless non- and semi-relativistic particle, respectively, linearly coupled to a bosonic quantum field.
Due to the ultraviolet divergence of the quantum field, the corresponding operators can only be defined as a selfadjoint lower-semibounded operator via a renormalization procedure. This has been done in \cite{Nelson.1964,Cannon.1971,Ammari.2000,GubinelliHiroshimaLorinczi.2014,MatteMoller.2018,GriesemerWuensch.2018,LampartSchmidt.2019} in the non-relativistic and in \cite{Sloan.1974,Schmidt.2019,HinrichsMatte.2022,HinrichsMatte.2023} in the semi-relativistic case.
In the absence of an external potential, the Hamiltonian decomposes with respect to the total momentum and one can study a family of Hamiltonians  $H_\nr(P)$ with $P\in{\mathbb R}^3$ and $H_\sr(P)$ with $P\in{\mathbb R}^2$ in the non- and semi-relativistic case, respectively, which act solely on the Hilbert space of the quantum field.
Both models (with and without ultraviolet regularization) are an active field of research in spectral and scattering theory. For example, the non-relativistic model has been studied in \cite{Cannon.1971,Frohlich.1973,Frohlich.1974,Pizzo.2003,Pizzo.2005,Moller.2005,Moller.2006,DybalskiPizzo.2014,DamHinrichs.2021,HiroshimaMatte.2019,HaslerHinrichsSiebert.2023,BeaudDybalskiGraf.2021,BachmannDeckertPizzo.2012} and the semi-relativistic model in \cite{Sloan.1974,Gross.1973,DeckertPizzo.2014,Dam.2018,HaslerHinrichsSiebert.2023}.

As far as ergodicity goes, we want to study a natural selfdual convex cone in the Hilbert space of the quantum field called {\em Fr\"ohlich cone}. In the non-relativistic case and in presence of an ultraviolet regularization, ergodicity of the semigroup in Nelson's model for arbitrary total momentum $P$ has been proven and utilized in \cite{Frohlich.1973,Moller.2005}. Further, Fr\"ohlich conjectured in \cite{Frohlich.1973} that the ultraviolet-renormalized Hamiltonian already known from \cite{Nelson.1964,Cannon.1971} at the time, also generates an ergodic semigroup. Nevertheless, rigorous proofs of this fact have only recently been given in the articles \cite{Miyao.2018,Miyao.2019,Lampart.2020} by Miyao and Lampart, using different methods of proof. Inspired by this progress, we give a simple novel proof using a functional integration method building upon the Feynman--Kac formula for the ultraviolet-renormalized model from \cite{MatteMoller.2018}.
As a byproduct, we also prove ergodicity of the Fr\"ohlich polaron \cite{Frohlich.1954} with respect to the Fr\"ohlich cone for arbitrary total momentum along the way, cf. \cref{subsec:formren}.
Our method directly carries over to the semi-relativistic model (and in fact some technical aspects of our proof become significantly simpler in this case), for which the ergodicity (with respect to a different selfdual convex cone) had to the authors knowledge up to date only been proven in the case $P=0$ \cite{Sloan.1974}.
As Lampart points out in \cite{Lampart.2020}, a combination of the techniques in \cite{Schmidt.2019,Lampart.2020} might provide an alternative approach to our result in the semi-relativistic case.

In fact, our general positivity results presented in \cref{sec:positivity} bear the potential to be applied to a broad class of translation-invariant models, see \cite{Lampart.2023} for an abstract result on renormalization of such models. Especially, in combination with a suitable Feynman--Kac formula, the authors hope that the result will contribute to proving ergodicity for the semigroup of the Bogoliubov--Fr\"ohlich Hamiltonian renormalized in \cite{Lampart.2019} and in the translation-invariant case further studied in \cite{HinrichsLampart.2023}.

\section{Main Results on the Nelson Model}\label{sec:results}
In this \lcnamecref{sec:results}, we first briefly introduce Fock space calculus (\cref{subsec:Fock}), the ultraviolet renormalized  non- and semi-relativistic Nelson models (\cref{subsec:models}) and the Fr\"ohlich cone (\cref{subsec:cone}). Then, we state and prove our main result on the corresponding semigroups (\cref{subsec:erg}). The proof combines the positivity results presented in the subsequent \cref{sec:positivity} with the Feynman--Kac formulas from \cite{Hiroshima.2015,MatteMoller.2018,HinrichsMatte.2023}, collected in \cref{subsec:FK}.

\subsection{Fock Space Calculus}\label{subsec:Fock}
We give a brief overview of the Fock space objects used in this article. For more details, see for example \cite{Parthasarathy.1992,Arai.2018}.

For now let $d\in{\mathbb N}$ be arbitrary but fixed.
Given any measurable set $\Omega\subset{\mathbb R}^d$ and using the convention that $L^2(\emptyset)$ is the trivial vector space, let
\begin{align}
	{\mathcal F}(\Omega) &\coloneqq \bigoplus_{n=0}^\infty {\mathcal F}^{(n)}(\Omega),\\
	\nonumber& {\mathcal F}^{(0)}(\Omega) \coloneqq {\mathbb C}\\
	\nonumber& {\mathcal F}^{(n)}(\Omega) \coloneqq L^2_{\mathsf{sym}}(\Omega^n) \\\nonumber & \qquad\qquad = \{f\in L^2(\Omega^n)| f(k_{\pi(1)},\ldots,k_{\pi(n)})=f(k_1,\ldots,k_n),\pi\in{\mathcal S}_n,\ \text{a.e.}\ k\in\Omega^n\}
\end{align}
denote the boson Fock space over $L^2(\Omega)$, where ${\mathcal S}_n$ is the symmetric group of degree $n$.
Throughout this article, we will denote by ${\rm P}\!_n$ the orthogonal projection onto ${\mathcal F}^{(n)}(\Omega)$.

For measurable $\Theta\subset\Omega$ and recalling the unitary equivalence $L^2(\Omega)\cong L^2(\Theta)\oplus L^2(\Omega\setminus\Theta)$, it is easy to check that there are natural unitaries
\begin{align}\label{eq:Fockres}
 {\mathcal F}(\Omega) \cong {\mathcal F}(\Theta)\otimes {\mathcal F}(\Omega\setminus\Theta) \cong {\mathcal F}(\Theta)\oplus {\mathcal F}(\Theta)^\perp, 
 \end{align}
where 
\[\qquad {\mathcal F}(\Theta)^\perp = {\mathcal F}(\Theta)\otimes \bigoplus_{n=1}^\infty
{\mathcal F}^{(n)}(\Omega\setminus\Theta)\]
and 
 on the most right hand side of \cref{eq:Fockres} we used the identification
\begin{align*}
{\mathcal F}(\Theta)\cong 
{\mathcal F}(\Theta)\otimes {\mathcal F}^{(0)}(\Omega\setminus\Theta)
=
{\mathcal F}(\Theta)\otimes {\mathbb C}.
\end{align*}
We denote the composition of this unitary with the orthogonal projection onto ${\mathcal F}(\Theta)$ in the direct sum on the right hand side of \cref{eq:Fockres} by ${\rm Q}_\Theta^\Omega$, i.e., 
\begin{subequations}\label{Qexplicit}
\begin{align}
\label{24}
&({\rm Q}_\Theta^\Omega\Psi)^{(n)}=\Psi^{(n)}\lceil_{\Theta^n},\\
\label{25}
&((\one-{\rm Q}_\Theta^\Omega)\Psi)^{(n)}=\sum_{k=0}^{n-1} 
\Psi^{(k)}\lceil_{\Theta^k}\otimes \Psi^{(n-k)}\lceil_{(\Omega\setminus \Theta)^{n-k}}.
\end{align}
\end{subequations}
Given a multiplication operator $m:\Omega\to{\mathbb C}$ and $n\in{\mathbb N}_0\coloneqq{\mathbb N}\cup\{0\}$, we define the multiplication operator $\dGn n(m)$ acting on ${\mathcal F}^{(n)}(\Omega)$ as 
\begin{align}
	\dGn0(m) \coloneqq 0  , \qquad \dGn n(m)(k_1,\ldots,k_n) \coloneqq \sum_{i=1}^n m(k_i)
\end{align}
and the second quantization of $m$ as
\begin{align}\label{def:dG}
	{\rm d}\Gamma(m) \coloneqq \bigoplus_{n=0}^\infty \dGn n(m).
\end{align}
If $m=(m_1,\ldots,m_k)$ is a vector of multiplication operators, then we consider its second quantization ${\rm d}\Gamma(m)=({\rm d}\Gamma(m_1),\ldots,{\rm d}\Gamma(m_k))$ as a vector of operators as well.

Given a test function $f\in L^2(\Omega)$, we define the annihilation operator $a(f)$ by
\begin{align}
	&a_n(f):{\mathcal F}^{(n+1)}(\Omega)\to {\mathcal F}^{(n)}(\Omega), \qquad a_n(f)\psi (k_1,\ldots,k_n) \coloneqq \sqrt{n+1}\int_{{\mathbb R}^d} \overline{f(k)}\psi(k,k_1,\ldots,k_n){\rm d}k,\nonumber\\
	&{\mathscr D}(a(f)) \coloneqq \{\psi\in{\mathcal F}^{(n)}| {\textstyle\sum_{n=0}^\infty}
	\|a_n(f){\rm P}\!_{n+1}\psi\|^2 <\infty \}, \qquad {\rm P}\!_n a(f)\psi \coloneqq a_n(f){\rm P}\!_{n+1}\psi.
	\label{def:ann}
\end{align}
It is a closed densely defined operator on ${\mathcal F}(\Omega)$. Its adjoint is the creation operator $\ad(f)=a(f)^*$, which for $f_1,\ldots,f_m\in L^2(\Omega)$ satisfies
\begin{align}
	\begin{aligned}
		&({\rm P}\!_n\ad(f_1)\cdots \ad(f_m) \psi )(k_1,\ldots,k_n) 
		\\&\quad
		=\frac{\sqrt{n(n-1)\cdots (n-m+1)}}{n!}\sum_{\pi\in{\mathcal S}_n}f_1(k_{\pi(1)})\cdots f_m(k_{\pi(m)}){\rm P}\!_{n-m}\psi(k_{\pi(m+1)},\ldots,k_{k_{\pi(n)}}).
	\end{aligned}
	\label{def:cre}
\end{align}
They satisfy the canonical commutation relations $[a(f),a(g)]=[\ad(f),\ad(g)]=0$ and $[a(f),\ad(g)]=\braket{f,g}$ for $f,g\in L^2(\Omega)$ on a dense domain.
Further, it is well-known that ${\mathscr D}(a(f))={\mathscr D}(\ad(f))$ for any $f\in L^2(\Omega)$.

If $\omega:\Omega\to[0,\infty)$ is a selfadjoint invertible multiplication operator, i.e., satisfies $\omega>0$ a.e., and if $f\in{\mathscr D}(\omega^{-1/2})$ then ${\mathscr D}(a(f))\subset {\mathscr D}({\rm d}\Gamma(\omega))$ and
\begin{align}\label{eq:abound}
	\norm{a(f)\psi} \le \|\omega^{-1/2}f\|\|{\rm d}\Gamma(\omega)\psi\|, \qquad \psi\in{\mathscr D}(a(f)).
\end{align}
The field operator is the selfadjoint operator
\begin{align}
	\ph(f) \coloneqq \overline{a(f)+\ad(f)},
\end{align}
where $\overline{\,\cdot\,}$ here denotes the operator closure.

Finally, we want to introduce exponentials of creation operators. Since these are usually unbounded, we will in fact directly define the operator closure of ${\mathrm e}^{\ad(f)}{\mathrm e}^{-t{\rm d}\Gamma(\omega)}$ for $t>0$, which is bounded by \cref{eq:abound} if $f\in{\mathscr D}(\omega^{-1/2})$ for some multiplication operator $\omega:\Omega\to[0,\infty)$ satisfying $\omega>0$ almost everywhere. We will use the series expansion
\begin{align}\label{def:Ft}
	F_t^\omega(f) \coloneqq \sum_{n=0}^\infty\frac{1}{n!}\ad(f)^n{\mathrm e}^{-t{\rm d}\Gamma(\omega)}.
\end{align}
It is proven in \cite[Appendix 6]{GueneysuMatteMoller.2017} that $F_t^\omega$ defines a bounded operator on ${\mathcal F}(\Omega)$ satisfying
\begin{align}\label{eq:Fbd}
	\norm{F_t^\omega(f)} \le  {\mathrm e}^{4\|f\|_\omega^2}, \qquad f\in{\mathscr D}(\omega^{-1/2}), \qquad \|f\|_\omega^2 \coloneqq \|f\|^2+\|\omega^{-1/2}f\|^2.
\end{align}

\subsection{Nelson Model}\label{subsec:models}
We move to the definition of the translation-invariant Nelson-type model we consider.
In general its full Hamiltonian $\hat H_{\Psi,\omega,v}$ on $L^2({\mathbb R}^d)\otimes {\mathcal F}({\mathbb R}^d)$ is given by
\begin{align}
	\hat H_{\Psi,\omega,v} \coloneqq \Psi(-{\mathrm i}\nabla_x)\otimes\one + \one\otimes{\rm d}\Gamma(\omega) + \ph({\mathrm e}^{-{\mathrm i}\hat p \cdot x}v).
\end{align}
Here, $\Psi:{\mathbb R}^d\to[0,\infty)$ denotes the dispersion relation of the particle, $\omega:{\mathbb R}^d\to [0,\infty)$ with $\omega>0$ a.e. is the dispersion relation of the field bosons and 
$v\in {\mathscr D}(\omega^{-1/2})\cap {\mathscr D}(\omega)$ is the coupling function, sometimes called form factor in the literature. Further, $\hat p$ denotes the momentum operator on the one boson space $L^2({\mathbb R}^d)$, i.e., $\hat p f(k) = k f(k)$ and $x$ is the position operator on the particle Hilbert space.

The operator $\hat H_{\Psi,\omega,v}$ is self-adjoint on ${\mathscr D}(\Psi(-{\mathrm i}\nabla_x)\otimes\one)
\cap {\mathscr D}(\one\otimes{\rm d}\Gamma(\omega))$ and bounded from below, by the Kato--Rellich theorem and the relative bound \cref{eq:abound}. 
Further, it is translation-invariant, i.e., it commutes with the total momentum operator 
\begin{align}
	P_{\rm tot}=-\ri \nabla_x\otimes \one+\one\otimes{\rm d}\Gamma(\hat p).
\end{align}
Hence, we can decompose $\hat H_{\Psi,\omega,v}$ with respect to the joint spectrum of $P_{\rm tot}$ as 
\begin{align}\label{PP}
	\hat H_{\Psi,\omega,v}\cong \int_{{\mathbb R}^{d}}^\oplus H_{\Psi,\omega,v}(P) {\rm d}P.
\end{align}
Here $\cong$ denotes unitary equivalence on $L^2({\mathbb R}^d)\otimes {\mathcal F}({\mathbb R}^d)$, which is
explicitly implemented by the Lee--Low--Pines \cite{LeeLowPines.1953} operator $U$ given by
\begin{align}
	\label{U}
	U F(P)=\frac{1}{(2\pi)^{d/2}}\int_{{\mathbb R}^{d}} e^{-\ri x\cdot(P-{\rm d}\Gamma(\hat p))}F(x)\rd x
\end{align}
for $F\in L^2({\mathbb R}^d)\otimes {\mathcal F}({\mathbb R}^d)=L^2({\mathbb R}^d,{\mathcal F}({\mathbb R}^d))$ and 
\begin{align}
	\int_{{\mathbb R}^d} \Braket{U F(P), e^{-TH_{\Psi,\omega,v}(P)} U G(P)}_{\mathcal F} \rd P=
	\Braket{F, e^{-T\hat H_{\Psi,\omega,v}}G}_{{\mathcal H}}.
\end{align}
Therein, the translation-invariant Nelson Hamiltonian $H_{\Psi,\omega,v}(P)$ with total momentum $P\in{\mathbb R}^{d}$ is the selfadjoint operator acting on ${\mathcal F}({\mathbb R}^d)$ with domain ${\mathscr D}(\Psi(P-{\rm d}\Gamma(\hat p)))\cap {\mathscr D}({\rm d}\Gamma(\omega))$ satisfying
\begin{align}\label{def:Nelson}
	H_{\Psi,\omega,v}(P)=\Psi(P-{\rm d}\Gamma(\hat p))+{\rm d}\Gamma(\omega) + \ph(v),\quad P\in{\mathbb R}^{d}.
\end{align} 
This is the operator, which we investigate in this article.
More concretely, for the purpose of this \lcnamecref{sec:results}, we consider the following two cases:
\begin{itemize}
	\item[(1)]The {\em non-relativistic Nelson model} in $d=3$ dimensions is given by the choice 
\[\Psi(p) = \Psi_\nr(p) \coloneqq \frac12\abs p^2,
\quad
\omega(k)=\sqrt{|k|^2+m^2},
\quad
v(k)=v_\Lambda(k)=\lambda\chr_{|k|<\Lambda}\omega(k)^{-1/2},\]
where $m\ge0$ is the boson mass, $\lambda\in\mathbb R$ is the coupling constant and $\Lambda\in(0,\infty)$ is an ultraviolet cutoff. In this case, we denote $H_{\Psi,\omega,v_\Lambda}(P)$ by $H_{\nr,\Lambda}(P)$.
	\item[(2)] The {\em semi-relativistic Nelson model} in $d=2$ dimensions is given by the choice 
	\[\Psi(p)= \Psi_\sr(p) \coloneqq \sqrt{\abs p^2+M^2}-M,\quad \omega(k)=\sqrt{|k|^2+m^2}, \quad v(k)=v_\Lambda(k)=\lambda\chr_{|k|<\Lambda}\omega(k)^{-1/2},\]
	where 
	$M\ge 0$ is the particle mass, $m>0$ is the boson mass, and $\lambda,\Lambda$ are as before. In this case, we denote $H_{\Psi,\omega,v_\Lambda}(P)$ by $H_{\sr,\Lambda}(P)$. We emphasize that the bosons can not be massless in this case, cf. \cite{HinrichsMatte.2023} for a discussion of this fact.
\end{itemize}
To remove the ultraviolet cutoff $\Lambda$, we introduce the renormalization energy
\begin{align}
	E_\Lambda \coloneqq -\|(\Psi+\omega)^{-1/2}v_\Lambda\|^2.
\end{align}
Note that $E_\Lambda$ does not depend on the total momentum $P\in{\mathbb R}^d$.

The ultraviolet renormalized Nelson Hamiltonian now is the unique selfadjoint lower-semibounded Hamiltonian $H_\#(P)$, $\#\in\{\nr,\sr\}$ satisfying
\begin{align}\label{eq:UVren}
	{\mathrm e}^{-tH_\#(P)} = \lim_{\Lambda\to\infty} {\mathrm e}^{-t(H_{\#,\Lambda}(P)-E_\Lambda)}, \qquad t>0,
\end{align}
where the limit is in operator norm. Proofs for the existence of this operator can be found in \cite{Cannon.1971,Hiroshima.2015,MatteMoller.2018,Lampart.2020,DamHinrichs.2021} in the case $\#=\nr$ and \cite{Sloan.1974,HinrichsMatte.2023} in the case $\#=\sr$. We remark that the norm convergence of the full operators follows from this fact, see the discussion in \cite{HinrichsMatte.2023}, and has separately been proven in the further references given in the introduction.
We also remark that a similar result is not expected to hold for the semi-relativistic model in $d=3$ dimensions, cf. \cite{DeckertPizzo.2014}, also see \cite{Gross.1973} for a treatment of the full operator in this case with a different renormalization method.

\subsection{The Fr\"ohlich Cone}\label{subsec:cone}

Our notion of positivity is induced by the Fr\"ohlich cone
\begin{align}
	\FP(\Omega) = \{\psi\in{\mathcal F}(\Omega)|{\rm P}\!_n\psi \in L^2_+(\Omega^n)\ \text{for all $n\in{\mathbb N}_0$}\},
\end{align}
where $L^2_+(M)\coloneqq \{f\in L^2(M)|f>0\ \text{almost everywhere}\}$.
Its dual and closure coincide, whence the convex cone
\begin{align}
	\FBP(\Omega) = \{\psi\in{\mathcal F}(\Omega)|{\rm P}\!_n\psi \in \bar L^2_+(\Omega^n)\ \text{for all $n\in{\mathbb N}_0$}\}
\end{align}
with $\bar L^2_+(M) = \{f\in L^2(M)|f\ge 0\ \text{almost everywhere}\}$
is self-dual.

Recalling \cref{eq:Fockres,Qexplicit}, we will utilize the rather straightforward observations
\begin{align}\label{eq:posdecomp}
	\psi\in\FP(\Omega) \iff \forall \Theta\subset\Omega: {\rm Q}^\Omega_\Theta \psi \in \FP(\Theta) \Rightarrow  \forall \Theta \subset \Omega: (1-{\rm Q}^\Omega_\Theta)\psi\in\FBP(\Omega).
\end{align}
 We will call a bounded operator $B\in{\mathcal B}({\mathcal F}(\Omega))$ {\em positivity preserving} if $B\FBP(\Omega)\subset \FBP(\Omega)$ and {\em positivity improving} if $B(\FBP(\Omega)\setminus\{0\})\subset \FP(\Omega)$. 

\subsection{Ergodicity of Nelson Semigroups}\label{subsec:erg}
We can now state our main result, which will be proven in the end of \cref{subsec:Nelson}.
\begin{thm}
	\label{thm:Nelsonpos}
	For $\#\in\{\nr,\sr\}$, $P\in{\mathbb R}^d$ and $t>0$, if we have negative coupling constant $\lambda<0$, then the operator ${\mathrm e}^{-tH_{\#}(P)}$ is positivity improving.
\end{thm}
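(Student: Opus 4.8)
The plan is to prove \cref{thm:Nelsonpos} by reducing the ultraviolet-renormalized case to the regularized case $H_{\#,\Lambda}(P)$ and then passing to the limit, using the Feynman--Kac formula to control the limiting positivity. First I would establish the statement for the regularized Hamiltonians $H_{\#,\Lambda}(P)$ with $\lambda<0$: by the Feynman--Kac formula (collected in \cref{subsec:FK}), the kernel of ${\mathrm e}^{-tH_{\#,\Lambda}(P)}$ acting between states of the Fr\"ohlich cone is expressed as an expectation over particle paths of an exponential of the field coupling. The key point is that with $\lambda<0$ the form factor $v_\Lambda = \lambda\chr_{|k|<\Lambda}\omega^{-1/2}$ is a \emph{negative} multiplier, and one checks that the relevant exponentials $F_t^\omega(g)$ appearing in the Feynman--Kac representation (see \cref{def:Ft,eq:Fbd}) are built from creation operators applied to cone elements with the correct sign, so that ${\mathrm e}^{-tH_{\#,\Lambda}(P)}$ is positivity preserving; strict positivity (positivity improving) for $t>0$ then follows because the path integral genuinely populates every $n$-particle sector with a strictly positive density, using the decomposition characterization \cref{eq:posdecomp} of the cone.

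The second step is to pass to the renormalization limit. By \cref{eq:UVren}, ${\mathrm e}^{-tH_\#(P)} = \lim_{\Lambda\to\infty}{\mathrm e}^{-t(H_{\#,\Lambda}(P)-E_\Lambda)}$ in operator norm, and since $E_\Lambda$ is a real scalar the factor ${\mathrm e}^{tE_\Lambda}$ does not affect positivity; hence ${\mathrm e}^{-tH_\#(P)}$ is a norm limit of positivity improving operators and is therefore positivity \emph{preserving}, i.e.\ ${\mathrm e}^{-tH_\#(P)}\FBP(\Omega)\subset\FBP(\Omega)$. The difficulty, flagged already in the introduction, is that strict positivity does not survive norm limits, so this alone does not give positivity improving. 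To recover it, I would use the renormalized Feynman--Kac formula itself (from \cite{MatteMoller.2018,HinrichsMatte.2023}), which represents ${\mathrm e}^{-tH_\#(P)}$ directly via a path-integral kernel with a finite renormalized action. The plan is to show that, thanks again to $\lambda<0$, the renormalized exponential weight is strictly positive for almost every path, so that for any nonzero $\psi,\varphi\in\FBP(\Omega)$ one gets $\braket{\varphi,{\mathrm e}^{-tH_\#(P)}\psi}>0$ by a non-degeneracy argument: the renormalized interaction terms, while singular, still enter through creation operators with the sign dictated by $\lambda<0$, so no cancellations occur and the $n$-particle components of ${\mathrm e}^{-tH_\#(P)}\psi$ are strictly positive wherever $\psi$ is nonzero.

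The main obstacle I expect is exactly this last point: controlling the sign structure of the \emph{renormalized} interaction. In the regularized model the coupling enters linearly through $\ph(v_\Lambda)$ and the sign of $\lambda$ transparently controls positivity; after renormalization the Feynman--Kac weight contains additional terms (a renormalized pair-interaction / counterterm contribution coming from the $E_\Lambda$ subtraction and the limiting procedure of \cite{MatteMoller.2018,HinrichsMatte.2023}), and one must verify that these do not destroy the positivity-improving property. Concretely, I would isolate the part of the weight that acts as a multiplication/convolution kernel on Fock space versus the part that creates bosons, show the former is real (hence sign-neutral for positivity) and the latter has the correct sign for $\lambda<0$, and then run the non-degeneracy argument sector by sector using \cref{eq:posdecomp} together with the bound \cref{eq:Fbd}. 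The semi-relativistic case should be handled in parallel; as remarked in the introduction, the analytic input there is in fact somewhat simpler, so once the non-relativistic sign analysis is in place the $\#=\sr$ case follows by the same scheme with $\Psi_\sr$ in place of $\Psi_\nr$.
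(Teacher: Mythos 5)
Your first step is broadly aligned with the paper: \cref{thm:regular} does indeed prove positivity improving for the regularized semigroup directly from the Feynman--Kac formula \cref{eq:FKUVreg}, by expanding the expectation into explicit integral kernels of the form \cref{eq:intkernel}, whose factors are $F^{\pi,i,\ell}_{k,p,t}\geq 0$ (since $\lambda<0$ makes $-v_\Lambda\geq0$) and $\mathbb E[\mathrm{e}^{-\mathrm{i}R^{\pi,\ell}_{k,p,t}}]>0$ (a characteristic function of L\'evy increments). So the "sign of $\lambda$ controls positivity" intuition is substantiated there.

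The real gap is in your second step. You correctly flag that norm convergence only yields positivity \emph{preserving}, and you then propose to recover strict positivity by arguing that the renormalized weight is "strictly positive for almost every path" and that "the $n$-particle components of $\mathrm{e}^{-tH_\#(P)}\psi$ are strictly positive wherever $\psi$ is nonzero." This is precisely the step that does not go through directly. The renormalized objects $u_t^\#$ and $U_t^{\#,\pm}$ in \cref{eq:FK} are defined only as $L^p$-limits of the regularized quantities (see \cref{eq:uconvexample,eq:Uconvexample}), and the self-energy subtraction $E_\Lambda\to-\infty$ destroys the explicit integral-kernel expansion that made the regularized proof work: one can no longer expand $\mathrm{e}^{u_t^\#}$ into a manifestly non-negative power series with kernels of the form \cref{eq:Fdef}, because $u_t^\# = \lim(u_{\Lambda,t}^\# - tE_\Lambda)$ is not of the form $u_{\alpha,\beta}$ with $\alpha,\beta\geq0$. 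Consequently the "no cancellations occur" claim is exactly what needs proof, and a direct sector-by-sector argument fails. This is why the paper's intermediate result \cref{thm:posren}, which does run a version of your argument, is explicitly noted not to cover self-energy renormalized models.

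The paper's actual route to the Nelson model is \cref{thm:posselfen}, whose essential new ingredient is the Trotter product formula \cref{cor:trotter} (derived from the evolution equation \cref{prop:evolution}). The idea is to localize to a bounded momentum region $\Theta$, and to express $\mathrm{Q}^\Omega_\Theta S^{(\infty,0)}_{s,t}(P)\mathrm{Q}^\Omega_\Theta$ as a Trotter limit involving a tensor product $S^{(k,0)}_{s,t}(P_k)\otimes S^{(\infty,k)}_{s,t}(P-P_k)$: here $S^{(k,0)}$ is a \emph{regularized} semigroup on $\mathcal F(\Omega_k)$, known to be positivity improving by \cref{thm:regular}, while $S^{(\infty,k)}$ on the complement is merely positivity preserving but has a non-vanishing vacuum expectation (assumption \cref{eq:vacexp}). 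Assuming $\braket{S^{(\infty,0)}_{s,t}(P)\phi,\psi}=0$ for two cone vectors, one runs a chain of vanishing-inner-product identities through the Trotter factors (exploiting that positivity-preserving operators annihilate disjoint supports) to conclude $\braket{S^{(k,0)}_{s,t}(P_k)\phi_k,\psi_k}=0$ for the $\Omega_k$-restrictions, hence $\phi_k=0$ or $\psi_k=0$, hence $\phi=0$ or $\psi=0$. This Faris--Miyao-type reduction is the mechanism your proposal is missing; without it, the passage from "positivity preserving in the limit" to "positivity improving in the limit" is not justified.
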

\begin{rem}
	Let us compare this result to previous results in the literature. In the ultraviolet-regularized and non-relativistic case, i.e., for the operator $H_{\#,\Lambda}(P)$ with $\Lambda\in(0,\infty)$ as introduced above, the result was proven in \cite{Frohlich.1973,Frohlich.1974} for the non- and in \cite{Moller.2005} for both the non- and the semi-relativistic case. As far as the renormalized operator goes, the non-relativistic case $\#=\nr$ was recently treated in \cite{Miyao.2018,Miyao.2019,Lampart.2020} by different methods. In the semi-relativistic case, the result is new to the authors knowledge.
\end{rem}
\begin{rem}
	We also note that there is a second notion of positivity in Fock space, induced by the Schr\"odinger or ${\mathcal Q}$-space representation, see for example \cite{HiroshimaLorinczi.2020}. In this setting, already the free Hamiltonian ($v=0$) is positivity improving, but only in the case $P=0$. For treatments of the interacting operator at $P=0$, see for example \cite{Gross.1972,Sloan.1974,Hiroshima.2007}.
\end{rem}
\subsection{Feynman-Kac formulas}
\label{subsec:FK}
Essential to our argument is the functional integral representation of the semigroup ${\mathrm e}^{-tH_\#(P)}$ and its regularized counterpart ${\mathrm e}^{-tH_{\#,\Lambda}(P)}$ given by a Feynman--Kac formula, which we now want to introduce. Their concrete form is the motivation for our study of more general functional integral expressions in \cref{sec:positivity}.

Let $({\mathcal X}, {\mathcal B}, {\mathbb P})$ be a probability space and 
denote by ${\mathbb E}$ the expectation with respect to ${\mathbb P}$. 
We will assume that $X=(X_t)_{t\geq0}$ is a L\'evy process on this probability space. 
Then
there exists a unique characteristic exponent $\Psi$, also called L\'evy symbol of $X$, such that 
 \begin{align}
\label{L} 	{\mathbb E}\Big[{\mathrm e}^{{\mathrm i}k\cdot X_t}\Big] = {\mathrm e}^{-t\Psi(k)}.
 \end{align}
It is known as the L\'evy--Khintchine formula 
that $\Psi$ is of the form 
\begin{align}\label{LL}
\Psi(k)= ib\cdot k-\frac{1}{2}k\cdot A k+\int_{{\mathbb R}^d\setminus\{0\}}
(e^{ik\cdot y}-1-ik\cdot y\one_{\{|y|<1\}})\nu(\rd y),
\end{align}
where $b\in{\mathbb R}^d$, $A$ is a $d\times d$-nonnegative definite symmetric matrix, 
and $\nu$ a L\'evy measure.  
Conversely if a triplet $(b, A, \nu)$ of this type is given, then  
there exists unique L\'evy process $X$ such that 
\cref{L,LL} are satisfied,
see for example \cite[Section~2.4.1]{LorincziHiroshimaBetz.2020} or \cite[Section~1.2]{Applebaum.2009} for details.

For $\#\in\{\nr,\sr\}$, let $X^{\#}=(X^{\#}_t)_{t\geq0}$ be the L\'evy process  on the probability space with L\'evy symbol  $\Psi_\#$ as defined above, i.e.,
 \begin{align*}
 	{\mathbb E}\Big[{\mathrm e}^{{\mathrm i}k\cdot X^\nr_t}\Big] = {\mathrm e}^{-\frac12t|k|^2}
 	\quad\mbox{and}\quad
 	{\mathbb E}\Big[{\mathrm e}^{{\mathrm i}k\cdot X^\sr_t}\Big] = {\mathrm e}^{-t(\sqrt{|k|^2+M^2}-M)} .
 \end{align*}
Explicitly, in the case $d=3$, $\#=\nr$ the process $X^\#$ is a three-dimensional Brownian motion, i.e., the L\'evy process with triplet $(0,\one_{\mathbb C^3},0)$.
Further, in the case $d=2$, $\#=\sr$ the process $X^\#$  is a two-dimensional inverse Gaussian process \cite[Example~1.3.21]{Applebaum.2009} or \cite[Examples~2.17,~2,18]{LorincziHiroshimaBetz.2020} with the particle mass $M$, i.e., the L\'evy process with triple $(0,0,\nu_M)$ where the density of the L\'evy measure $\nu_M$ can for example be found in \cite[Section~1.2.6]{Applebaum.2009} or \cite[(2.8)]{HinrichsMatte.2022}.

The Feynman--Kac formulas that we apply now take the form
	\begin{align}\label{eq:FK}
		{\mathrm e}^{-tH_\#(P)} = {\mathbb E}\Big[{\mathrm e}^{u_t^\#}F^\omega_{t/2}(U_t^{\#,-})F^\omega_{t/2}(U_t^{\#,+})^*{\mathrm e}^{{\mathrm i}(P-{\rm d}\Gamma(\hat p))\cdot X^\#_t}\Big],
	\end{align}
	and are proven in \cite[Theorem~7.6]{MatteMoller.2018} for $\#=\nr$ and \cite[Theorem~7.4]{HinrichsMatte.2023} for $\#=\sr$.
	Therein, 
	$u^\#=(u_t^\#)_{t\geq0}$ is a ${\mathbb C}$-valued stochastic process satisfying
	\begin{align}\label{eq:uconvexample} {\mathbb E}\Big[ \sup_{s\in[0,t]}
	\left|{{\mathrm e}^{u^\#_{\Lambda,s}-s E_\Lambda}-{\mathrm e}^{u^\#_s}}\right|^p\Big]\xrightarrow{\Lambda\to\infty}0,\quad t>0,p\ge1, \end{align}
	where
	\begin{align}\label{eq:uUV} u^\#_{\Lambda,T} = \int_0^T\int_0^T \braket{{\mathrm e}^{-{\mathrm i}\hat p\cdot X^\#_t}v_\Lambda|{\mathrm e}^{-{\mathrm i}\hat p\cdot X^\#_s}{\mathrm e}^{-|t-s|\omega}v_\Lambda}{\rm d}s{\rm d}t, \end{align}
	see \cite[Theorem~4.9]{MatteMoller.2018} and \cite[Theorem~6.8]{HinrichsMatte.2023},
	and $U^{\#,\pm}=(U^{\#,\pm}_t)_{t\geq0}$ are the ${\mathscr D}(\omega^{-1/2})$-valued stochastic processes
	 satisfying
	\begin{align}\label{eq:Uconvexample} {\mathbb E}\Big[\sup_{s\in[0,t]}\|U_{\Lambda,s}^{\#,\pm}-U_s^{\#,\pm}\|^2_\omega\Big] \xrightarrow{\Lambda\to\infty} 0 , \end{align}
	where
	\begin{align}\label{eq:UUV}
		U_{\Lambda,T}^{\#,-} = -\int_0^T{\mathrm e}^{-t\omega-{\mathrm i}\hat p \cdot X^\#_t}v_\Lambda{\rm d}t
		\qquad\text{and}\qquad
		U_{\Lambda,T}^{\#,+} = -\int_0^T {\mathrm e}^{-(T-t)\omega -{\mathrm i}\hat p\cdot X^\#_t}v_\Lambda{\rm d}t,
	 \end{align}
	 see \cite[Lemma~3.6]{MatteMoller.2018} and \cite[Lemma~B.1]{HinrichsMatte.2022}.
	The existence of this limit and explicit expressions for $u^\#$ and $U^{\#,\pm}$ are also presented in the referenced articles.

	We also stress that, in the ultraviolet regular case $\Lambda<\infty$, we have the Feynman--Kac formula
	\begin{align}\label{eq:FKUVreg}
		{\mathrm e}^{-tH_{\#,\Lambda}(P)} = {\mathbb E}\Big[{\mathrm e}^{u^\#_{\Lambda,t}}F^\omega_{t/2}(U_{\Lambda,t}^{\#,-})F^\omega_{t/2}(U_{\Lambda,t}^{\#,+})^*{\mathrm e}^{{\mathrm i}(P-{\rm d}\Gamma(\hat p))\cdot X^\#_t}\Big].
	\end{align}
	This is easily recognized to be a special case of \cref{prop:evolution} below, for which we sketch a proof in \cref{appendix}.
	In fact, the convergence statements \cref{eq:UVren} can then be deduced from the convergence in \cref{eq:uconvexample,eq:Uconvexample} and the Feynman--Kac formulas \cref{eq:FK,eq:FKUVreg}, the main argument which was used in the articles \cite{MatteMoller.2018,HinrichsMatte.2022}.

\section{Positivity of Fock Space Functional Integrals}
\label{sec:positivity}
In this \lcnamecref{sec:positivity}, we prove positivity of functional integral expressions of the form \cref{eq:FK}.
We present three theorems:
\begin{itemize}
	\item[(1)] \cref{thm:regular} presented in \cref{subsec:Uvreg} treats Nelson-type models in presence of an ultraviolet cutoff. Whereas ergodicitiy in this case is well-known, the novelty of our approach is to demonstrate that it can be inferred solely using the functional integration representation of the model. Further, \cref{thm:regular} is a major ingredient to our proofs of the subsequent results.
	\item[(2)] \cref{thm:posren} presented in \cref{subsec:formren} extends \cref{thm:regular} to the case of functional integrals only given by a limiting procedure. Whereas this result does not suffice to cover the Nelson model, we can apply it to reprove ergodicity of the Fr\"ohlich Hamiltonian in \cref{fr}.
	\item[(3)] \cref{thm:posselfen} presented in \cref{subsec:Nelson} is the main result leading to our proof of \cref{thm:Nelsonpos}, which is given in the end of that section. Compared to the previous results, we require a 
	Trotter-type product formula for the functional integrals, which we can only prove using an evolution equation presented in \cref{prop:evolution}. It would be an interesting question to derive such a Trotter product formula (explicitly stated in \cref{cor:trotter}) directly from the path integral representation.
\end{itemize}
Let us introduce some assumptions and notation used throughout this \lcnamecref{sec:positivity}.

As in \cref{subsec:FK}, we fix a probability space $({\mathcal X}, {\mathcal B}, {\mathbb P})$ and 
denote by ${\mathbb E}$ the expectation with respect to ${\mathbb P}$. 
Further, on this probability space, we assume that $X=(X_t)_{t\ge0}$ is a L\'evy process
with c\`adl\`ag paths and a real-valued characteristic function, i.e., that the characteristic exponent $\Psi:{\mathbb R}^d\to {\mathbb C}$ given by
\begin{align}\label{eq:Levysymbol}
	{\mathbb E}\Big[{\mathrm e}^{{\mathrm i}k\cdot X_t}\Big] = {\mathrm e}^{-t\Psi(k)}
\end{align}
is real-valued and hence also non-negative by the L\'evy--Khintchine formula \cref{LL}, since it directly implies $\operatorname{Re} \Psi(k)\ge 0$, $k\in\mathbb R^d$.

A key ingredient to our proof will be that the independent and stationary increments of a L\'evy process imply
\begin{align}\label{eq:charfunct2}
	{\mathbb E}\Big[\prod_{j=1}^n{\mathrm e}^{{\mathrm i}k_j\cdot X_{t_j}}\Big] = \prod_{j=1}^n {\mathrm e}^{-(t_j-t_{j-1})\Psi(\sum_{i=j}^n k_i)}, \quad 0=t_0<t_1<t_2<\cdots<t_n.
\end{align}
We also assume $\omega:{\mathbb R}^d\to[0,\infty)$ to be an invertible multiplication operator on $L^2(\Omega)$ for any measurable $\Omega\subset {\mathbb R}^d$, i.e., $\omega$ is measurable and $\omega>0$ almost everywhere.
We denote the domain ${\mathscr D}(\omega^{-1/2})$ in $L^2(\Omega)$ by ${\mathfrak D}_\omega(\Omega)$ and equip it with the graph norm of $\omega^{-1/2}$, i.e., $\|f\|_\omega^2  \coloneqq \|f\|^2 + \|\omega^{-1/2}f\|^2$ for $f\in{\mathfrak D}_\omega(\Omega)$.

Given any measurable $\Omega\subset {\mathbb R}^d$, we abbreviate the Bochner spaces
\begin{align}
\begin{array} {l}\displaystyle
{\mathfrak l}_{2}(\Omega)\coloneqq L^2_{\mathsf{loc}}([0,\infty)^2;L^2(\Omega)),\\
\ \\
{\mathfrak l}_{1,\omega}(\Omega) \coloneqq L^1_{\mathsf{loc}}([0,\infty);{\mathfrak D}_\omega(\Omega)).
\end{array}\end{align}
As usual, we will write $f(t)$, $t\ge 0$ for a.e. well-defined representatives of equivalence classes in these spaces.
Then given $S,T>0$ and $\alpha,\beta\in {\mathfrak l}_2(\Omega)$, $v\in {\mathfrak l}_{1,\omega}(\Omega)$, we can define
the associated random variables
\begin{align}\label{def:pathintegral} 
\begin{array} {l}\displaystyle
u_{\alpha,\beta}(S,T) \coloneqq \int_{S}^{T}\int_{S}^T \braket{{\mathrm e}^{-i\hat p\cdot X_t}\alpha(t,s)|{\mathrm e}^{-i\hat p\cdot X_s}\beta(t,s)}{\rm d}s{\rm d}t,\\
 \\
\displaystyle U_{v}(S,T) \coloneqq \int_{S}^{T} {\mathrm e}^{-{\mathrm i}\hat p\cdot X_t}v(t){\rm d}t  
 \end{array}
 \end{align}
as Lebesgue integral and  ${\mathfrak D}_\omega(\Omega)$-valued Bochner--Lebesgue integrals, respectively. 

\subsection{The Ultraviolet Regular Case}\label{subsec:Uvreg}

Our first result combined with the Feynman--Kac formula \cref{eq:FKUVreg} proves that the semigroup generated by the regularized Nelson model at negative coupling $\lambda<0$ is positivity improving for any total momentum $P$.
\begin{thm}\label{thm:regular}
	Let $N\in{\mathbb N}_0$, let $\Omega\subset {\mathbb R}^d$ measurable, let $v^\pm\in{\mathfrak l}_{1,\omega}(\Omega)$ with $v^\pm(t)\in \bar L^2_+(\Omega)$, $t\ge0$ and let $\alpha^\pm,\beta^\pm\in {\mathfrak l}_2(\Omega)$ satisfy $\alpha^\pm(s,t),\beta^\pm(s,t)\in \bar L^2_+(\Omega)$, $s,t\ge0$. Then the ${\mathcal B}({\mathcal F}(\Omega))$-valued Bochner--Lebesgue integral
	\begin{align}\label{eq:pospres}
		S(P)\coloneqq{\mathbb E}\Big[{\mathrm e}^{u_{\alpha^-,\alpha^+}(\sigma_1,\tau_1)}
		u_{\beta^-,\beta^+}(\sigma_2,\tau_2)^NF^\omega_\gamma(U_{v^-}(\sigma_3,\tau_3))F^\omega_\rho(U_{v^+}(\sigma_4,\tau_4))^*{\mathrm e}^{{\mathrm i}(P-{\rm d}\Gamma(\hat p))\cdot (X_{T_2}-X_{T_1})} \Big]
	\end{align}
	is well-defined and positivity preserving for all $P\in{\mathbb R}^d$ and $\gamma,\rho,\sigma_1,\tau_1,\ldots,\sigma_4,\tau_4,T_1,T_2>0$. 
	
	Further, if $\alpha^\pm(s_1,t_1),v^-(t_3),v^+(t_4) \in L^2_+(\Omega)$, $s_1,t_1\in[\sigma_1,\tau_1],t_3\in[\sigma_3,\tau_3],t_4\in[\sigma_4,\tau_4]$ and either $N=0$ or $\beta^\pm(t,s)\in L^2_+(\Omega)$, $t,s\in[\sigma_2,\tau_2]$, then $S(P)$ is positivity improving for all $P\in{\mathbb R}^d$.
\end{thm}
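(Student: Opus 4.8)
My plan is to expand every exponential of creation/annihilation operators and every scalar exponential occurring in \cref{eq:pospres} into its defining series (cf.\ \cref{def:Ft}) and into monomials, to insert the integral representations \cref{def:pathintegral} of $u_{\bullet,\bullet}$ and $U_\bullet$, and then to use non-negativity of the characteristic exponent $\Psi$ to recognize $S(P)$, after the expectation is taken, as a norm-convergent sum of manifestly positivity preserving operators. The positivity improving refinement then follows by isolating a single strictly positive ``vacuum-routing'' contribution.

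\emph{Well-definedness.} Since $\mathrm e^{-\mathrm i\hat p\cdot X_t}$ is unitary on $L^2$ and commutes with multiplication by $\omega^{-1/2}$, one gets deterministic bounds (uniform over ${\mathcal X}$): $\|U_{v^\pm}(\sigma,\tau)\|_\omega\le\sqrt2\int_\sigma^\tau\|v^\pm(t)\|_\omega\,\mathrm dt<\infty$ because $v^\pm\in{\mathfrak l}_{1,\omega}(\Omega)$, and, by Cauchy--Schwarz, $|u_{\alpha^-,\alpha^+}(\sigma,\tau)|,\,|u_{\beta^-,\beta^+}(\sigma,\tau)|<\infty$ because $\alpha^\pm,\beta^\pm\in{\mathfrak l}_2(\Omega)$. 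Together with the norm bound \cref{eq:Fbd} for $F_\bullet^\omega$ and $\|\mathrm e^{\mathrm i(P-{\rm d}\Gamma(\hat p))\cdot(X_{T_2}-X_{T_1})}\|=1$, the integrand of \cref{eq:pospres} is a.s.\ bounded in ${\mathcal B}({\mathcal F}(\Omega))$ by a deterministic constant; combined with strong measurability inherited from $X$, this yields that $S(P)$ is a well-defined Bochner--Lebesgue integral.

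\emph{Positivity preserving.} Fix $\phi,\psi\in\FBP(\Omega)$; by self-duality of $\FBP(\Omega)$ it suffices to show $\braket{\phi,S(P)\psi}\ge0$. I expand $\mathrm e^{u_{\alpha^-,\alpha^+}(\sigma_1,\tau_1)}$ into its power series (note $u_{\alpha^-,\alpha^+}$ is in general complex, so one must argue through the series and not any pointwise positivity), and, via \cref{def:Ft}, $F^\omega_\gamma(U_{v^-}(\sigma_3,\tau_3))=\sum_{k\ge0}\tfrac1{k!}\ad(U_{v^-}(\sigma_3,\tau_3))^k\mathrm e^{-\gamma{\rm d}\Gamma(\omega)}$ and $F^\omega_\rho(U_{v^+}(\sigma_4,\tau_4))^*=\sum_{k\ge0}\tfrac1{k!}\mathrm e^{-\rho{\rm d}\Gamma(\omega)}a(U_{v^+}(\sigma_4,\tau_4))^k$, insert the integral representations \cref{def:pathintegral}, and compute $\braket{\phi,S(P)\psi}$ using \cref{def:cre,def:ann} for the matrix elements of monomials in $\ad$ and $a$. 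The bounds of the previous step (with dominated convergence) justify interchanging the absolutely convergent sums, the Bochner--Lebesgue integrals, the Lebesgue integrals in the momentum and time variables, and $\mathbb E$. One thereby writes $\braket{\phi,S(P)\psi}$ as a sum --- over the powers in the three series, over the Fock levels involved, and over permutations --- of integrals over finitely many momenta $k_\bullet\in\Omega$ and finitely many time variables of a product of: (i) non-negative factors only, namely Fock components of $\phi$ and $\psi$ (which lie in $\bar L^2_+$), the values $\alpha^\pm(\cdot)(k_\bullet),\beta^\pm(\cdot)(k_\bullet),v^\pm(\cdot)(k_\bullet)\ge0$, the positive multipliers $\mathrm e^{-\gamma\dGn n(\omega)}$ and $\mathrm e^{-\rho\dGn n(\omega)}$, and non-negative combinatorial constants; and (ii) a single factor $\mathbb E\bigl[\prod_r\mathrm e^{\mathrm i\ell_r\cdot X_{\theta_r}}\bigr]$, where the vectors $\ell_r\in{\mathbb R}^d$ are real linear combinations of the $k_\bullet$ and of $P$ and the $\theta_r$ are among the time variables and $T_1,T_2$; here one uses crucially that $\hat p$ acts by multiplication and that $\alpha^\pm,\beta^\pm,v^\pm$ are real-valued, so the complex conjugations produced by the inner products and by the annihilation operators leave the $\ell_r$ real. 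For a.e.\ configuration of the time variables the occurring times are pairwise distinct and distinct from $T_1,T_2$, so \cref{eq:charfunct2} applies and gives $\mathbb E\bigl[\prod_r\mathrm e^{\mathrm i\ell_r\cdot X_{\theta_r}}\bigr]=\prod\mathrm e^{-\Delta\theta\,\Psi(\,\cdot\,)}\in(0,1]$ since $\Psi\ge0$. Hence every term is $\ge0$, and $\braket{\phi,S(P)\psi}\ge0$.

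\emph{Positivity improving.} Now assume the stated strict-positivity hypotheses and let $\phi,\psi\in\FBP(\Omega)\setminus\{0\}$; choose $n_0,m_0\in{\mathbb N}_0$ with ${\rm P}\!_{n_0}\phi\ne0$ and ${\rm P}\!_{m_0}\psi\ne0$, so ${\rm P}\!_{n_0}\phi$ and ${\rm P}\!_{m_0}\psi$ are strictly positive on sets of positive measure. As $\braket{\phi,S(P)\psi}$ is a sum of non-negative terms by the previous step, it suffices to exhibit one strictly positive term. I take the term in which $\mathrm e^{u_{\alpha^-,\alpha^+}(\sigma_1,\tau_1)}$ contributes its constant term $1$, $F^\omega_\gamma(U_{v^-}(\sigma_3,\tau_3))$ contributes $\tfrac1{n_0!}\ad(U_{v^-}(\sigma_3,\tau_3))^{n_0}\mathrm e^{-\gamma{\rm d}\Gamma(\omega)}$, and $F^\omega_\rho(U_{v^+}(\sigma_4,\tau_4))^*$ contributes $\tfrac1{m_0!}\mathrm e^{-\rho{\rm d}\Gamma(\omega)}a(U_{v^+}(\sigma_4,\tau_4))^{m_0}$, while $u_{\beta^-,\beta^+}(\sigma_2,\tau_2)^N$ is kept in full; reading \cref{eq:pospres} from right to left, this routes ${\rm P}\!_{m_0}\psi\mapsto{\mathcal F}^{(0)}(\Omega)\mapsto{\mathcal F}^{(n_0)}(\Omega)$, so it pairs only with ${\rm P}\!_{n_0}\phi$. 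Evaluating $a(\cdot)^{m_0}$ on ${\mathcal F}^{(m_0)}(\Omega)$ and $\ad(\cdot)^{n_0}$ on the vacuum via \cref{def:ann,def:cre}, this term equals a strictly positive constant times an integral over $\Omega^{n_0}\times\Omega^{m_0}$ and the relevant time domains of the product of ${\rm P}\!_{n_0}\phi$, ${\rm P}\!_{m_0}\psi$, $n_0$ values of $v^-$, $m_0$ values of $v^+$, $N$ values of $\beta^-\cdot\beta^+$, and a factor $\mathbb E[\mathrm e^{\mathrm i\Theta}]$ with $\Theta$ again a real linear combination of evaluations of $X$ at positive times, so $\mathbb E[\mathrm e^{\mathrm i\Theta}]>0$ for a.e.\ time configuration by \cref{eq:charfunct2} and $\Psi\ge0$. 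By the hypotheses $v^\pm(\cdot)\in L^2_+(\Omega)$ (and $\beta^\pm(\cdot)\in L^2_+(\Omega)$ when $N\ge1$), and since ${\rm P}\!_{n_0}\phi>0$ resp.\ ${\rm P}\!_{m_0}\psi>0$ on sets of positive measure, the integrand is strictly positive on a set of positive measure, so this term is $>0$. Hence $\braket{\phi,S(P)\psi}>0$ for all $\phi\in\FBP(\Omega)\setminus\{0\}$, which, since $\FP(\Omega)$ is precisely the set of elements of $\FBP(\Omega)$ pairing strictly positively with every non-zero element of $\FBP(\Omega)$, gives $S(P)\psi\in\FP(\Omega)$.

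\emph{Main obstacle.} The conceptual heart is the observation that non-negativity of $\Psi$ is precisely what turns the oscillatory operator-valued path integral into a positivity preserving object once everything is expanded into series and monomials in $X$; the sign bookkeeping is then forced. The genuine work lies in the first two steps --- establishing the deterministic a priori bounds, and then rigorously justifying the interchange of infinite summation, Bochner and Lebesgue integration, and expectation needed to reach the ``sum of non-negative terms'' representation, together with the almost-sure distinctness of the occurring times so that \cref{eq:charfunct2} may be applied verbatim.
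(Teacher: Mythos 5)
Your proof is correct and follows essentially the same route as the paper's: expand the exponentials $\mathrm e^{u_\alpha}$, $F^\omega_\gamma$, $F^\omega_\rho$ into their defining series, insert the integral representations of $u$ and $U$, apply Fubini, and observe that the only source of possible sign change is the factor ${\mathbb E}\big[\prod_r\mathrm e^{\mathrm i\ell_r\cdot X_{\theta_r}}\big]$, which is strictly positive by \cref{eq:charfunct2} and the non-negativity of the real-valued Lévy symbol $\Psi$. The only cosmetic difference is bookkeeping: you establish positivity preservation by showing $\braket{\phi,S(P)\psi}\ge0$ for $\phi,\psi\in\FBP(\Omega)$ and use self-duality, while the paper directly exhibits a non-negative integral kernel for ${\rm P}_nS(P){\rm P}_m\psi$; and for positivity improving you isolate the vacuum-routed term with $\ell=0,i=0,n=n_0,m=m_0$ (where ${\rm P}_{n_0}\phi\ne0$, ${\rm P}_{m_0}\psi\ne0$), which coincides with the paper's choice ``$i=0$'' applied to a non-vanishing component of $\psi$.
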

\begin{proof}
	Throughout this proof, we  abbreviate $u_\alpha \coloneqq u_{\alpha^-,\alpha^+}(\sigma_1,\tau_1)$, $u_\beta \coloneqq u_{\beta^-,\beta^+}(\sigma_2,\tau_2)$, $U^- \coloneqq U_{v^-}(\sigma_3,\tau_3)$ and $U^+ \coloneqq U_{v^+}(\sigma_4,\tau_4)$.
	The proof is divided into three steps.
	
	\smallskip\noindent{\em Step 1.} We first argue for the well-definedness of $S(P)$.
	
	It will follow from Pettis' theorem combined with the standard criterion for Bochner integrability.
	The measurability of the argument of the expectation in \cref{eq:pospres} (i.e., weak measurability and 
	almost surely separable valuedness) hereby follows from the separability of $L^2(\Omega)$,
	continuity of the maps $(0,\infty)\times {\mathfrak D}_\omega(\Omega)\ni(s,h)\mapsto F^\omega_s(h)\in{\mathcal B}({\mathcal F}(\Omega))$ and $\Omega\ni x\mapsto {\mathrm e}^{{\mathrm i}(P-{\rm d}\Gamma(\hat p))\cdot x}$ as well as our definition \cref{def:pathintegral}.
	Further, by \cref{eq:Fbd} and our integrability assumptions, we easily see that the 
	${\mathbb P}$-integrand in \cref{eq:pospres} is uniformly bounded in the paths of $X$ and thus the claim of this step.
	
	\smallskip\noindent{\em Step 2.}
	Let us now prove that $S(P)$ is positivity preserving, by proving 
	that every summand of
	\begin{align*}
		{\rm P}\!_nS(P)\psi = \sum_{m=0}^\infty {\rm P}\!_nS(P){\rm P}\!_m\psi
	\end{align*}
	 is an element of $\bar L^2_+(\Omega^n)$. We will now assume $m,n\in{\mathbb N}_0$ to be fixed.
	
	To this end, we first note that by the definitions \cref{def:ann,def:cre,def:Ft}, for any $r,s>0$ and $f,g\in{\mathfrak D}_\omega(\Omega)$
	\begin{align*}
		&{\rm P}\!_nF^\omega_r(f)F^\omega_s(g)^*{\mathrm e}^{{\mathrm i}(P-{\rm d}\Gamma(\hat p))\cdot (X_{T_2}-X_{T_1})}{\rm P}\!_m\psi  \\& \qquad= \sum_{i=0}^{m\wedge n}\frac{1}{(n-i)!(m-i)!}\ad(f)^{n-i}{\mathrm e}^{-(r+s)\dGn i(\omega)}a(g)^{m-i}{\mathrm e}^{{\mathrm i}(P-\dGn m(\hat p))\cdot (X_{T_2}-X_{T_1})}{\rm P}\!_m\psi,
	\end{align*}
	where $a\wedge b \coloneqq \min(a,b)$ as usual.
	Further employing the series expansion of the exponential ${\mathrm e}^{u_\alpha}$ 
	and Fubini's theorem, we find
	\begin{align*}
		&{\rm P}\!_nS(P){\rm P}\!_m\psi  \\&=  \sum_{i=0}^{m\wedge n}\frac{1}{(n-i)!(m-i)!}\sum_{\ell=0}^\infty\frac{1}{\ell!}{\mathbb E}\Big[u_\alpha^\ell u_\beta^N\ad(U^-)^{n-i}{\mathrm e}^{-{(r+s)}\dGn i(\omega)}a(U^+)^{m-i}{\mathrm e}^{{\mathrm i}(P-\dGn m(\hat p))\cdot (X_{T_2}-X_{T_1})}{\rm P}\!_m\psi\Big].
	\end{align*}
	Inserting the definition of the creation and annihilation operators \cref{def:ann,def:cre} and once more applying Fubini's theorem, we find for $k\in\Omega^n$
	\begin{align}\label{eq:intkernel}
		\begin{aligned}
			&{\rm P}\!_nS(P){\rm P}\!_m\psi(k) 
			\\&\quad
			=\sum_{\pi\in{\mathcal S}_n} \sum_{i=0}^{m\wedge n} C_{m,n,i} \sum_{\ell=0}^\infty \frac1{\ell!} \int_{{\mathbb S}}\int_{\TT} F^{\pi,i,\ell}_{k,p,t}{\mathbb E}\left[
			{\mathrm e}^{-{\mathrm i}R^{\pi,\ell}_{k,p,t}}\right] 
			({\rm P}\!_m\psi)(k_{\pi(1)},\ldots,k_{\pi(i)},p_1,\ldots,p_{m-i}){\rm d}t{\rm d}p,
		\end{aligned}
	\end{align}
	where the $p$- and $t$-integrals run over
	\begin{align*}
		{\mathbb S} = {\Omega^{\ell+N+m-i}} \qquad \mbox{and}\qquad \TT= [\sigma_1,\tau_1]^{2\ell} \times [\sigma_2,\tau_2]^{2N} \times [\sigma_3,\tau_3]^{n-i}\times [\sigma_4,\tau_4]^{m-i},
	\end{align*}
	respectively,
	the constants are explicitly given by
	\begin{align}
		C_{m,n,i} \coloneqq \begin{cases}\frac{\sqrt{n(n-1)\cdots(i+1)}}{(n-i)!(m-i)!n!}, & \ i<n,\\\frac1{(m-i)!}, & \ i=n,	\end{cases}
	\end{align}
	 and we have defined the integral kernels
	\begin{align}\label{eq:Fdef}
		&\begin{aligned}
			F^{\pi,i,\ell}_{k,p,t} \coloneqq &\prod_{j=1}^{\ell} \alpha^+(t_{2j-1},t_{2j},p_{j})\alpha^-(t_{2j-1},t_{2j},p_{j})\\&\times
				\prod_{j=\ell+1}^{\ell+N} \beta^+(t_{2j-1},t_{2j},p_{m+n+j})\beta^-(t_{2j-1},t_{2j},p_{m+n+j})\\
				&\times \prod_{h=1}^{n-i}v^+(t_{2(\ell+N)+h},k_{\pi(i+h)})\prod_{h=1}^{m-i}v^-(t_{2(\ell+N)+n-i+h},p_{\ell+N+h}),
		\end{aligned}
	\end{align}
	as well as the random variable
	\begin{align}
		\begin{aligned}
			R^{\pi,i,\ell}_{k,p,t} \coloneqq  &\sum_{j=1}^{\ell+N}p_j\cdot(X_{t_{2j}}-X_{t_{2j-1}}) + \sum_{h=1}^{m-i} p_{\ell+N+h} \cdot X_{t_2(\ell+N)+n-i+h}
			\\& + \sum_{h=1}^{n-i}k_{\pi(i+h)}\cdot X_{t_2(\ell+N)+h}  + \left(\sum_{j=1}^{m}k_j - P\right)\cdot (X_{T_2}-X_{T_1}).
		\end{aligned}
	\end{align}
	By the assumptions on $\alpha$, $\beta$ and $v$, the kernel $F^{\pi,i,\ell}_{k,p,t}\ge 0$ for almost all 
	$(k,p,t)\in \Omega^n\times{\mathbb S}\times{\mathbb T}$. 
	Further, ${\mathbb E}[{\mathrm e}^{-{\mathrm i}R^{\pi,i,\ell}_{k,p,t}}]>0$ for all $(k,p,t)\in \Omega^n\times{\mathbb S}\times{\mathbb T}$ in view of \cref{eq:charfunct2}, which proves the statement.
	
	\smallskip\noindent{\em Step 3.}
	It remains to prove that $S(P)$ is positivity improving, if we use the stronger assumptions from the theorem.
	These immediately imply that the integral kernel given in \cref{eq:Fdef} satisfies
	$F^{\pi,i,\ell}_{k,p,t}>0$ for all $m,n,\pi,i,\ell,t$ and almost all $(k,p,t)\in \Omega^n\times{\mathbb S}\times{\mathbb T}$.
	If for any fixed $n\in{\mathbb N}_0$ there exists $m\in{\mathbb N}$ such that one of the summands ${\rm P}\!_nS(P){\rm P}\!_m\psi$ is an element of $L^2_+(\Omega^m)$, the proof is complete.
	This is the case, if the right hand side of \cref{eq:intkernel} is strictly positive for (almost) all $k\in\Omega^n$, which is now easily observed to hold if we choose $m=n$ and $i=0$.
\end{proof}
\subsection{Mild Ultraviolet Divergence Cases}\label{subsec:formren}
In this section, we prove an intermediate result, which does not cover self-energy renormalizations of the type \cref{eq:UVren} discussed in \cref{subsec:Nelson} below, 
but generalizes and simplifies the proof of ergodicity, e.g., for the Fr\"ohlich polaron model. 

The three-dimensional translation-invariant Fr\"ohlich polaron \cite{Frohlich.1954} at total momentum 
$P\in{\mathbb R}^3$ with ultraviolet cutoff $\Lambda>0$ is given by the Hamiltonian 
\begin{align}\label{eq:Fr}
	H_{{\sf F},\Lambda}(P) \coloneqq H_{\Psi,\omega,v_\Lambda}(P)
	\end{align}
with the choices 
\begin{align}\label{eq:Frohlich}\Psi=\Psi_\nr,\quad \omega = 1, \quad v_\Lambda(k)=\lambda\chr_{|k|<\Lambda}|k|^{-1}.\end{align}
It is well-known that the semigroup 
${\mathrm e}^{-tH_{{\sf F},\Lambda}(P)}$ converges to ${\mathrm e}^{-tH_{{\sf F},\infty}(P)}$  in norm as $\Lambda\to\infty$, i.e.,
\begin{align}\label{eq:convfr}
\lim_{\Lambda\to\infty}{\mathrm e}^{-tH_{{\sf F},\Lambda}(P)}=
{\mathrm e}^{-tH_{{\sf F},\infty}(P)},
\end{align}
where $H_{\sf F,\infty}(P)$ is still given by the expression \cref{def:Nelson}, but now in the sense of quadratic forms,
see for example \cite{LiebYamazaki.1958}.
	In \cite{HinrichsMatte.2024}, Matte and the first author discuss how the limiting operator can explicitly be represented by a Feynman--Kac formula of the type \cref{eq:FK}. 
	In fact, in that article, the Fr\"ohlich polaron is discussed in presence of an external potential. However, the translation-invariant case can directly be inferred from \cref{prop:evolution} below and the results therein.

\begin{thm}\label{thm:posren}
	For $k=1,\ldots,4$, let $t_k\ge s_k\ge 0$, assume $\Omega_1\subset\Omega_2\subset \cdots\subset{\mathbb R}^d$ is an increasing sequence of measurable subsets  and set $\Omega= \bigcup_{n\in{\mathbb N}}\Omega_n$. Further, let $(v_n^\pm)_{n\in{\mathbb N}}\subset {\mathfrak l}_{1,\omega}(\Omega)$ and $(\alpha_n^\pm)_{n\in{\mathbb N}}\subset {\mathfrak l}_2(\Omega)$ and, for all $m\ge n\in{\mathbb N}$, $t,s\ge 0$, assume that 
$\alpha^\pm_n(t,s)\lceil_{\Omega_n} \in L^2_+(\Omega_n)$ and 
	\begin{align}\label{eq:simplemon}
		\begin{aligned}
			\lim_{\substack{k\to\infty\\L^2(\Omega_n)}} v^\pm_k(t)\lceil_{\Omega_n}\in L^2_+(\Omega_n),\qquad 
\qquad 			\alpha^\pm_m(t,s)\lceil_{\Omega_n}\ge \alpha^\pm_n(t,s)|_{\Omega_n}. 
		\end{aligned}
	\end{align} 
	Employing the definition \cref{def:pathintegral}, we finally assume there exists a ${\mathbb C}$-valued random variable $u_\infty$,  ${\mathfrak D}_\omega(\Omega)$-valued random variables $U^\pm_{\infty}$ and $p,q,r,r'\ge 1$ with $\frac1p+\frac2q=1$ and $\frac1{r}+\frac1{r'}=1$ such that
	\begin{align}\label{eq:convassum}
		\begin{aligned}
			& {\mathbb E}\left[|{\mathrm e}^{pu_\infty}|\right]<\infty,\\
			& \sup_{n\in{\mathbb N}}{\mathbb E}\left[{\mathrm e}^{4qr(\|U_{v_n^\pm}(s_j,t_j)\|_\omega\vee\|U^\pm_\infty\|_\omega)^2}\right]<\infty,\quad j=2,3,\\
			& \lim_{n\to\infty}{\mathbb E}\left[|e^{{u_{\alpha_n^+,\alpha_n^-}(s_1,t_1)}}-e^{u_\infty}|^p\right]=0,\\
			& \lim_{n\to\infty}{\mathbb E}\left[\|U_{v_n^\pm}(s_j,t_j)-U^\pm_\infty\|_\omega^{qr'}\right]=0,\quad j=2,3.
		\end{aligned}
	\end{align}
	Then the ${\mathcal B}({\mathcal F}(\Omega))$-valued Bochner--Lebesgue integral
	\begin{align}\label{eq:pospresfrohlich}
		S_\infty(P)\coloneqq{\mathbb E}\Big[ {\mathrm e}^{u_\infty}F^\omega_\gamma(U^-_{\infty})F^\omega_\sigma(U^+_{\infty})^*{\mathrm e}^{{\mathrm i}(P-{\rm d}\Gamma(\hat p))\cdot (X_{t_4}-X_{s_4})} \Big]
	\end{align} 
	is well-defined and positivity improving for any $P\in{\mathbb R}^d$ and $\gamma,\sigma>0$.
\end{thm}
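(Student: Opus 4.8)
The plan is to realize $S_\infty(P)$ as an operator-norm limit of operators covered by \cref{thm:regular}, transfer positivity preservation through the limit, and upgrade to positivity improvement using the monotonicity \cref{eq:simplemon} and the block decomposition \cref{eq:posdecomp} along the exhaustion $\Omega_1\subset\Omega_2\subset\cdots$. First I set $v_\infty^\pm(t)\coloneqq\lim_k v_k^\pm(t)\lceil_{\Omega_n}$ (these limits are consistent in $n$ by uniqueness of $L^2$-limits, hence define a function on $\Omega$ that is a.e.\ strictly positive by \cref{eq:simplemon}) and $\alpha_\infty^\pm(t,s)\coloneqq\lim_n\alpha_n^\pm(t,s)$, the pointwise monotone limit, which is a.e.\ strictly positive and, since \cref{eq:convassum} forces $u_{\alpha_n^+,\alpha_n^-}(s_1,t_1)\to u_{\alpha_\infty^+,\alpha_\infty^-}(s_1,t_1)$, a.e.\ finite with $u_\infty=u_{\alpha_\infty^+,\alpha_\infty^-}(s_1,t_1)$. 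For $n\in{\mathbb N}$, let $B_n$ be the operator on ${\mathcal F}(\Omega_n)$ given by \cref{thm:regular} with $N=0$, form factors $\alpha_n^\pm\lceil_{\Omega_n},v_\infty^\pm\lceil_{\Omega_n}\in L^2_+(\Omega_n)$ and the matching time and exponent parameters, and put $S_n(P)\coloneqq({\rm Q}^\Omega_{\Omega_n})^*B_n{\rm Q}^\Omega_{\Omega_n}$ on ${\mathcal F}(\Omega)$.

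Since restriction through ${\rm Q}^\Omega_{\Omega_n}$, composition with the positivity preserving $B_n$ (\cref{thm:regular}), and extension by zero through $({\rm Q}^\Omega_{\Omega_n})^*$ each map $\FBP$ into $\FBP$ by \cref{eq:posdecomp} and the definition of ${\rm Q}^\Omega_{\Omega_n}$, every $S_n(P)$ is positivity preserving. Well-definedness of $S_\infty(P)$ follows as in Step~1 of the proof of \cref{thm:regular}: Pettis' theorem and the Bochner criterion, the integrand of \cref{eq:pospresfrohlich} being dominated, via \cref{eq:Fbd} and H\"older's inequality with exponents $p,q$, by the random variable furnished by the first two bounds of \cref{eq:convassum}. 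To get $S_n(P)\to S_\infty(P)$ in operator norm I would telescope the difference of the two integrands into three terms, estimate each using \cref{eq:Fbd}, the standard Lipschitz bound $\|F^\omega_t(h)-F^\omega_t(h')\|\le\|h-h'\|_\omega\,{\mathrm e}^{c(\|h\|_\omega^2+\|h'\|_\omega^2)}$ and H\"older's inequality with the full exponent set $p,q,r,r'$ of \cref{eq:convassum}, and conclude by dominated convergence using the two limit statements there together with the identifications $u_\infty=u_{\alpha_\infty^+,\alpha_\infty^-}(s_1,t_1)$ and $U^\pm_\infty=\lim_n U_{v_\infty^\pm\lceil_{\Omega_n}}(s_j,t_j)$. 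As $\FBP(\Omega)$ is closed and convex, $S_\infty(P)$ is positivity preserving.

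For positivity improvement, fix $0\ne\psi\in\FBP(\Omega)$ and $k_0\in{\mathbb N}_0$; since $\bigcup_n\Omega_n^{k_0}=\Omega^{k_0}$, it suffices to show ${\rm P}\!_{k_0}S_\infty(P)\psi>0$ a.e.\ on $\Omega_n^{k_0}$ for each $n$. Fix $n$, pick $m\ge n$ with ${\rm Q}^\Omega_{\Omega_m}\psi\ne0$ (possible as ${\rm Q}^\Omega_{\Omega_m}\psi\to\psi$), and split $\psi=({\rm Q}^\Omega_{\Omega_m})^*{\rm Q}^\Omega_{\Omega_m}\psi+(1-({\rm Q}^\Omega_{\Omega_m})^*{\rm Q}^\Omega_{\Omega_m})\psi$ into two elements of $\FBP(\Omega)$ via \cref{eq:posdecomp}, so that by positivity preservation of $S_\infty(P)$ it is enough to bound ${\rm P}\!_{k_0}S_\infty(P)({\rm Q}^\Omega_{\Omega_m})^*{\rm Q}^\Omega_{\Omega_m}\psi$ from below on $\Omega_n^{k_0}$. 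Applying $S_j(P)$, $j\ge m$, and $S_m(P)$ to $({\rm Q}^\Omega_{\Omega_m})^*{\rm Q}^\Omega_{\Omega_m}\psi$, which lives in ${\mathcal F}(\Omega_m)$, and running the series-expansion computation of Step~2 of the proof of \cref{thm:regular}, the resulting integral kernels, restricted to $\Omega_m^{k_0}$ and multiplied by the strictly positive characteristic-function factors of \cref{eq:charfunct2}, satisfy (kernel of $S_m(P)$) $\le$ (kernel of $S_j(P)$): indeed $\alpha_m^+\alpha_m^-\le\alpha_j^+\alpha_j^-$ on $\Omega_m$ (both factors nonnegative and increasing by \cref{eq:simplemon}), the $v$-data $v_\infty^\pm\lceil_{\Omega_j}$ and $v_\infty^\pm\lceil_{\Omega_m}$ agree on $\Omega_m$, and enlarging the momentum integration domains from $\Omega_m$ to $\Omega_j$ only adds nonnegative contributions. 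Passing $j\to\infty$ along a subsequence realizing the $L^2$-convergence of components pointwise gives ${\rm P}\!_{k_0}S_\infty(P)({\rm Q}^\Omega_{\Omega_m})^*{\rm Q}^\Omega_{\Omega_m}\psi\ge{\rm P}\!_{k_0}S_m(P)({\rm Q}^\Omega_{\Omega_m})^*{\rm Q}^\Omega_{\Omega_m}\psi$ a.e.\ on $\Omega_m^{k_0}$; the right-hand side is the $k_0$-component of $B_m{\rm Q}^\Omega_{\Omega_m}\psi$, which by the second part of \cref{thm:regular} ($N=0$, all form factors in $L^2_+(\Omega_m)$) lies in $\FP(\Omega_m)$, hence is strictly positive a.e.\ on $\Omega_m^{k_0}\supset\Omega_n^{k_0}$. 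This closes the argument.

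The step I expect to be the main obstacle is the identification of the abstractly assumed limits $u_\infty$ and $U^\pm_\infty$ with path-integral functionals of the glued limiting form factors $\alpha_\infty^\pm$, $v_\infty^\pm$, together with the check that $v_\infty^\pm\lceil_{\Omega_n}\in{\mathfrak D}_\omega(\Omega_n)$ so that the $B_n$ are defined: one must extract from \cref{eq:simplemon} and \cref{eq:convassum} that the two convergences $v_k^\pm\to v_\infty^\pm$ and $v_\infty^\pm\lceil_{\Omega_n}\to v_\infty^\pm$ produce the same limiting Bochner integral $U^\pm_\infty$, the ultraviolet divergence present in $v_\infty^\pm$ being absorbed by the oscillatory factor ${\mathrm e}^{-{\mathrm i}\hat p\cdot X_t}$ inside the $U$-integrals. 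Once this bookkeeping is in place, the rest is a repackaging of the proof of \cref{thm:regular} together with \cref{eq:posdecomp}.
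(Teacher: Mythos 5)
Your proposal takes a genuinely different route from the paper's and contains a real gap. The paper never tries to realize $u_\infty$ as a concrete path-integral functional: in its Step~2 it restricts only the operator part of $S_\infty(P)$ by ${\rm Q}^\Omega_{\Omega_n}$ while carrying the scalar ${\mathrm e}^{u_\infty}$ along unchanged, and in Step~3 it writes ${\mathrm e}^{u_\infty}={\mathrm e}^{u_n}\sum_{N\ge0}\frac1{N!}(u_\infty-u_n)^N$, expresses $u_m-u_n$ as $u_{\alpha_m^--\alpha_n^-,\alpha_m^+}(s_1,t_1)+u_{\alpha_n^-,\alpha_m^+-\alpha_n^+}(s_1,t_1)$ with nonnegative form factors from the monotonicity, passes $m\to\infty$ via convergence of moments, and invokes \cref{thm:regular} \emph{with nontrivial $N$} (the $u_\beta^N$ slot) to place each $N\ge1$ summand in $\FBP(\Omega_n)$, leaving the $N=0$ summand in $\FP(\Omega_n)$.

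Your argument hinges instead on (i) the identification $u_\infty=u_{\alpha_\infty^+,\alpha_\infty^-}(s_1,t_1)$ for glued pointwise limits $\alpha_\infty^\pm=\lim_n\alpha_n^\pm$, and (ii) norm convergence of $({\rm Q}^\Omega_{\Omega_n})^*B_n{\rm Q}^\Omega_{\Omega_n}$ to $S_\infty(P)$, where $B_n$ is built from the restricted scalar $u_{\alpha_n^+\lceil_{\Omega_n},\alpha_n^-\lceil_{\Omega_n}}$. Both steps are unjustified by the hypotheses. For (i), the pointwise monotone limit $\alpha_\infty^\pm$ may be $+\infty$ on a set of positive measure and need not lie in ${\mathfrak l}_2(\Omega)$, so the double integral $u_{\alpha_\infty^+,\alpha_\infty^-}(s_1,t_1)$ may be undefined; $L^p({\mathbb P})$-convergence of ${\mathrm e}^{u_{\alpha_n^+,\alpha_n^-}}$ to ${\mathrm e}^{u_\infty}$ from \cref{eq:convassum} does not force $u_\infty$ to have this form. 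For (ii), the hypotheses only control $u_{\alpha_n^+,\alpha_n^-}$ with the \emph{unrestricted} $\alpha_n^\pm\in{\mathfrak l}_2(\Omega)$; nothing forces $\alpha_n^\pm$ to vanish (or even to be nonnegative) outside $\Omega_n$, so $u_{\alpha_n^+\lceil_{\Omega_n},\alpha_n^-\lceil_{\Omega_n}}$ may differ from $u_{\alpha_n^+,\alpha_n^-}$ and its convergence to $u_\infty$ is not given. In addition, your $S_n(P)$ vanish identically on ${\mathcal F}(\Omega_n)^\perp$, whereas $S_\infty(P)$ does not, so the telescoping estimate from the paper's Step~1 does not carry over directly to an operator-norm comparison; at best you would get strong convergence, and you would need to verify that separately. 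The kernel-monotonicity idea itself is appealing and would work once these identifications are in place, but as written the proof does not close the loop that the paper closes with the $u_\beta^N$ mechanism in \cref{thm:regular}.
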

\begin{proof}We again divide the proof into several steps and simplify the notation. We set 
$u_n \coloneqq u_{\alpha_{n}^-,\alpha_n^+}(s_1,t_1)$, 
$U_n^- \coloneqq U_{v_n^-}(s_2,t_2)$ and  
$U_n^+ \coloneqq U_{v_n^+}(s_3,t_3)$.
Set \[
S_n(P)\coloneqq {\mathbb E}\Big[{\mathrm e}^{u_{n}}F^\omega_\gamma(U^-_{n})F^\omega_\sigma(U^+_{n})^*{\mathrm e}^{{\mathrm i}(P-{\rm d}\Gamma(\hat p))\cdot (X_{t_4}-X_{s_4})} \Big].\]  	\smallskip\noindent{\em Step 1.}
	We first prove the well-definedness of $S_\infty(P)$ and the norm convergence
	\begin{align}\label{eq:opconv}
		\|S_n(P)-S_\infty(P)\|\xrightarrow{n\to\infty} 0.	\end{align}
	Immediately from \cref{thm:regular} this yields that $S_\infty(P)$ is positivity preserving, since $S_n(P)$ is for any $n\in{\mathbb N}$.
	Our argument follows the arguments in \cite[Proof of Proposition 5.8]{MatteMoller.2018}.
	
	To see that $S_\infty(P)$ is well-defined, we proceed as in Step 1 of the proof of \cref{thm:regular}. The measurability argument thereby remains unchanged, whereas the uniform boundedness of the integrand in \cref{eq:pospres} can be deduced from the convergence assumptions in \cref{eq:convassum}.
	
	To prove the convergence in operator norm, we observe that calculating the Gateaux derivative of $h\mapsto F^\omega_t(h)$ and using the fundamental theorem of calculus as well as the bound \cref{eq:Fbd} yields
	\begin{align*}
		\norm{F_{\gamma}^\omega(f)-F_\gamma^\omega(g)} \le \|f-g\|_\omega{\mathrm e}^{4(\|f\|_\omega^2\vee\|g\|_\omega^2)},
	\end{align*}
	see for example \cite[Lemma~17.4]{GueneysuMatteMoller.2017} for more details.
	Thus, multiply employing H\"olders inequality, the triangle inequality and the unitarity of ${\mathrm e}^{{\mathrm i}(P-{\rm d}\Gamma(\hat p))\cdot x}$ for any $x\in{\mathbb R}^d$, we find
	\begin{align*}
		\|S_n(P)  - S_\infty(P)\|
		\le\ & {\mathbb E}[|{\mathrm e}^{u_{n}}-{\mathrm e}^{u_\infty}|^p]^{1/p}{\mathbb E}[{\mathrm e}^{4q\|U^-_{n}\|_\omega^2}]^{1/q}{\mathbb E}[{\mathrm e}^{4q\|U^+_{n}\|_\omega^2}]^{1/q}
		\\
		& + {\mathbb E}[|{\mathrm e}^{pu_\infty}|]^{1/p}{\mathbb E}[\|U^-_{n}-U^-_\infty\|_\omega^{qs}]^{1/qs}{\mathbb E}[{\mathrm e}^{4qr(\|U^-_{n}\|\vee \|U^-_\infty\|_\omega)^2}]^{1/qr}{\mathbb E}[{\mathrm e}^{4q\|U^+_{n}\|_\omega^2}]^{1/q}
		\\
		& + {\mathbb E}[|{\mathrm e}^{pu_\infty}|]^{1/p}{\mathbb E}[{\mathrm e}^{4q\|U^-_{n}\|_\omega^2}]^{1/q}{\mathbb E}[\|U^+_{n}-U^+_\infty\|_\omega^{qs}]^{1/qs}{\mathbb E}[{\mathrm e}^{4qr(\|U^+_{n}\|_\omega\vee \|U^+_\infty\|_\omega)^2}]^{1/qr}.
	\end{align*}
	The right hand side converges to zero as $n\to\infty$, by our convergence assumptions \cref{eq:convassum}, which finishes this step.
	
	\smallskip\noindent{\em Step 2.}
	We now prove that for any $n\in{\mathbb N}$ and
	$\psi \in {\rm Q}^\Omega_{\Omega_n} {\mathcal F}$, we have
	\begin{align}\label{eq:pathres}
		{\rm Q}^\Omega_{\Omega_n} S_\infty(P) &= {\mathbb E}\Big[{\mathrm e}^{u_\infty}F_\gamma^\omega(\tilde U_n^-)F_\sigma^\omega(\tilde U_n^+)^*{\mathrm e}^{{\mathrm i}(P-{\rm d}\Gamma(\hat p))\cdot (X_{t_4}-X_{s_4})}\psi \Big],\\
		\nonumber \qquad
		&\tilde U_n^- \coloneqq U_{\tilde v_{n,\infty}}^-(s_2,t_2),\quad 
\tilde U_n^+ \coloneqq U_{\tilde v_{n,\infty}^+}(s_3,t_3),\quad		\tilde v_{n,\infty}^\pm (t)\coloneqq  \lim_{\substack{k\to\infty\\L^2(\Omega_n)}} v_k^\pm(t).
	\end{align}
	To this end, for any ${\Theta}\subset\Omega$, we first observe that by the definitions \cref{eq:Fockres,def:ann,def:cre,def:dG}, we have ${\rm Q}^\Omega_{\Theta}\ad(f) = \ad(f\lceil_{\Theta}){\rm Q}^\Omega_{\Theta}$ and ${\rm Q}^\Omega_{\Theta}{\rm d}\Gamma(\omega) = {\rm d}\Gamma(\omega\lceil_{\Theta}){\rm Q}^\Omega_{\Theta}$.
	 Thus, using an induction, we obtain that 
	\begin{align*}
		{\rm Q}^\Omega_{\Theta} F_t^\omega(f) = F_t^\omega(f\lceil_{\Theta}){\rm Q}^\Omega_{\Theta}.
	\end{align*}
	The statement \cref{eq:pathres} follows from the first assumption in \cref{eq:simplemon} combined with the observation
	that
	${\mathrm e}^{{\mathrm i}(P-{\rm d}\Gamma(\hat p))\cdot (X_{t_4}-X_{s_4})}$ leaves ${\mathcal F}({\Theta})$ invariant in the decomposition \cref{eq:Fockres} and the fact that ${\rm Q}^\Omega_{\Theta}$ is a partial isometry with ${\rm Ran} {\rm Q}^\Omega_{\Theta} = {\mathcal F}({\Theta})$ .
	
	\smallskip\noindent{\em Step 3.} We now prove that $S_\infty(P)$ is positivity improving.
	
	To this end let $\psi\in\FBP(\Omega)$  and pick $n_0$ such that ${\rm P}\!_{\Omega_n}\psi \in \overline{{\mathcal F}_+(\Omega_n)}\setminus\{0\}$ 
	for all $n\ge n_0$.
	In view of \cref{eq:Fockres,eq:posdecomp}, we decompose $\psi = {\rm Q}^\Omega_{\Omega_n}\psi + \psi^\perp_n$ with $\psi^\perp_n\in\FBP(\Omega)$ for any $n\in{\mathbb N}$. Recalling \cref{eq:posdecomp} it suffices to prove ${\rm Q}^\Omega_{\Omega_n}S_\infty(P){\rm Q}^\Omega_{\Omega_n}\psi\in\FP(\Omega_n)$ for all $n\ge n_0$. 
	Using Step 2, we see that
	\begin{align}
		{\rm Q}^\Omega_{\Omega_n}S_\infty(P){\rm Q}^\Omega_{\Omega_n}\psi & = {\mathbb E}\Big[{\mathrm e}^{u_{n}+u_{\infty}-u_{n}}F_\gamma^\omega(\tilde U^-_{n})F^\omega_\sigma(\tilde U^+_{n})^*{\mathrm e}^{{\mathrm i}(P-{\rm d}\Gamma(\hat p))\cdot (X_{t_4}-X_{s_4})}{\rm Q}^\Omega_{\Omega_n}\psi \Big]\nonumber\\
			& = \sum_{N=0}^{\infty}\frac{1}{N!}{\mathbb E}\Big[ (u_\infty-u_n)^N {\mathrm e}^{u_{n}}F_\gamma^\omega(\tilde U^-_{n})F_\sigma^\omega(\tilde U^+_{n})^*{\mathrm e}^{{\mathrm i}(P-{\rm d}\Gamma(\hat p))\cdot (X_{t_4}-X_{s_4})}{\rm Q}^\Omega_{\Omega_n}\psi \Big].
			\label{eq:decompsum}
	\end{align}
	Recalling Step 1 and using that convergence of an exponential moment implies convergence of all moments, we see that
	\begin{align}\label{eq:diffsummand}
		\begin{aligned}
			{\mathbb E}\Big[ (u_\infty-u_{\Omega_n})^N &{\mathrm e}^{u_{n}}F_\gamma^\omega(\tilde U^-_{n})F_\sigma^\omega(\tilde U^+_{n}){\mathrm e}^{{\mathrm i}(P-{\rm d}\Gamma(\hat p))\cdot (X_{t_4}-X_{s_4})}{\rm Q}^\Omega_{\Omega_n}\psi \Big]\\
			& = \lim_{m\to\infty} {\mathbb E}\Big[ (u_{m}-u_{n})^N {\mathrm e}^{u_{n}}F_\gamma^\omega(\tilde U^-_{n})F_\sigma^\omega(\tilde U^+_{n})^*{\mathrm e}^{{\mathrm i}(P-{\rm d}\Gamma(\hat p))\cdot (X_{t_4}-X_{s_4})}{\rm Q}^\Omega_{\Omega_n}\psi \Big]. 
		\end{aligned}
	\end{align}
	Now further observing
	\begin{align*}
		u_{m}-u_{n} = u_{\alpha^-_{m}-\alpha^-_n,\alpha^+_m}(s_1,t_1) + u_{\alpha_n^-,\alpha^+_m-\alpha^+_n}(s_1,t_1),
	\end{align*}
	we can
	apply \cref{thm:regular} which yields that the left hand side of \cref{eq:diffsummand} is an element of $\FBP(\Omega)$.
	Furthermore, \cref{thm:regular} implies that the $N=0$ summand of \cref{eq:decompsum} is an element of $\FP(\Omega_n)$. This proves the statement.
\end{proof}
\begin{rem}
	We emphasize that our assumptions in \cref{thm:posren} do not imply convergence of 
	$v_{n}^\pm$ in ${\mathfrak l}_{1,\omega}(\Omega)$ or 
	$\alpha_{n}^\pm$ in ${\mathfrak l}_2(\Omega)$.
	Thus, we appropriately describe renormalized models with mild divergences by the above result.
\end{rem}
\Cref{thm:posren} yields ergodicity of the semigroup of the Fr\"ohlich polaron.
\begin{cor}\label{fr}
	If $\lambda<0$, then
${\rm e}^{-tH_{{\sf F},\infty}(P)}$ is positivity improving for any $P\in{\mathbb R}^3$.
\end{cor}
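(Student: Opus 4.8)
The plan is to deduce \cref{fr} from \cref{thm:posren} by producing, for the Fr\"ohlich polaron at momentum $P$, a concrete choice of the limiting data $u_\infty$, $U^\pm_\infty$, the exhausting sequence $\Omega_n$, and the approximating sequences $v_n^\pm$, $\alpha_n^\pm$ that satisfy all the hypotheses of that theorem. Concretely, I would take $\Omega_n=\{k\in\mathbb R^3\mid |k|<n\}$ (so $\Omega=\mathbb R^3$), $v_n^\pm(t)\coloneqq v_n$ the Fr\"ohlich form factor with cutoff $\Lambda=n$ (time-independent, as in \cref{eq:Frohlich} with $\omega=1$), and $\alpha_n^\pm(t,s)\coloneqq e^{-|t-s|\omega}$-weighted form factors exactly as they appear in the exponent \cref{eq:uUV} of the Feynman--Kac formula: reading off \cref{eq:uUV,eq:UUV} with $\omega\equiv1$, one has $u_{\Lambda,T}^{\mathsf F}=u_{\alpha_\Lambda^-,\alpha_\Lambda^+}(0,T)$ and $U_{\Lambda,T}^{\mathsf F,\mp}=U_{v_\Lambda^\mp}(\dots)$ for the appropriate choices, all of which take values in $\bar L^2_+(\Omega)$ (in fact in $L^2_+(\Omega_n)$ after restriction) once $\lambda<0$, because then $v_\Lambda(k)=\lambda\chr_{|k|<\Lambda}|k|^{-1}\le 0$ pointwise and the products of two such factors appearing in $\alpha^\pm$ are nonnegative. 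The monotonicity requirement $\alpha_m^\pm\lceil_{\Omega_n}\ge\alpha_n^\pm\lceil_{\Omega_n}$ in \cref{eq:simplemon} holds because increasing the cutoff only enlarges the support of the (nonnegative, after the sign bookkeeping) integrand, and the $L^2(\Omega_n)$-limit $v_k^\pm(t)\lceil_{\Omega_n}$ is simply $\lambda|k|^{-1}\chr_{|k|<n}\in L^2_+(\Omega_n)$ once $k\ge n$.

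Next I would verify the analytic convergence hypotheses \cref{eq:convassum}. Here I invoke the fact, recalled in the excerpt after \cref{eq:convfr} and made explicit in \cite{HinrichsMatte.2024}, that the renormalized Fr\"ohlich polaron admits a Feynman--Kac representation of the form \cref{eq:FK} whose ingredients are precisely the $L^p$- and $\|\cdot\|_\omega$-limits of the regularized ones; in particular there exist $u_\infty^{\mathsf F}$ and $U_\infty^{\mathsf F,\pm}$ with ${\mathbb E}[e^{pu_\infty^{\mathsf F}}]<\infty$ for all $p\ge1$, with the uniform exponential-moment bound $\sup_n{\mathbb E}[e^{c(\|U_{v_n^\pm}\|_\omega\vee\|U_\infty^\pm\|_\omega)^2}]<\infty$ for every $c>0$ (a Gaussian-type bound for Brownian motion, as in \cite[Section~17]{GueneysuMatteMoller.2017}, \cite{MatteMoller.2018}), and with $L^p$-convergence $u_n^{\mathsf F}\to u_\infty^{\mathsf F}$ and $\|\cdot\|_\omega$-convergence $U_{v_n^\pm}\to U_\infty^\pm$ in every $L^q$. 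These are exactly the statements \cref{eq:uconvexample,eq:Uconvexample} specialized to the Fr\"ohlich choice $\omega\equiv1$, $\Psi=\Psi_\nr$ (Brownian motion), so one can pick any $p,q,r,r'\ge1$ with $\tfrac1p+\tfrac2q=1$, $\tfrac1r+\tfrac1{r'}=1$ and all four conditions in \cref{eq:convassum} are met. With all hypotheses of \cref{thm:posren} in place, that theorem yields that $S_\infty(P)=\mathbb E[e^{u_\infty^{\mathsf F}}F^\omega_{t/2}(U_\infty^{\mathsf F,-})F^\omega_{t/2}(U_\infty^{\mathsf F,+})^*e^{{\mathrm i}(P-{\rm d}\Gamma(\hat p))\cdot X_t}]$ is positivity improving for every $P\in\mathbb R^3$.

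Finally I would identify $S_\infty(P)$ with ${\mathrm e}^{-tH_{\mathsf F,\infty}(P)}$: by \cref{eq:FKUVreg} specialized to $H_{{\sf F},\Lambda}(P)$ one has ${\mathrm e}^{-tH_{{\sf F},\Lambda}(P)}=S_\Lambda(P)$ (in the notation of the proof of \cref{thm:posren}, with discrete cutoff $\Lambda=n$), Step~1 of that proof gives $\|S_n(P)-S_\infty(P)\|\to0$, and on the other hand \cref{eq:convfr} gives ${\mathrm e}^{-tH_{{\sf F},n}(P)}\to{\mathrm e}^{-tH_{{\sf F},\infty}(P)}$ in norm; uniqueness of norm limits forces $S_\infty(P)={\mathrm e}^{-tH_{{\sf F},\infty}(P)}$, and hence the latter is positivity improving for all $t>0$ and all $P\in\mathbb R^3$. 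The only genuinely delicate point is checking that the regularized Feynman--Kac data for the Fr\"ohlich polaron really do fit the template of \cref{thm:posren} with data converging in the prescribed $L^p$/$\|\cdot\|_\omega$ senses — i.e. extracting precisely \cref{eq:convassum} from the convergence results of \cite{HinrichsMatte.2024} (and \cite{MatteMoller.2018,GueneysuMatteMoller.2017}) for the particular singular form factor $|k|^{-1}\chr_{|k|<\Lambda}$ and trivial dispersion $\omega\equiv1$; the sign bookkeeping (that $\lambda<0$ turns the linear coupling terms into nonnegative $\bar L^2_+$-valued functions, which is what makes the Fr\"ohlich cone the right cone here) is routine but must be done carefully, since it is the mechanism by which negativity of the coupling constant enters.
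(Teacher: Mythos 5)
Your proposal follows essentially the same route as the paper: invoke the regularized Feynman--Kac formula for $H_{{\sf F},\Lambda}(P)$ (a special case of \cref{prop:evolution}), use the moment and convergence bounds from \cite{HinrichsMatte.2024} to check \cref{eq:convassum}, observe that $\lambda<0$ makes the renormalized form-factor data nonnegative so \cref{eq:simplemon} holds, and then identify the $n\to\infty$ limit of $S_n(P)$ with ${\rm e}^{-tH_{{\sf F},\infty}(P)}$ via Step~1 of the proof of \cref{thm:posren} together with the known norm convergence \cref{eq:convfr}. Your sign bookkeeping and choice of exhausting balls $\Omega_n$ match the paper's; the argument is correct.
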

\begin{proof}
	For the ultraviolet regularized Fr\"ohlich polaron \cref{eq:Fr},
	we have the Feynman--Kac formula
	\begin{align*}
		{\mathrm e}^{-tH_{{\sf F},\Lambda}(P)}
		= \mathbb E\Big[ {\rm e}^{u_{-v_\Lambda,-v_\Lambda}}F_{t/2}^\omega(U_{-{\rm e}^{-\bullet w}v_\Lambda}(0,t))F_{t/2}^\omega(U_{-{\rm e}^{-(t-\bullet)\omega}v_\Lambda}(0,t)){\rm e}^{{\rm i}(P-{\rm d}\Gamma(\hat p))\cdot X_t}\Big],
	\end{align*}
	where $(X_t)_{t\geq0}=X=X_\nr$ is now three-dimensional Brownian motion and $\omega$ and $v_\Lambda$ are defined as in \cref{eq:Frohlich},
	also see \cref{prop:evolution} for a more general statement.
	The required moment and convergence bounds \cref{eq:convassum} for the Fr\"ohlich polaron are proven in \cite[Theorems~5.1 and 6.1]{HinrichsMatte.2024}, so by Step 1 of above proof the semigroup ${\rm e}^{-tH_{{\sf F},\infty}(P)}$ is given by \cref{eq:pospresfrohlich}.\footnote{In fact, by reinspecting Step 1, we see that this provides an alternative proof for the convergence statement \cref{eq:convfr}.}
	Further, if $\lambda<0$, then the positivity assumptions \cref{eq:simplemon} are also easily verified from the definitions \cref{eq:Frohlich}, so
	 \cref{thm:posren} applies and proves the statement.
\end{proof}
\begin{rem}
	Although this result is fairly standard in the ultraviolet regularized case, the authors could not find a direct proof in the literature for the operator without ultraviolet cutoff. We do remark, however, that it can be inferred from the Perron--Frobenius--Faris theorem \cite[Theorem XIII.44]{ReedSimon.1978} for those total momenta $P\in{\mathbb R}^3$ for which the Hamiltonian has a spectral gap, which is known to hold at least for total momenta $P$ with small absolute value, since $\omega$ is massive, i.e., uniformly bounded from below by a positive constant, see for example \cite{Polzer.2023} for a recent discussion of this fact.
\end{rem}
\subsection{Self-Energy Renormalized Cases}\label{subsec:Nelson}
We now move to the study of models which require a self-energy renormalization. A key argument in our main result \cref{thm:posselfen} is the Trotter product formula \cref{cor:trotter}, which will follow from the next \lcnamecref{prop:evolution} providing an evolution equation for our functional integrals.

 To derive the evolution equation, we require more structures of our processes than in the previous results. Thus and throughout this section, we as usually denote by $C(M_1,M_2)$ the space of continuous functions from the topological space $M_1$ to the topological space $M_2$ and by $C^1(I)$ the once continuously differentiable functions on the interval $I\subset {\mathbb R}$.
\begin{prop}[Evolution Equation]\label{prop:evolution}
	Let $g_\pm \in C([0,\infty),{\mathfrak D}_\omega(\Omega))$ and $f \in C^1([0,\infty))$ and define
	\begin{align} h(t,x) = {\rm d}\Gamma(\omega f'(t)) + a({\mathrm e}^{-{\mathrm i}\hat p\cdot x}g_-(t)) + \ad({\mathrm e}^{-{\mathrm i}\hat p \cdot x}g_+(t)) \end{align}
with $f'(t)=\frac{{\rm d}f(t)}{{\rm d}t}$. Then $h(t,x)$ is selfadjoint on ${\mathscr D}({\rm d}\Gamma(\omega))$ for any $t\ge0$ and any 
	$ x\in{\mathbb R}^d$. Further, let
	\begin{align}\label{eq:evolutionassum}
		\begin{aligned}
			 &v^-(s) = {\mathrm e}^{-f(s)\omega}g_-(s),
			&& v^+_t(s) = {\mathrm e}^{-|f(t)-f(s)|\omega}g_+(s),\\
			& \alpha^-(t,s) = 2^{-1/2}v^-(t),
			&& \alpha^+(t,s) = 2^{-1/2}v^+_t(s).
		\end{aligned}
	  \end{align}
	Then the ${\mathcal B}({\mathcal F}(\Omega))$-valued Bochner--Lebesgue integral
	\begin{equation}
		\label{def:SOmega}
		S^\Omega_{s,t}(P) \coloneqq {\mathbb E}\Big[{\mathrm e}^{u_{\alpha^-,\alpha^+}(s,t)}F_{f(t)/2}^\omega(U_{v^-}(s,t))F_{f(t)/2}^\omega(U_{v_t^+}(s,t))^*{\mathrm e}^{{\mathrm i}(P-{\rm d}\Gamma(\hat p))\cdot (X_t-X_s)} \Big]
	\end{equation}
	defines a family of bounded operators satisfying the flow equation
	\begin{equation}\label{eq:flow}
		 S_{r,t}^\Omega(P) S_{s,r}^\Omega (P) = S_{s,t}^\Omega(P),
	\end{equation}
	leaving ${\mathscr D}={\mathscr D}({\rm d}\Gamma(\omega))\cap {\mathscr D}(\Psi(P-{\rm d}\Gamma(\hat p))$ invariant
	and 
	solving the initial value problem
	\begin{align}\label{eq:IVP} \frac{{\rm d}}{{\rm d}t}S_{s,t}^\Omega(P)\psi = -(\Psi(P-{\rm d}\Gamma(\hat p)) + h(t,0))S_{s,t}^\Omega(P)\psi, \quad S_{s,s}^\Omega(P)\psi = \psi, \qquad \psi\in{\mathscr D}.
	\end{align}
\end{prop}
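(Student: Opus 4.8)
The plan is to establish the five assertions of the proposition in the order they are listed, the selfadjointness of $h(t,x)$ and the boundedness of $S^\Omega_{s,t}(P)$ being essentially routine and the flow equation \cref{eq:flow} together with the evolution equation \cref{eq:IVP} forming the technical core.

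First I would dispose of the selfadjointness of $h(t,x)$. Since $f\in C^1([0,\infty))$, the quantity $f'(t)$ is a real scalar, so ${\rm d}\Gamma(\omega f'(t))=f'(t)\,{\rm d}\Gamma(\omega)$ is selfadjoint on $\mathscr{D}({\rm d}\Gamma(\omega))$. The multiplication operator ${\mathrm e}^{-{\mathrm i}\hat p\cdot x}$ is unitary on $L^2(\Omega)$ and commutes with $\omega$, hence $\|\omega^{-1/2}{\mathrm e}^{-{\mathrm i}\hat p\cdot x}g_\pm(t)\|=\|\omega^{-1/2}g_\pm(t)\|<\infty$, and \cref{eq:abound} shows that $a({\mathrm e}^{-{\mathrm i}\hat p\cdot x}g_\pm(t))$ and $\ad({\mathrm e}^{-{\mathrm i}\hat p\cdot x}g_\pm(t))$ are bounded relative to ${\rm d}\Gamma(\omega)$ with relative bound that can be made arbitrarily small; the Kato--Rellich theorem then yields selfadjointness of $h(t,x)$ on $\mathscr{D}({\rm d}\Gamma(\omega))$. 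Well-definedness and boundedness of the Bochner--Lebesgue integral \cref{def:SOmega} follow exactly as in Step~1 of the proof of \cref{thm:regular}: strong measurability of the integrand is a consequence of separability of $L^2(\Omega)$ and of the continuity of $(\tau,h)\mapsto F^\omega_\tau(h)$, of $x\mapsto{\mathrm e}^{{\mathrm i}(P-{\rm d}\Gamma(\hat p))\cdot x}$ and of $g_\pm$ and $f$; uniform boundedness in the paths of $X$ follows from \cref{eq:Fbd} together with $\|{\mathrm e}^{-c\omega}\|\le 1$ for $c\ge0$, which bounds $\|U_{v^-}(s,t)\|_\omega$, $\|U_{v_t^+}(s,t)\|_\omega$ and the relevant $L^2$-norms of $\alpha^\pm$ uniformly.

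The heart of the proof is the flow equation \cref{eq:flow}. Fixing $s\le r\le t$, I would split each time integral appearing in $u_{\alpha^-,\alpha^+}(s,t)$, in $U_{v^-}(s,t)$ and in $U_{v_t^+}(s,t)$ over $[s,r]\cup[r,t]$ and then invoke the Markov property of $X$: the increments on $[s,r]$ are independent of those on $[r,t]$ and $(X_{r+\cdot}-X_r)\overset{d}{=}X$, so the expectation factorizes once the operator-valued integrand has been put into a product form adapted to the two subintervals. Producing that form is where the Fock-space algebra enters, through the pull-through relation ${\mathrm e}^{-c{\rm d}\Gamma(\omega)}\ad(h)=\ad({\mathrm e}^{-c\omega}h){\mathrm e}^{-c{\rm d}\Gamma(\omega)}$ and its adjoint (which is precisely the reason for the factors ${\mathrm e}^{-f(\cdot)\omega}$ and ${\mathrm e}^{-|f(t)-f(\cdot)|\omega}$ in \cref{eq:evolutionassum}), the analogous shift relation for ${\mathrm e}^{-{\mathrm i}{\rm d}\Gamma(\hat p)\cdot x}$, and the canonical commutation relation $[a(h_1),\ad(h_2)]=\braket{h_1,h_2}$; the last of these is responsible for the emergence of the bilinear term $u_{\alpha^-,\alpha^+}$ (with the factors $2^{-1/2}$ of \cref{eq:evolutionassum}) when the annihilation operators from $S^\Omega_{s,r}(P)$ are normal-ordered past the creation operators from $S^\Omega_{r,t}(P)$. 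After this reorganization the product $S^\Omega_{r,t}(P)S^\Omega_{s,r}(P)$ collapses onto the single-interval expression $S^\Omega_{s,t}(P)$.

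For the invariance of $\mathscr{D}={\mathscr D}({\rm d}\Gamma(\omega))\cap{\mathscr D}(\Psi(P-{\rm d}\Gamma(\hat p)))$ and the evolution equation \cref{eq:IVP}, I would use that $F^\omega_{f(t)/2}(U_{v^-}(s,t))=\sum_{n}\tfrac1{n!}\ad(U_{v^-}(s,t))^n{\mathrm e}^{-(f(t)/2){\rm d}\Gamma(\omega)}$ carries the smoothing operator ${\mathrm e}^{-(f(t)/2){\rm d}\Gamma(\omega)}$, which together with the commutation of ${\rm d}\Gamma(\hat p)$ with ${\rm d}\Gamma(\omega)$ and a moment bound for $\Psi$ derived from \cref{eq:charfunct2} gives $S^\Omega_{s,t}(P)\mathscr{D}\subset\mathscr{D}$. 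To obtain \cref{eq:IVP}, I would differentiate $t\mapsto S^\Omega_{s,t}(P)\psi$ under the expectation (justified by dominated convergence using \cref{eq:Fbd} and the $C^1$-regularity of $f$) and collect the contributions: the $t$-derivatives of the $F^\omega_{f(t)/2}(\cdot)$ factors in their ${\rm d}\Gamma(\omega)$-slots, together with the $t$-dependence of $v_t^+$ inside $U_{v_t^+}(s,t)$, produce $-f'(t){\rm d}\Gamma(\omega)=-{\rm d}\Gamma(\omega f'(t))$; the fundamental theorem of calculus applied to the upper endpoints of $U_{v^-}(s,t)$, $U_{v_t^+}(s,t)$ and $u_{\alpha^-,\alpha^+}(s,t)$ produces, after the pull-through and shift identities evaluated at $x=0$, the terms $-a(g_-(t))-\ad(g_+(t))$; and differentiating ${\mathrm e}^{{\mathrm i}(P-{\rm d}\Gamma(\hat p))\cdot(X_t-X_s)}$ under ${\mathbb E}$, using that on the $n$-boson sector ${\rm d}\Gamma(\hat p)$ acts by multiplication with $\sum_i k_i$ together with \cref{eq:charfunct2}, produces $-\Psi(P-{\rm d}\Gamma(\hat p))$. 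Summing gives $-(\Psi(P-{\rm d}\Gamma(\hat p))+h(t,0))S^\Omega_{s,t}(P)\psi$, while the initial value is read off from \cref{def:SOmega} at $t=s$, where $U_{v^\pm}(s,s)=0$, $u_{\alpha^-,\alpha^+}(s,s)=0$ and $X_s-X_s=0$. The step I expect to be the main obstacle is this combined combinatorial and analytic bookkeeping: tracking the interplay of the Markov property with the noncommutative pull-through, shift and commutator identities (in particular the clean appearance of $u_{\alpha^-,\alpha^+}$ from normal ordering), and justifying differentiation under the expectation in the presence of the unbounded operators ${\rm d}\Gamma(\omega)$ and $\Psi(P-{\rm d}\Gamma(\hat p))$, which is precisely where the smoothing factor in $F^\omega_{f(t)/2}$ and the L\'evy--Khintchine moment control become indispensable.
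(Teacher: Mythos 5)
Your treatment of the selfadjointness of $h(t,x)$ (Kato--Rellich via \cref{eq:abound}), the well-definedness and boundedness of $S^\Omega_{s,t}(P)$ (as in Step 1 of \cref{thm:regular}), and the flow equation \cref{eq:flow} (a pathwise factorization of the integrand via pull-through identities and the CCR, followed by taking expectations and using the stationary independent increments) are all sound and essentially match the paper's appendix argument. In particular, your observation that the bilinear $u_{\alpha^-,\alpha^+}$ cross term is the $c$-number produced when normal-ordering the annihilators of the $[r,t]$-piece past the creators of the $[s,r]$-piece is exactly the right mechanism.

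There is, however, a genuine gap in your derivation of the evolution equation \cref{eq:IVP}. You propose to "differentiate $t\mapsto S^\Omega_{s,t}(P)\psi$ under the expectation, justified by dominated convergence using \cref{eq:Fbd} and the $C^1$-regularity of $f$", and you identify the presence of unbounded operators as the main obstacle to this. But that is not where the difficulty lies: the Lévy path $t\mapsto X_t$ is \emph{not} pathwise differentiable (Brownian motion is nowhere differentiable, and general Lévy processes have jumps), so the integrand of \cref{def:SOmega}, through the factor ${\mathrm e}^{{\mathrm i}(P-{\rm d}\Gamma(\hat p))\cdot(X_t-X_s)}$, has no classical $t$-derivative for almost any path, and dominated convergence simply does not apply. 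Moreover, $W_{s,t}$ and ${\mathrm e}^{{\mathrm i}(P-{\rm d}\Gamma(\hat p))\cdot X_t}$ are \emph{not} independent of each other (both depend on the same path segment), so one cannot take expectations of the two factors separately and then invoke \cref{eq:charfunct2}. The paper handles this by first establishing a pathwise integral equation for the factor $W_{s,t}$ (by acting on exponential vectors, which is what your FTC argument gestures at), and then applying \emph{Itô's formula} for càdlàg semimartingales, together with Itô's product formula and the Lévy--Itô decomposition of $X$, to the pairing $\braket{{\mathrm e}^{{\mathrm i}(P-{\rm d}\Gamma(\hat p))\cdot X_t}\eta\,|\,W_{s,t}\phi}$. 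The drift term of Itô's formula produces precisely the $-\Psi(P-{\rm d}\Gamma(\hat p))$ contribution via the Lévy--Khintchine exponent, and the stochastic-integral (martingale) terms vanish upon taking the expectation $\mathbb E$. This stochastic-calculus step is the missing key tool in your proposal; without it, "collecting contributions" and appealing to \cref{eq:charfunct2} does not yield a rigorous derivation of \cref{eq:IVP}.
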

\begin{proof}
By the usual bounds \cref{eq:abound} and the Kato--Rellich theorem, the selfadjointness of $h(t,x)$ follows.  
The remaining proof  is a straightforward extension of methods from \cite{GueneysuMatteMoller.2017,HinrichsMatte.2022} and hence moved to \cref{appendix}.
\end{proof}
\begin{rem}
	By reinspecting our definitions, the reader will recognize that
	\begin{align}\label{def:pathintegralevolution} u_{\alpha^-,\alpha^+}(S,T) = \int_S^T\braket{{\mathrm e}^{-i\hat p\cdot X_t}v^-(t)|U_{v^+_t}(S,t)}{\rm d}t, 
	\end{align}
	an expression which is used as definition for the complex action in previous articles.
\end{rem}
We can now derive a Trotter-type product  formula for different domains of the Nelson-type semigroups introduced above.
In the statement, for $\Omega_1,\Omega_2\subset {\mathbb R}^d$ measurable and $P_1,P_2\in{\mathbb R}^d$, we will use the operator
\begin{align}\label{def:LP}
	L_{P_1,P_2}(\Omega_1,\Omega_2) \coloneqq \Psi(P_1-{\rm d}\Gamma(\hat p\chr_{\Omega_1}))+ \Psi(P_2-{\rm d}\Gamma(\hat p\chr_{\Omega_2})) - \Psi(P_1+P_2-{\rm d}\Gamma(\hat p))
\end{align}
on 
${\mathcal F}(\Omega_1\cup\Omega_2)$.
To ensure that $L_{P_1,P_2}(\Omega_1,\Omega_2)$ is selfadjoint and bounded from below, we
will require the following assumption on the particle dispersion $\Psi$. 
\newcommand{\assu}[1]{$\mathbf{A}(#1)$}
 \begin{assumption}[\assu{\Omega_1,\Omega_2,P_1,P_2}]\hypertarget{eq:subadd}
There exists $C\in{\mathbb R}$ such that 
for all $n\in{\mathbb N}_0$,  
all $k_1,\ldots,k_n\in\Omega_1$ 
and 
all $p_1,\ldots,p_n\in\Omega_2$, it holds that 
\[\Psi\left(P_1-\sum_{i=1}^{n}k_i\right) + \Psi\left(P_2-\sum_{i=1}^{n}p_i\right) - \Psi\left(
P_1+P_2 - \sum_{i=1}^n(k_i+p_i)\right) \ge C.\]
\end{assumption}
\newcommand{\A}[1]{\hyperlink{eq:subadd}{\bf A}($#1$)}

\begin{ex}\label{ex:levsymbolgrowth}
We show examples of $\Psi$ satisfying \assu{\Omega_1,\Omega_2,P_1,P_2}.
\begin{enumerate}
	\item If $\Psi$ is subadditive, like in the case of the semi-relativistic dispersion relation 
	$\Psi(p) = \sqrt{|p|^2+M^2}-M^2$, then the statement easily follows for arbitrary $\Omega_1,\Omega_2\subset {\mathbb R}^d$ 
	and $P_1,P_2\in{\mathbb R}^d$.
	\item Let $\Psi$ be a L\'evy symbol. Then it satisfies $|\Psi(p)| \le C|p|^2$ for some $C>0$, by the L\'evy--Khintchine formula \cref{LL}. 
	Thus, if $\Omega_1\cup \Omega_2$ is bounded, 
	then \assu{\Omega_1,\Omega_2,P_1,P_2} is satisfied
	for arbitrary $P_1,P_2\in{\mathbb R}^d$.
	In particular, in the case of the non-relativistic dispersion relation $\Psi(p)=\frac12p^2$, 
	\assu{\Omega_1,\Omega_2,P_1,P_2} is fulfilled for 
	arbitrary $P_1,P_2\in{\mathbb R}^d$  
	if $\Omega_1\cup \Omega_2$ is bounded. 
\end{enumerate}
\end{ex}

 The next statement is now a main technical ingredient to our proof of ergodicity in the subsequent theorem.
\begin{prop}[Trotter Product Formula]\label{cor:trotter}
	Let $\Omega_1,\Omega_2\subset {\mathbb R}^d$ be disjoint, let $\Theta\subset \Omega\coloneqq \Omega_1\cup\Omega_2$ such that Assumption {\rm \A{\Theta\cap \Omega_1,\Theta\cap \Omega_2,P_1,P_2}} holds for $P_1,P_2\in{\mathbb R}^d$
	and let the assumptions of \cref{prop:evolution} be fulfilled.
	 Further, let $s,T\ge 0$
	and assume that there exists $C>0$ such that 
	\begin{align}\label{eq:limgrowth}
		\sup_{t\in[s,s+T]}\|{\rm Q}^\Omega_\Theta S_{s,t}^\Omega(P_1+P_2){\rm Q}^\Omega_\Theta\|\le {\mathrm e}^{C(t-s)}.
	\end{align}
	For $i=1,2$ and $t\in[s,s+T]$, we define 
	 $S^{\Omega_i}_{s,t}(P)$ as
	 in \cref{def:SOmega}
	 with $\omega$ and $g_{\pm}$ replaced by $\omega\lceil_{\Omega_i}$ and $g_\pm \lceil_{\Omega_i}$, respectively.
	 Then, using the identification 
$
 {\mathcal F}(\Omega) \cong {\mathcal F}(\Omega_1)\otimes 
 {\mathcal F}(\Omega_2)$, 
we have
	\begin{align*}
		{\rm Q}^\Omega_\Theta \left(
		S_{s,s+T}^{\Omega_1}(P_1)\otimes S_{s,s+T}^{\Omega_2}(P_2) \right){\rm Q}^\Omega_\Theta =
		\slim_{N\to\infty} \prod_{i=1}^{N} 
		\left({\rm Q}^\Omega_\Theta S_{s+(i-1)\frac TN,s+i\frac TN}^{\Omega}(P_1+P_2){\rm Q}^\Omega_\Theta {\mathrm e}^{-\frac T{N} L_{P_1,P_2}(\Omega_1\cap\Theta,\Omega_2\cap\Theta)}\right).
	\end{align*}
\end{prop}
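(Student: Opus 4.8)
The right-hand side is a Trotter approximation built from the evolution generated by the generator of $S^\Omega_{s,t}(P_1+P_2)$ and the time-independent operator $L_{P_1,P_2}$; the plan is to identify the evolution generated by the \emph{sum} of these with $S^{\Omega_1}_{s,t}(P_1)\otimes S^{\Omega_2}_{s,t}(P_2)$, and then to invoke a (time-dependent) Trotter--Kato argument. First I would reduce to $\Theta=\Omega$. Using the intertwining relations ${\rm Q}^\Omega_\Theta\ad(f)=\ad(f\lceil_\Theta){\rm Q}^\Omega_\Theta$ and ${\rm Q}^\Omega_\Theta{\rm d}\Gamma(m)={\rm d}\Gamma(m\lceil_\Theta){\rm Q}^\Omega_\Theta$ from Step~2 of the proof of \cref{thm:posren}, the identity $(U_v(S,T))\lceil_\Theta=U_{v\lceil_\Theta}(S,T)$, and the invariance of $\mathcal F(\Theta)$ under ${\mathrm e}^{{\mathrm i}(P-{\rm d}\Gamma(\hat p))\cdot x}$, one gets ${\rm Q}^\Omega_\Theta S^\Omega_{s,t}(P)=S^\Theta_{s,t}(P){\rm Q}^\Omega_\Theta$, ${\rm Q}^\Omega_\Theta L_{P_1,P_2}(\Omega_1,\Omega_2)=L_{P_1,P_2}(\Omega_1\cap\Theta,\Omega_2\cap\Theta){\rm Q}^\Omega_\Theta$, and, using $\mathcal F(\Theta)\cong\mathcal F(\Omega_1\cap\Theta)\otimes\mathcal F(\Omega_2\cap\Theta)$ (valid since $\Omega_1,\Omega_2$ are disjoint), ${\rm Q}^\Omega_\Theta(S^{\Omega_1}_{s,t}(P_1)\otimes S^{\Omega_2}_{s,t}(P_2))=(S^{\Omega_1\cap\Theta}_{s,t}(P_1)\otimes S^{\Omega_2\cap\Theta}_{s,t}(P_2)){\rm Q}^\Omega_\Theta$. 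Since ${\rm Q}^\Omega_\Theta$ is a coisometry onto $\mathcal F(\Theta)$ acting as the identity there, compressing by ${\rm Q}^\Omega_\Theta$ on both sides turns the assertion into the corresponding identity on $\mathcal F(\Theta)$ with $\Omega_i$ replaced by $\Omega_i\cap\Theta$; Assumption \A{\Omega_1\cap\Theta,\Omega_2\cap\Theta,P_1,P_2} is the standing hypothesis, and \cref{eq:limgrowth} gives $\sup_{t\in[s,s+T]}\|S^\Theta_{s,t}(P_1+P_2)\|\le{\mathrm e}^{C(t-s)}$. Hence I may assume $\Theta=\Omega$.

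\textbf{Generator identification.} Under $\mathcal F(\Omega)\cong\mathcal F(\Omega_1)\otimes\mathcal F(\Omega_2)$ the operators ${\rm d}\Gamma(\,\cdot\,)$, $a(\,\cdot\,)$, $\ad(\,\cdot\,)$ split additively over $L^2(\Omega)=L^2(\Omega_1)\oplus L^2(\Omega_2)$, so writing $h^{(i)}(t,0)$ for the operator $h(t,0)$ of \cref{prop:evolution} built from $\omega\lceil_{\Omega_i},g_\pm\lceil_{\Omega_i}$ one has $h(t,0)=h^{(1)}(t,0)\otimes\one+\one\otimes h^{(2)}(t,0)$. By \cref{prop:evolution} applied on $\Omega_1$ and $\Omega_2$, the family $V_{s,t}:=S^{\Omega_1}_{s,t}(P_1)\otimes S^{\Omega_2}_{s,t}(P_2)$ satisfies the flow equation and solves the initial value problem with generator
\[
-\bigl(\Psi(P_1-{\rm d}\Gamma(\hat p\chr_{\Omega_1}))+\Psi(P_2-{\rm d}\Gamma(\hat p\chr_{\Omega_2}))+h(t,0)\bigr)=-\bigl(\Psi(P_1+P_2-{\rm d}\Gamma(\hat p))+h(t,0)+L\bigr),
\]
where $L:=L_{P_1,P_2}(\Omega_1,\Omega_2)$ and the equality is the definition \cref{def:LP}. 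Again by \cref{prop:evolution}, $S^\Omega_{s,t}(P_1+P_2)$ solves the initial value problem with generator $-(\Psi(P_1+P_2-{\rm d}\Gamma(\hat p))+h(t,0))$, while $({\mathrm e}^{-rL})_{r\ge0}$ is the $C_0$-semigroup generated by $-L$; by \A{\Omega_1,\Omega_2,P_1,P_2}, $L$ is selfadjoint and bounded below, so $\|{\mathrm e}^{-rL}\|\le{\mathrm e}^{-r\inf\sigma(L)}$. Together with \cref{eq:limgrowth} this makes the (time-ordered) products $\Pi_N:=\prod_{i=1}^N\bigl(S^\Omega_{t_{i-1},t_i}(P_1+P_2){\mathrm e}^{-\frac TNL}\bigr)$, $t_i:=s+i\frac TN$, as well as $V_{s,s+T}$, uniformly bounded on $[s,s+T]$.

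\textbf{Trotter convergence.} Telescoping along the flow equations gives
\[
V_{s,s+T}-\Pi_N=\sum_{k=1}^N R_k\,\bigl(V_{t_{k-1},t_k}-S^\Omega_{t_{k-1},t_k}(P_1+P_2){\mathrm e}^{-\frac TNL}\bigr)\,V_{s,t_{k-1}},
\]
with $R_k$ a product of the remaining factors, uniformly norm-bounded. On a common core ${\mathscr D}_0\subset{\mathscr D}({\rm d}\Gamma(\omega))\cap{\mathscr D}(\Psi(P_1+P_2-{\rm d}\Gamma(\hat p)))\cap{\mathscr D}(L)$ left invariant by $V_{s,t}$ with locally uniformly bounded graph norm, one expands, for $\phi\in{\mathscr D}_0$ and $h=\tfrac TN$, both $V_{r,r+h}\phi$ and $S^\Omega_{r,r+h}(P_1+P_2){\mathrm e}^{-hL}\phi$ to first order as $\phi-h\bigl(\Psi(P_1+P_2-{\rm d}\Gamma(\hat p))+h(r,0)+L\bigr)\phi+o(h)$, uniformly in $r\in[s,s+T]$ and in $\phi$ ranging over bounded subsets of ${\mathscr D}_0$; this uses the Duhamel formulas coming from the two initial value problems, the uniform continuity of $t\mapsto f'(t)$ and $t\mapsto g_\pm(t)$ on $[s,s+T]$, and the $\omega$-relative boundedness of the creation/annihilation terms. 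Hence each summand above is $o(h)$ uniformly, so $\|(V_{s,s+T}-\Pi_N)\psi\|\le \mathrm{const}\cdot\|\psi\|_{{\mathscr D}_0}\cdot N\cdot o(1/N)=o(1)$, and density of ${\mathscr D}_0$ with the uniform bounds extends the convergence to all $\psi\in\mathcal F(\Omega)$, which is the claim.

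\textbf{Main obstacle.} The delicate step is the last one: producing a common core that is simultaneously contained in ${\mathscr D}(L)$ --- note $L$ need not be bounded above --- and invariant, in a controlled graph norm, under the tensor-product evolution $V$ and under $({\mathrm e}^{-rL})_r$, and then securing the $o(h)$ one-step estimate uniformly. This is exactly where \A{} (which tames $L$ from below) and \cref{eq:limgrowth} (which compensates for $L$ being unbounded above) are used, and the required estimates are of the same type as, and rely on, the evolution-equation machinery of \cref{prop:evolution} and the methods of \cite{GueneysuMatteMoller.2017,HinrichsMatte.2022}.
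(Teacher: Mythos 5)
Your reduction to $\Theta=\Omega$ rests on the claimed intertwinings ${\rm Q}^\Omega_\Theta S^\Omega_{s,t}(P)=S^\Theta_{s,t}(P){\rm Q}^\Omega_\Theta$ and ${\rm Q}^\Omega_\Theta(S^{\Omega_1}_{s,t}(P_1)\otimes S^{\Omega_2}_{s,t}(P_2))=(S^{\Omega_1\cap\Theta}_{s,t}(P_1)\otimes S^{\Omega_2\cap\Theta}_{s,t}(P_2)){\rm Q}^\Omega_\Theta$, and these are false. The operators $F^\omega_r(\cdot)$ and ${\mathrm e}^{{\mathrm i}(P-{\rm d}\Gamma(\hat p))\cdot x}$ do intertwine with ${\rm Q}^\Omega_\Theta$ by restriction of the test functions, but the scalar factor ${\mathrm e}^{u_{\alpha^-,\alpha^+}(s,t)}$ inside the expectation is an $L^2(\Omega)$ (not $L^2(\Theta)$) pairing and commutes past ${\rm Q}^\Omega_\Theta$ unchanged; since it is a nonconstant random variable it cannot be pulled out of ${\mathbb E}$. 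Compare Step~2 of the paper's proof of \cref{thm:posren}, where exactly this computation is carried out: ${\rm Q}^\Omega_{\Omega_n}S_\infty(P)$ yields the restricted $\tilde U^\pm_n$ inside the $F^\omega$'s but still the unrestricted scalar ${\mathrm e}^{u_\infty}$, i.e.\ one does \emph{not} recover $S^{\Omega_n}_{s,t}$. So ${\rm Q}^\Omega_\Theta S^\Omega_{s,t}(P){\rm Q}^\Omega_\Theta$ is not $S^\Theta_{s,t}(P)$ and your reduction does not produce an instance of the proposition over $\Theta$; you must carry the compressions ${\rm Q}^\Omega_\Theta$ throughout, which is what the paper does by defining $U(s,t)={\rm Q}^\Omega_\Theta(S^{\Omega_1}_{s,t}\otimes S^{\Omega_2}_{s,t}){\rm Q}^\Omega_\Theta$ and $\xi_t(h)={\rm Q}^\Omega_\Theta S^\Omega_{t,t+h}{\rm Q}^\Omega_\Theta{\mathrm e}^{-hL}$ directly.

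Your generator identification via \cref{def:LP} is correct, and your telescoping scheme is morally the Chernoff argument the paper uses; but the part you flag as the ``main obstacle'' (a common core, invariance in a controlled graph norm, and the one-step $o(h)$ estimate uniform in $t$) is the actual content of the proof and is left unresolved. The paper avoids hand-rolling it by invoking the time-dependent Chernoff theorem from \cite{Vuillermot.2010}, whose hypotheses \crefrange{eq:Vui1}{eq:Vui4} it then verifies; the uniformity over $t\in[s,s+T]$ in \cref{eq:Vui4} is secured not by a uniform one-step expansion on a ``nice'' core as you sketch, but by first proving the pointwise uniform-in-$t$ estimate $\lim_{\tau\downarrow0}\sup_t\|D_t(\tau)\psi\|=0$ and then exploiting compactness of $\{U(s,t)\psi:t\in[s,s+T]\}$ in the graph norm together with the uniform boundedness principle. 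You should either supply a proof of your one-step estimate of comparable precision, or, more economically, appeal to the same abstract Chernoff-type result.
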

\begin{proof}
	We want to apply the Chernoff formula \cite[Theorem, p.~2925]{Vuillermot.2010}.
	To use similar notation to that article, let
\begin{align*}	
&U(s,t) \coloneqq {\rm Q}^\Omega_\Theta 
\left(S_{s,t}^{\Omega_1}(P_1)\otimes S_{s,t}^{\Omega_2}(P_2) \right){\rm Q}^\Omega_\Theta = ({\rm Q}^{\Omega_1}_{\Theta\cap\Omega_1} S_{s,t}^{\Omega_1}(P_1){\rm Q}^{\Omega_1}_{\Theta\cap\Omega_1}) \otimes ({\rm Q}^{\Omega_2}_{\Theta\cap\Omega_2} S_{s,t}^{\Omega_2}(P_2){\rm Q}^{\Omega_2}_{\Theta\cap\Omega_2}),\\
&\xi_t(h) \coloneqq {\rm Q}^\Omega_\Theta S^\Omega_{t,t+h}(P_1+P_2){\rm Q}^\Omega_\Theta{\mathrm e}^{-h L},
\end{align*}
where we abbreviated $L\coloneq L_{P_1,P_2}(\Omega_1\cap\Theta,\Omega_2\cap\Theta)$. 
Note that since $S^\Omega_{s,t}(P)$ leaves 
${\mathcal F}(\Theta)$ invariant, the flow equation \cref{eq:flow} carries over, i.e., 
\[ U(r,t)U(s,r) = U(s,t), \quad r,t\in[s,s+T]. \]
	By the continuity of L\'evy symbols, we see that $\mathscr D\coloneqq {\mathscr D}(L)\cap{\mathscr D}({\rm d}\Gamma(\omega))$ is a core for the selfadjoint operator $\Psi(P_1-{\rm d}\Gamma(\hat p)) + \Psi(P_2-{\rm d}\Gamma(\hat p)) + h(t,0)$, by employing test vectors with only finitely many non-vanishing Fock space components which are smooth functions with bounded support.\footnote{For readers more familiar with Fock space calculus this is just the finite particle subspace over the smooth function with bounded support.}
	
	The main result from \cite[Theorem, p.2925]{Vuillermot.2010} now directly implies 
	  \cref{cor:trotter} if, for any $\psi\in\mathscr D$, we can verify the four assumptions below:
	  \begin{subequations}
	\begin{align}
		& \xi_t(0) = \one,
		\quad t\in[s,s+T],
		\label{eq:Vui1}\\
		& \sup_{t\in[s,s+T]}\|\xi_t(h)\|\le {\mathrm e}^{Ch},
		\label{eq:Vui2}\\
		& \frac{{\rm d}}{{\rm d}t} U(s,t)\psi = -(\Psi(P_1+P_2-\Omega)+h(t,0)+L)U(s,t)\psi,
		\quad t\in[s,s+T]
		\label{eq:Vui3},\\
		& \lim_{\tau\downarrow 0}\sup_{t\in[s,s+T]}\Big\| \underbrace{\Big(\tfrac{1}{\tau}\left(\xi_t(\tau)-1\right) + \Psi(P_1+P_2-{\rm d}\Gamma(\hat p))+h(t,0)+L\Big)}_{\eqqcolon D_t(\tau)} U(s,t)\psi \Big\| = 0.
		\label{eq:Vui4}
	\end{align}
	  \end{subequations}
	We first note that
	\cref{eq:Vui1} is trivial.
	Further, \cref{eq:Vui2} follows from \cref{eq:limgrowth}, since ${\rm Q}^\Omega_\Theta$ is a projection and $L$ is selfadjoint and bounded from below, by Assumption \A{\Omega_1\cap\Theta,\Omega_2\cap\Theta,P_1,P_2}. 
	The differential equation \cref{eq:IVP}, which follows from \cref{prop:evolution}, now implies \cref{eq:Vui3}.
	
	It remains 
	to verify \cref{eq:Vui4}.
	To this end, we first note that for any $\psi\in {\rm Q}^\Omega_\Theta \mathscr D$ and again employing that $\mathcal F(\Theta)$ is left invariant by $S^\Omega_{s,t}(P)$ and $L$, we have
	\begin{align*}
		D_t(\tau)\psi = & \Big(\tfrac1\tau\big(S_{t,t+\tau}^\Omega(P_1+P_2) - 1\big) - \Psi(P_1+P_2-{\rm d}\Gamma(\hat p))+h(t,0) \Big) \psi
		\\
		& + S_{t,t+\tau}^\Omega(P_1+P_2) \Big( \tfrac 1\tau\big({\rm e}^{-\tau L}- 1\big) -  L \Big)\psi
		 + \big(S_{t,t+\tau}^\Omega(P_1+P_2)-1\big)L\psi.
	\end{align*}
	The second and third line are easily observed to converge to zero as $\tau\to0$ uniformly in $t$, by \cref{eq:limgrowth} and the fact that continuous maps on compact sets are uniformly continuous.
	Further, by \cref{eq:IVP}, we have
	\begin{align*}
		\big(S_{t,t+\tau}^\Omega(P_1+P_2) - 1\big)\psi = - \int_0^\tau S_{t,t+r}^\Omega(P_1+P_2)\big(\Psi(P_1+P_2-{\rm d}\Gamma(\hat p))+h(t+r,0)\big)\psi{\rm d}r,
	\end{align*}
	which converges to zero 
	uniformly in $t$ as $\tau\to0$, since 
	the map $(t,x)\mapsto h(t,x)\psi$ is continuous for any $\psi\in{\mathscr D}({\rm d}\Gamma(\omega))$.
	Summarizing the above observations,
	we have shown that 
	\begin{align}\label{eq:uniformconvfixedvector}
		\lim_{\tau\downarrow0}\sup_{t\in[s,s+T]}\norm{D_t(\tau)\psi} = 0, \qquad \psi\in{\mathscr D}.
	\end{align}
	From now fix $\psi\in\mathscr D$ and
	observe that $[s,s+T]\ni t\mapsto U(s,t)\psi \in \mathscr D$ is a continuous map with respect to the graph norm of ${\mathrm d}\Gamma(\omega) + L$.
	Thus $\{U(s,t)\psi|t\in [s,s+T]\}\subset \mathscr D$ is compact with respect to the graph norm and for some $\eps>0$, we can pick a
	finite $\eps$-covering $\{\Phi^\eps_j|j=1,\ldots,N_\eps\}$ of $\{U(s,t)\psi|t\in [s,s+T]\}$. 
	Then for a fixed $\tau>0$, we find
	\begin{align*}
		\sup_{t\in[s,s+T]}\Big\| D_t(\tau) U(s,t)\psi \Big\|
		 \le \max_{j=1,\ldots,N_\eps} \sup_{t\in[s,s+T]} \|D_t(\tau)\Phi^\eps_j\| + \eps \sup_{t\in[s,s+T]}\sup_{\sigma\in(0,T-t]}\| D_t(\sigma)
		 \|_{\mathcal B(\mathscr D,\mathcal F(\Omega))},
	\end{align*}
	where the operator norm in the second term on the right hand side is finite by the uniform boundedness principle.
	Employing \cref{eq:uniformconvfixedvector}, we can now first take the limit $\tau\downarrow0$ and afterwards the limit $\eps\downarrow0$, yielding \cref{eq:Vui4}. This concludes the proof.
\end{proof}
%
Let us for the convenience of the reader note the following simple criterion for the validity of \cref{eq:limgrowth}.
It is inspired by the explicit treatment of the $d=2$ semi-relativistic Nelson model in \cite[Appendix B]{HinrichsMatte.2022}.
\begin{lem}
Suppose that $\|U_{v^-}(s,t)\|_\omega\le C|t-s|^{1/2}$ and $\|U_{v^+_t}(s,t)\|_\omega \le C$.
Then 
 \cref{eq:limgrowth} assumed in 
  \cref{cor:trotter}
 holds true. 
 \end{lem}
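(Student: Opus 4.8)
The plan is to bound the operator norm of the ${\mathcal B}({\mathcal F}(\Omega))$-valued random variable under the expectation in \cref{def:SOmega} by a deterministic, path-independent quantity and then pull it out of the Bochner--Lebesgue integral. Since ${\rm Q}^\Omega_\Theta$ is a partial isometry we have $\|{\rm Q}^\Omega_\Theta S_{s,t}^\Omega(P_1+P_2){\rm Q}^\Omega_\Theta\|\le\|S_{s,t}^\Omega(P_1+P_2)\|$, and by the triangle inequality for the Bochner integral, submultiplicativity of the operator norm, $\|A^*\|=\|A\|$, and the unitarity of ${\mathrm e}^{{\mathrm i}(P_1+P_2-{\rm d}\Gamma(\hat p))\cdot x}$, it suffices to estimate
\[
	{\mathbb E}\left[\, \left|{\mathrm e}^{u_{\alpha^-,\alpha^+}(s,t)}\right|\, \left\|F_{f(t)/2}^\omega(U_{v^-}(s,t))\right\|\, \left\|F_{f(t)/2}^\omega(U_{v^+_t}(s,t))\right\| \,\right]
\]
for $t\in[s,s+T]$ by an expression of the form ${\mathrm e}^{C'(t-s)}$. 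I would treat the three factors separately; throughout it is convenient that $\omega$, $\hat p$ and the exponentials ${\mathrm e}^{-a\omega}$, ${\mathrm e}^{-{\mathrm i}\hat p\cdot x}$ are all multiplication operators in the momentum variable and hence commute, so that $\|{\mathrm e}^{-{\mathrm i}\hat p\cdot x}h\|_\omega=\|h\|_\omega$ and $\|{\mathrm e}^{-a\omega}h\|_\omega\le\|h\|_\omega$ for $a\ge0$.

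For the two $F^\omega$-factors I would use the bound $\|F_{f(t)/2}^\omega(h)\|\le{\mathrm e}^{4\|h\|_\omega^2}$ from \cref{eq:Fbd}. The first hypothesis then yields $\|F_{f(t)/2}^\omega(U_{v^-}(s,t))\|\le{\mathrm e}^{4C^2(t-s)}$ directly. For the complex action I would use the single-integral representation \cref{def:pathintegralevolution}, Cauchy--Schwarz, the second hypothesis $\|U_{v^+_r}(s,r)\|_\omega\le C$ applied at the endpoint $r\in[s,t]$, and the bound $\|v^-(r)\|\le c_-\coloneqq\sup_{r\in[s,s+T]}\|v^-(r)\|<\infty$, which is finite by continuity of $r\mapsto v^-(r)={\mathrm e}^{-f(r)\omega}g_-(r)$ on the compact interval $[s,s+T]$; this gives $|u_{\alpha^-,\alpha^+}(s,t)|\le Cc_-(t-s)$, hence $|{\mathrm e}^{u_{\alpha^-,\alpha^+}(s,t)}|={\mathrm e}^{\Re u_{\alpha^-,\alpha^+}(s,t)}\le{\mathrm e}^{Cc_-(t-s)}$.

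The main obstacle is the factor $\|F_{f(t)/2}^\omega(U_{v^+_t}(s,t))\|$: the hypothesis $\|U_{v^+_t}(s,t)\|_\omega\le C$ on its own only produces the constant ${\mathrm e}^{4C^2}$, which does not tend to $1$ as $t\downarrow s$ and is therefore not of the required form ${\mathrm e}^{C'(t-s)}$. To recover the missing factor $(t-s)$ I would additionally invoke the standing assumption $g_+\in C([0,\infty),{\mathfrak D}_\omega(\Omega))$ of \cref{prop:evolution}: the Bochner estimate together with $\|{\mathrm e}^{-a\omega}h\|_\omega\le\|h\|_\omega$ gives $\|U_{v^+_t}(s,t)\|_\omega\le\int_s^t\|{\mathrm e}^{-|f(t)-f(r)|\omega}g_+(r)\|_\omega\,{\rm d}r\le c_+(t-s)$ with $c_+\coloneqq\sup_{r\in[s,s+T]}\|g_+(r)\|_\omega<\infty$, whence $\|U_{v^+_t}(s,t)\|_\omega^2\le\min(C,c_+(t-s))^2\le Cc_+(t-s)$ and thus $\|F_{f(t)/2}^\omega(U_{v^+_t}(s,t))\|\le{\mathrm e}^{4Cc_+(t-s)}$. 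Since all three bounds are independent of the path of $X$, combining them and taking the expectation yields $\|{\rm Q}^\Omega_\Theta S_{s,t}^\Omega(P_1+P_2){\rm Q}^\Omega_\Theta\|\le{\mathrm e}^{(4C^2+4Cc_++Cc_-)(t-s)}$ for every $t\in[s,s+T]$, which is \cref{eq:limgrowth} with $4C^2+4Cc_++Cc_-$ in place of $C$.
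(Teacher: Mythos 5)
Your proof is correct and follows the paper's intended route: bound the integrand pathwise via \cref{eq:Fbd} together with the single-integral representation \cref{def:pathintegralevolution}, then pull the deterministic bound out of the expectation. You also correctly flag and resolve a point the paper's one-line proof leaves implicit: the bare hypothesis $\|U_{v^+_t}(s,t)\|_\omega\le C$ only produces the constant ${\mathrm e}^{4C^2}$, which cannot be dominated by ${\mathrm e}^{C'(t-s)}$ as $t\downarrow s$; invoking the standing continuity of $g_+$ from \cref{prop:evolution} to obtain the complementary estimate $\|U_{v^+_t}(s,t)\|_\omega\le c_+(t-s)$ (and likewise the finiteness of $\sup_r\|v^-(r)\|$ for the complex action) is genuinely needed for the conclusion, and your combination $\min(C,c_+(t-s))^2\le Cc_+(t-s)$ handles it cleanly.
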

\begin{proof}
The statement  can directly be deduced from \cref{eq:Fbd,def:pathintegralevolution}. 
\end{proof}

We can now state our main result on self-energy renormalized models.
The main new ingredient of the proof, compared to the previous theorem, is an application of the above Trotter product formula in the spirit of \cite{Faris.1972} or \cite{Miyao.2018}.
\begin{thm}\label{thm:posselfen}
Let $T>0$, let $P,P_n\in{\mathbb R}^d$ for $n\in{\mathbb N}$, and let $\Omega_1\subset\Omega_2\subset \cdots\subset{\mathbb R}^d$ be an increasing sequence of measurable subsets.
	Set $\Omega= \bigcup_{n=1}^\infty\Omega_n$. 
	%
	For all $n\in{\mathbb N}$, assume that $f^\pm_n\in C^1([0,\infty))$ and that $g^\pm_n\in C([0,\infty],{\mathfrak D}_\omega(\Omega))$ satisfy $g^\pm_n(t)\lceil_{\Omega_n} \in L^2_+(\Omega_n) $ and $g^\pm_m\ge g^\pm_n$ for $m\ge n$ almost everywhere.
	Further set $f^\pm_0(t)=g^\pm_0(t)=0$  
	for all $t\ge 0$.
	For $n\in\mathbb N_0$, we define
	\begin{align}
		\begin{aligned}
			&v_{n,t}^-(s) = v_n^-(s) = {\mathrm e}^{-f_n(s)\omega}g^-_n(s),
			&& v^+_{n,t}(s) = {\mathrm e}^{-|f_n(t)-f_n(s)|\omega}g^+_n(s),\\
			& \alpha^-_n(s) = 2^{-1/2}v^-_n(s),
			&& \alpha^+_n(t,s) = 2^{-1/2}v^+_{n,t}(s).
		\end{aligned}
	\end{align}
	Let $U_v(s,t)$ and $u_{\alpha,\beta}(s,t)$ be given by \cref{def:pathintegral}. 
	Assume that for all $s,t\in[0,T]$, $s\le t$ and $k\in{\mathbb N}_0$ there exists a sequence $(e_n(s,t))_{n\in{\mathbb N}}\subset{\mathbb R}$, a ${\mathbb C}$-valued random variable $u_{\infty,k}(s,t)$, ${\mathfrak D}_\omega(\Omega)$-valued random variables $U^\pm_\infty(s,t)$ and 
	$p,q,r,r'\ge 1$ with $\frac1p+\frac2q=1$ and $\frac1r+\frac1{r'}=1$ such that
	\begin{subequations}\label{eq:convassumstrong}
			\begin{align}
				&\label{eq:convassumstrong.1} {\mathbb E}\left[|{\mathrm e}^{{pu_{\infty,0}}(s,t)}|\right]\le{\mathrm e}^{c(t-s)},\quad {\mathbb E}\left[|{\mathrm e}^{{pu_{\infty,k}}(s,t)}|\right]<\infty,\\ 
				&\label{eq:convassumstrong.2} \sup_{n\in{\mathbb N}}{\mathbb E}\left[{\mathrm e}^{4qr(\|U_{v_{n,t}^\pm}(s,t)\|\vee\|U^\pm_\infty(s,t)\|)^2}\right]<{\mathrm e}^{c(t-s)},\\
				&\label{eq:convassumstrong.3} \sup_{\substack{s,t\in[0,T]\\s\le t}}{\mathbb E}\left[\left|{\mathrm e}^{
					{u_{\alpha_n^--\alpha_k^-,\alpha_n^+-\alpha_k^+}(s,t)+e_n(s,t)-e_k(s,t)}}
				-{\mathrm e}^{u_{\infty,k}(s,t)}\right|^p\right]\xrightarrow{n\to\infty}0,\\
				&\label{eq:convassumstrong.4} \sup_{\substack{s,t\in[0,T]\\s\le t}}{\mathbb E}\left[\|U_{v_{n,t}^\pm}(s,t)-U^\pm_{\infty}(s,t)\|^{qr'}\right]\xrightarrow{n\to\infty}0.
		\end{align}
	\end{subequations}
	Now if
	\begin{align}\label{eq:vacexp}
		{\mathbb E}\left[{\mathrm e}^{u_{\infty,k}(s,t) + {\mathrm i}(P-P_k)\cdot (X_t-X_s)}\right]		 \ne 0 \qquad \mbox{for all}\ k\in{\mathbb N}_0,
	\end{align}
then the ${\mathcal B}({\mathcal F}(\Omega))$-valued Bochner--Lebesgue integral
	\begin{align}\label{def:UVren}
			S^{(\infty,0)}_{s,t}(P)\coloneqq{\mathbb E}\Big[ {\mathrm e}^{u_{\infty,0}(s,t)}F^\omega_{f(t)/2}(U^-_{\infty}(s,t))F^\omega_{f(t)/2}(U^+_{\infty}(s,t)){\mathrm e}^{{\mathrm i}(P-{\rm d}\Gamma(\hat p))\cdot(X_t-X_s)} \Big]
	\end{align} 
	is well-defined and positivity improving for all $s,t\in[0,T]$ such that $s\le t$.
\end{thm}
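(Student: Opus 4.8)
The plan is to keep the three‑part structure of the proof of \cref{thm:posren} and to insert the Trotter product formula \cref{cor:trotter}, in the spirit of \cite{Faris.1972,Miyao.2018}, at the point where strict positivity must be produced.

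\emph{Well‑definedness and positivity preservation.} Put
$S^{(n,0)}_{s,t}(P):={\mathbb E}\big[{\mathrm e}^{u_{\alpha_n^-,\alpha_n^+}(s,t)+e_n(s,t)}F^\omega_{f_n(t)/2}(U_{v_n^-}(s,t))F^\omega_{f_n(t)/2}(U_{v_{n,t}^+}(s,t))^*{\mathrm e}^{{\mathrm i}(P-{\rm d}\Gamma(\hat p))\cdot(X_t-X_s)}\big]$.
Since $g_n^\pm\ge g_0^\pm=0$ and $g_n^\pm(t)\lceil_{\Omega_n}\in L^2_+(\Omega_n)$, all of $v_n^\pm,\alpha_n^\pm$ take values in $\bar L^2_+(\Omega)$, so $S^{(n,0)}_{s,t}(P)$ is positivity preserving (indeed improving) by \cref{thm:regular}. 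Repeating Step~1 of the proof of \cref{thm:posren} — the Gateaux bound for $h\mapsto F^\omega_\gamma(h)$, \cref{eq:Fbd}, and Hölder's inequality with exponents $p,q,r,r'$ against \cref{eq:convassumstrong} in the case $k=0$ (where $\alpha_0^\pm=0$, $e_0=0$) — gives $\|S^{(n,0)}_{s,t}(P)-S^{(\infty,0)}_{s,t}(P)\|\to0$, uniformly for $0\le s\le t\le T$. Hence $S^{(\infty,0)}_{s,t}(P)$ is well defined and positivity preserving, and since the $S^{(n,0)}$ satisfy the flow equation \cref{eq:flow} by \cref{prop:evolution} (the $e_n$ being additive in $(s,t)$), so does $S^{(\infty,0)}_{s,t}(P)$.

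\emph{Reduction to compressions.} Write $\hat S_m^{s,t}(P):={\rm Q}^\Omega_{\Omega_m}S^{(\infty,0)}_{s,t}(P){\rm Q}^\Omega_{\Omega_m}$, a bounded, positivity‑preserving operator on $\mathcal F(\Omega_m)$. Exactly as in Step~3 of the proof of \cref{thm:posren}, for $\phi,\psi\in\FBP(\Omega)\setminus\{0\}$ one picks $m$ with ${\rm Q}^\Omega_{\Omega_m}\phi,{\rm Q}^\Omega_{\Omega_m}\psi\ne0$ and, using \cref{eq:posdecomp} and positivity preservation, reduces $\langle\phi,S^{(\infty,0)}_{s,t}(P)\psi\rangle>0$ to showing $\hat S_m^{s,t}(P)$ positivity improving on $\mathcal F(\Omega_m)$ for every $m$ and $0\le s\le t\le T$. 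Moreover the flow equation $S^{(\infty,0)}_{s,t}(P)=S^{(\infty,0)}_{r,t}(P)S^{(\infty,0)}_{s,r}(P)$, together with the fact (a variant of \cref{eq:posdecomp}) that $(1-{\rm Q}^\Omega_{\Omega_m})$ maps $\FBP(\Omega)$ into itself, shows that $\hat S_m^{s,t}(P)-\hat S_m^{r,t}(P)\hat S_m^{s,r}(P)$ is positivity preserving on $\mathcal F(\Omega_m)$; iterating gives the subordination inequality $\langle\phi,\hat S_m^{s,t}(P)\psi\rangle\ge\langle\phi,\hat S_m^{t_{N-1},t_N}(P)\cdots\hat S_m^{t_0,t_1}(P)\,\psi\rangle$ for all $\phi,\psi\in\FBP(\Omega_m)$ and all partitions $s=t_0<\dots<t_N=t$.

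\emph{Trotter identity and removal of the cutoff.} Fix $m$ and split $P=P_m+P_m'$ with $P_m'$ chosen so that the instance of Assumption~$\mathbf A$ relevant to $\Omega_1\cap\Theta=\Omega_m$, $\Omega_2\cap\Theta=\emptyset$ holds (automatic for $\Psi_\sr$ by subadditivity; for $\Psi_\nr$ take $P_m'=0$, cf.\ \cref{ex:levsymbolgrowth}). For $n\ge m$, \cref{cor:trotter} with $\Omega_1=\Omega_m$, $\Omega_2=\Omega_n\setminus\Omega_m$, $\Theta=\Omega_m$, form factor $g_n^\pm$ and momenta $P_m,P_m'$ — whose hypothesis \cref{eq:limgrowth} for the cutoff models one checks via the criterion stated just before \cref{thm:posselfen} — yields, since $L$ then reduces on $\mathcal F(\Omega_m)$ to the scalar $\Psi(P_m')$, the identity $c_n^{(m)}(s,t)\,S^{\Omega_m,(n)}_{s,t}(P_m)=\slim_{N}\prod_{i=1}^N {\rm Q}^{\Omega_n}_{\Omega_m}S^{\Omega_n,(n)}_{t_{i-1},t_i}(P){\rm Q}^{\Omega_n}_{\Omega_m}$ on $\mathcal F(\Omega_m)$, where $t_i=s+i\tfrac{t-s}N$, $S^{\Omega_\bullet,(n)}$ are the cutoff operators of \cref{prop:evolution}, and, with $\Omega_{\mathrm{vac}}$ the Fock vacuum, $c_n^{(m)}(s,t)=\langle\Omega_{\mathrm{vac}},S^{\Omega_n\setminus\Omega_m,(n)}_{s,t}(P_m')\Omega_{\mathrm{vac}}\rangle={\mathbb E}[{\mathrm e}^{u_{\alpha_n^--\alpha_m^-,\alpha_n^+-\alpha_m^+}(s,t)}{\mathrm e}^{{\mathrm i}P_m'\cdot(X_t-X_s)}]\ge{\mathrm e}^{-(t-s)\Psi(P_m')}>0$ by the kernel computation of Step~2 of the proof of \cref{thm:regular} (series expansion and \cref{eq:charfunct2}). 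Distributing ${\mathrm e}^{e_n(s,t)}$ over the partition (so ${\mathrm e}^{e_n}S^{\Omega_n,(n)}$ becomes the restriction of $S^{(n,0)}$) and letting $n\to\infty$ — using the uniform norm convergence of the first step and an $\varepsilon/N$‑argument as in the proof of \cref{cor:trotter} to interchange $n\to\infty$ with $\slim_N$ — gives
\[ c^{(m)}(s,t)\,S^{\Omega_m,(\infty)}_{s,t}(P_m)=\slim_{N\to\infty}\prod_{i=1}^N\hat S_m^{t_{i-1},t_i}(P), \]
where $S^{\Omega_m,(\infty)}_{s,t}(P_m)$ is the Feynman--Kac integral on $\mathcal F(\Omega_m)$ with the $\Omega_m$‑restricted limiting form factor (strictly positive on $\Omega_m$ by $g_m^\pm\le g_n^\pm$ and $g_m^\pm(t)\lceil_{\Omega_m}\in L^2_+(\Omega_m)$, hence positivity improving by \cref{thm:regular}, and the norm limit of $S^{\Omega_m,(n)}_{s,t}(P_m)$ by the restriction of \cref{eq:convassumstrong}), and $c^{(m)}(s,t)={\mathrm e}^{e_m(s,t)}{\mathbb E}[{\mathrm e}^{u_{\infty,m}(s,t)+{\mathrm i}P_m'\cdot(X_t-X_s)}]$, which is nonzero by \cref{eq:vacexp} (case $k=m$) and $\ge{\mathrm e}^{e_m(s,t)-(t-s)\Psi(P_m')}>0$ by the same kernel computation.

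\emph{Conclusion and the main obstacle.} Combining the subordination inequality with the last display: for $\phi,\psi\in\FBP(\Omega_m)\setminus\{0\}$,
\[ \langle\phi,\hat S_m^{s,t}(P)\psi\rangle\ \ge\ \big\langle\phi,\,{\textstyle\prod_{i=1}^N}\hat S_m^{t_{i-1},t_i}(P)\,\psi\big\rangle\ \xrightarrow{\,N\to\infty\,}\ c^{(m)}(s,t)\,\langle\phi,S^{\Omega_m,(\infty)}_{s,t}(P_m)\psi\rangle\ >\ 0, \]
so $\hat S_m^{s,t}(P)$ is positivity improving on $\mathcal F(\Omega_m)$ for every $m$, and therefore $S^{(\infty,0)}_{s,t}(P)$ is positivity improving on $\mathcal F(\Omega)$. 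The hard part is the third step: separating the renormalized high‑mode action ${\mathrm e}^{u_{\infty,m}}$ from the low‑mode Feynman--Kac integrand is \emph{not} a statement about stochastic independence (both involve the same driving Lévy process) but must be produced by the Trotter/Chernoff machinery, which is available only at finite cutoff $n$ where \cref{prop:evolution,cor:trotter} apply. The technical costs are verifying \cref{eq:limgrowth} and the appropriate instance of Assumption~$\mathbf A$ for the cutoff models, distributing the renormalization energies $e_n$ consistently along the time partition, and interchanging $\slim_N$ with the ultraviolet limit $n\to\infty$ — the last relying on the uniformity in $(s,t)$ built into \cref{eq:convassumstrong}. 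It is precisely the non‑vanishing hypothesis \cref{eq:vacexp} that keeps $c^{(m)}(s,t)$ bounded away from $0$ and thus prevents the collapse of strict positivity in the limit, which is why the positivity‑improving structure must first be exhibited on the ultraviolet‑harmless sets $\Omega_m$ and only afterwards transported to $\Omega$.
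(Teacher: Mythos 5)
Your route is genuinely different from the paper's, and the central new ingredient is the \emph{subordination inequality} $\langle\phi,\hat S_m^{s,t}\psi\rangle\ge\langle\phi,\prod_i\hat S_m^{t_{i-1},t_i}\psi\rangle$, which you derive from the flow equation, the identity $\hat S_m^{s,t}-\hat S_m^{r,t}\hat S_m^{s,r}={\rm Q}_m S^{(\infty,0)}_{r,t}(\one-{\rm Q}_m)S^{(\infty,0)}_{s,r}{\rm Q}_m$, and \cref{eq:posdecomp}. This replaces the paper's argument, which instead assumes $\langle S^{(\infty,0)}_{s,t}(P)\phi,\psi\rangle=0$ and propagates the vanishing of support overlaps through the Trotter product by using that $L_k$ is a multiplication operator on each $n$-particle sector. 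Your monotonicity/lower-bound version is cleaner where it applies (no passage through $Q_k$-decompositions, no tensor factorization of the inner product, no need to invoke that the adjoint of a positivity preserving operator is positivity preserving), and, taking $\Theta=\Omega_m$ with $\Omega_2\cap\Theta=\emptyset$, it even circumvents the boundedness of $\Theta$ that the paper needs for Assumption $\mathbf A$ in the non-relativistic case.

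However there is a real gap in the scope. Your argument forces $L_{P_m,P_m'}(\Omega_m,\emptyset)$ to vanish on $\mathcal F(\Omega_m)$, because the subordination inequality concerns the plain product $\prod_i\hat S_m^{t_{i-1},t_i}$ without the $e^{-\frac TN L}$ insertions. But your claim that ``$L$ reduces on $\mathcal F(\Omega_m)$ to the scalar $\Psi(P_m')$'' is false unless $P_m'=0$: with $\Omega_2\cap\Theta=\emptyset$ one gets $L=\Psi(P_m-{\rm d}\Gamma(\hat p))+\Psi(P_m')-\Psi(P_m+P_m'-{\rm d}\Gamma(\hat p))$, which is a genuine (non-scalar) multiplication operator whenever $P_m'\neq0$, both for $\Psi_\nr$ and for $\Psi_\sr$; subadditivity only gives a lower bound, not the scalar reduction. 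Consequently your proof establishes the theorem only when one may take $P_k=P$ for all $k$ (so $P_k'=0$, $L_k=0$, and \cref{eq:vacexp} becomes $\mathbb E[e^{u_{\infty,k}}]\ne 0$), whereas the statement allows an arbitrary sequence $(P_k)$ and hypothesizes \cref{eq:vacexp} with that sequence. The paper's zero-propagation step handles $L_k\neq0$ precisely because it only needs $e^{-qL_k}$ to be positivity preserving and to preserve $n$-particle supports; to make the subordination route cover the general case you would need either a compensated version of the subordination inequality that carries the scalar lower bound $e^{-TC/N}$ from Assumption $\mathbf A$ through each factor, or the paper's support argument. Since the application in the proof of \cref{thm:Nelsonpos} sets $P_n=P$, your proof does cover the case that matters, but as written it does not prove \cref{thm:posselfen} in the generality stated.

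Two further points worth flagging, though they do not change the verdict. First, your factorization of the Trotter LHS as $c_n^{(m)}\,S^{\Omega_m,(n)}_{s,t}(P)$ and the claim that the limit $S^{\Omega_m,(\infty)}_{s,t}(P)$ is positivity improving by \cref{thm:regular} silently uses that $g^\pm_n\lceil_{\Omega_m}$ stabilizes (or at least stays strictly positive and $L^2$-convergent on $\Omega_m$); this is true for the Nelson form factors, and the paper's identification of the Trotter LHS with $S^{(k,0)}(P_k)\otimes S^{(n,k)}(P-P_k)$ implicitly uses the same, so this is a shared looseness rather than a flaw peculiar to your proof. Second, the factor $e^{e_m(s,t)}$ in your $c^{(m)}(s,t)$ should be attributed to the $S^{\Omega_m}$ factor rather than to the vacuum overlap, but since the $e_n$ are real and additive this is merely a bookkeeping convention and does not affect positivity.
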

\begin{proof}
	Throughout this proof, we assume $s,t\in[0,T]$ with $s\le t$, set $\Omega_0=\emptyset$ and for $n\in{\mathbb N}$, $k\in{\mathbb N}_0$ with $n\geq k$ denote
	\begin{align*}
		& \alpha_{n,k}^{\pm} \coloneqq (\alpha_n^\pm - \alpha_k^\pm)\lceil_{\Omega_n\setminus\Omega_k},
		\ v_{n,k,t}^-\coloneqq v_{n,k}^- \coloneqq (v_n^- - v_k^-)\lceil_{\Omega_n\setminus\Omega_k},
		\ v_{n,k,t}^+ \coloneqq (v_{n,t}^+  - v_{k,t}^+)\lceil_{\Omega_n\setminus\Omega_k},
		\\
		&
		S^{(n,k)}_{s,t}(P) \coloneqq {\mathbb E}\Big[ {\mathrm e}^{u_{\alpha_{n,k}^-,\alpha_{n,k}^+}(s,t)+e_n(s,t)-e_k(s,t)}F^\omega_{f(t)/2}(U_{v_{n,k,t}^-}(s,t))F^\omega_{f(t)/2}
		(U_{v_{n,k,t}^+}(s,t))^*{\mathrm e}^{{\mathrm i}(P-{\rm d}\Gamma(\hat p))\cdot (X_t-X_s)} \Big],
	\end{align*}
	where 
		the latter is well-defined in $\mathcal B(\mathcal F(\Omega_n\setminus\Omega_k))$ similar to the argument in Step 1 of the proof of \cref{thm:posren}.
	Further, given a measurable $\Theta\subset \Omega$, we will employ the identifications $L^2(\Omega) \cong L^2(\Theta) \oplus L^2(\Omega\setminus\Theta)$ as well as $\mathcal F(\Omega) \cong \mathcal F(\Theta) \oplus \mathcal F(\Theta)^\perp$ as in \cref{eq:Fockres}, which especially allows us to treat $S^{(n,k)}_{s,t}$ as bounded operators on $\mathcal F(\Omega_n)$ and $\mathcal F(\Omega)$ as well.
	
	Now note that the convergence in \cref{eq:convassumstrong.4} carries over to 
	$v_{n,k,t}^\pm$, i.e., setting $U^\pm_{\infty,k} \coloneqq U^\pm_{\infty}\lceil_{\Omega\setminus\Omega_k}$ we have
\[\sup_{\substack{s,t\in[0,T]\\s\le t}}{\mathbb E}\left[\|U_{v_{n,k,t}^\pm}(s,t)-U^\pm_{\infty,k}(s,t)\|^{qr'}\right]\xrightarrow{n\to\infty}0.\]
	We  write
	\begin{align*}
		S^{(\infty,k)}_{s,t}(P) \coloneqq {\mathbb E}\Big[ {\mathrm e}^{u_{\infty,k}(s,t)}
		F^\omega_{f(t)/2}(U^+_{{\infty,k}}(s,t))F^\omega_{f(t)/2}(U^-_{\infty,k}(s,t))^*{\mathrm e}^{{\mathrm i}(P-{\rm d}\Gamma(\hat p))\cdot (X_t-X_s)} \Big],
	\end{align*}
	which is well defined in $\mathcal B(\mathcal F(\Omega\setminus\Omega_k))$  by the assumptions \cref{eq:convassumstrong.1,eq:convassumstrong.2}.
	
	Further, similar to Step 1 of the proof of \cref{thm:posren}, the assumptions \cref{eq:convassumstrong.1,eq:convassumstrong.2,eq:convassumstrong.3,eq:convassumstrong.4} imply that
	\begin{align}
		\label{eq:uniformconv}
		&\lim_{n\to\infty}\sup_{\substack{s,t\in[0,T]\\s\le t}}\|S^{(\infty,k)}_{s,t}(P) - S^{(n,k)}_{s,t}(P)\|  = 0 
		\qquad\text{for all}\ k\in{\mathbb N}_0. 
	\end{align}
	A similar argument using \cref{eq:convassumstrong.1,eq:convassumstrong.2} and H\"olders inequality yields 
	\begin{align}
			\label{eq:uniformbound}
			\sup_{\substack{n\in{\mathbb N}\cup\{\infty\}}}
			\sup_{\substack{s,t\in[0,T]\\s\le t}}\|S^{(n,0)}_{s,t}(P)\| \le {\rm e}^{c(t-s)}.
	\end{align}
	Finally, note that \cref{thm:regular} as well as our monotonicity and positivity assumptions on $g_n^\pm$ imply that 
	$S^{(k,0)}_{s,t}(P)$
	is positivity improving on ${\mathcal F}(\Omega_k)$ for any $k\in{\mathbb N}$. By \cref{eq:uniformconv}, this yields that $S^{(\infty,k)}_{s,t}(P)$ is positivity preserving on $\mathcal F(\Omega\setminus \Omega_k)$ and thus also on $\mathcal F(\Omega)$ for any $k\in{\mathbb N}_0$.
	
	We now want to apply the Trotter product formula established in \cref{cor:trotter}, but Assumption \A{\Omega_n,\Omega\setminus\Omega_n,P_n,P-P_n)} is not satisfied in general, whence 
	we introduce bounded subsets of $\Omega$  to apply the Trotter product formula. 
	To this end let $\Theta$ be any bounded subset of $\Omega$. 
	Since $\Psi$ is the L\'evy symbol for the stochastic process $X=(X_t)_{t\geq0}$, 
	\cref{ex:levsymbolgrowth} (2) implies that $\Psi$ satisfies Assumption \A{\Theta\cap \Omega_n,\Theta\cap (\Omega\setminus\Omega_n),P_n,P-P_n)}. 
	Since we work under the assumptions of \cref{prop:evolution} with $g_\pm =g_n^\pm$ and $f=f_n^\pm$ and since we verified assumption \cref{eq:limgrowth} by \cref{eq:uniformbound},
	\cref{cor:trotter} and the identification ${\mathcal F}(\Omega)\cong{\mathcal F}(\Omega_k)\otimes 
	{\mathcal F}(\Omega\setminus\Omega_k)$ imply that
	\begin{align}\label{c}
		{\rm Q}^\Omega_\Theta \left(
		S^{(k,0)}_{s,t}(P_k)\otimes S^{(n,k)}_{s,t}(P-P_k) \right)
		{\rm Q}^\Omega_\Theta 
		= \slim_{N\to\infty} \prod_{i=1}^{N} \left({\rm Q}^\Omega_\Theta S_{s+(i-1)\frac{t-s}N,s+i\frac {t-s}N}^{(n,0)}(P){\rm Q}^\Omega_\Theta  {\mathrm e}^{-\frac tN L_k}\right)
	\end{align}
	for all $n\in{\mathbb N}$ and $k\in {\mathbb N}_0$,
	 where we defined $L_k\coloneqq L_{P_k,P-P_k}(\Theta\cap \Omega_k,\Theta \cap (\Omega\setminus \Omega_k))$.
	From \cref{eq:uniformconv} we immediately see that the left hand side of \cref{c} converges in norm as $n\to\infty$.
	Furthermore, for a fixed $N\in{\mathbb N}$, we see that 
	$\prod_{i=1}^{N} \left({\rm Q}^\Omega_\Theta S_{s+(i-1)\frac{t-s}N,s+i\frac {t-s}N}^{(n,0)}(P){\rm Q}^\Omega_\Theta  {\mathrm e}^{-\frac tN L_k}\right)$ 
 converges in norm as $n\to\infty$ and the convergence is uniform in $N\in{\mathbb N}$ by \cref{eq:uniformbound}.
	Thus, for any $k\in{\mathbb N}_0$, 
	\begin{align}\label{eq:trottren}
		{\rm Q}^\Omega_\Theta \left(
		S^{(k,0)}_{s,t}(P_k)\otimes S^{(\infty,k)}_{s,t}(P-P_k) \right)
		{\rm Q}^\Omega_\Theta 
		= \slim_{N\to\infty} \prod_{i=1}^{N} \left({\rm Q}^\Omega_\Theta S_{s+(i-1)\frac{t-s}N,s+i\frac {t-s}N}^{(\infty,0)}(P){\rm Q}^\Omega_\Theta  {\mathrm e}^{-\frac tN L_k}\right).
	\end{align}
	We now prove that
	${\rm Q}^\Omega_\Theta S^{(\infty,0)}_{s,t}(P){\rm Q}^\Omega_\Theta\lceil_{\mathcal F(\Theta)}$ is in fact positivity improving on $\mathcal F(\Theta)$.
	Thus, let us assume we have $\phi,\psi \in \FBP(\Theta)$ such that
	 $\braket{{\rm Q}^\Omega_\Theta S_{s,t}^{(\infty,0)}(P) {\rm Q}^\Omega_\Theta \phi,\psi}=
	 \braket{S_{s,t}^{(\infty,0)}(P)\phi,\psi} = 0$.
		Since $S^{(\infty,0)}_{s,t}(P)$ is positivity preserving and thus its adjoint is positivity preserving as well, cf. \cite{Miura.2003} or \cite[Lemma~2.7]{Miyao.2018},\footnote{In fact, $S^\Omega_{s,t}(P)$ is selfadjoint, but to prove this from the functional integral representation one needs to employ a rather complicated technical procedure, cf. \cite{MatteMoller.2018}. Thus, we refrain from doing so here and use a simpler abstract result.} we know
	$
		\supp({\rm P}\!_m \phi)\cap \supp({\rm P}\!_mS^{(\infty,0)}_{s,t}(P)^*\psi) 
	$
	is a Lebesgue zero set for any $m\in{\mathbb N}$.
	Further,
	since $L_k$ is a multiplication operator on ${\mathcal F}^{(m)}(\Theta)$ the same holds for
	$
	\supp({\rm P}\!_m {\mathrm e}^{-qL_k}\phi)\cap \supp({\rm P}\!_mS^{(\infty,0)}_{s,t}(P)^\ast\psi) 
	$ for any $q\geq0$,
	so
	\begin{align*}
		\Braket{ {\mathrm e}^{-\frac {t-s}NL_k}\phi,S^{(\infty,0)}_{s,t}(P)^*\psi} = 0.
	\end{align*}
By the flow equation \cref{eq:flow} we have 
$S^{(\infty,0)}_{s,t}(P)=S^{(\infty,0)}_{s+\frac{t-s}{N},t}(P)S^{(\infty,0)}_{s,s+\frac{t-s}{N}}(P)$ and thus 
	\begin{align*}
		\Braket{{\rm Q}^\Omega_\Theta S^{(\infty,0)}_{s,s+\frac{t-s}N}(P){\rm Q}^\Omega_\Theta {\mathrm e}^{-\frac {t-s}NL_k}\phi,S^{(\infty,0)}_{s+\frac{t-s}{N},t}(P)^*\psi} = 0.
	\end{align*}
Similarly we obtain that 
	\begin{align*}
		\Braket{{\mathrm e}^{-\frac {t-s}NL_k}{\rm Q}^\Omega_\Theta S^{(\infty,0)}_{s,s+\frac{t-s}N}(P){\rm Q}^\Omega_\Theta {\mathrm e}^{-\frac {t-s}NL_k}\phi,
		S^{(\infty,0)}_{s+\frac{t-s}{N},t}(P)^*\psi} = 0.
	\end{align*}
Again by the flow equation \cref{eq:flow}, 
we have $S^{(\infty,0)}_{s+\frac{t-s}{N},t}(P)=S^{(\infty,0)}_{s+\frac{2(t-s)}{N},t}(P)
S^{(\infty,0)}_{s+\frac{t-s}{N},s+\frac{2(t-s)}{N}}(P)$ and 
\begin{align*}
		\Braket{
{\rm Q}^\Omega_\Theta	S^{(\infty,0)}_{s+\frac{t-s}{N},s+\frac{2(t-s)}{N}}(P){\rm Q}^\Omega_\Theta
		{\mathrm e}^{-\frac {t-s}NL_k}{\rm Q}^\Omega_\Theta S^{(\infty,0)}_{s,s+\frac{t-s}N}(P){\rm Q}^\Omega_\Theta {\mathrm e}^{-\frac {t-s}NL_k}\phi,
		S^{(\infty,0)}_{s+\frac{2(t-s)}{N},t}(P)^*\psi} = 0.
	\end{align*}
	Iterating this procedure, we arrive at
	\begin{align*}
		\Braket{\left(\prod_{\ell=1}^{N} {\rm Q}^\Omega_\Theta S^{(\infty,0)}_{s+\frac{\ell-1}{N}(t-s),s+\frac\ell N (t-s)}(P){\rm Q}^\Omega_\Theta{\mathrm e}^{-\frac {t-s}NL_k}\right) \phi, \psi} = 0.
	\end{align*}
	In the limit $N\to\infty$, the Trotter product formula \cref{eq:trottren} now implies
	\begin{align*}
		\Braket{S^{(k,0)}_{s,t}(P_k)\otimes S^{(\infty,k)}_{s,t}(P-P_k) \phi, \psi} = 0.
	\end{align*}
	We set
 ${\rm Q}_k={\rm Q}^\Theta_{\Theta\cap \Omega_k}$
and rewrite the above identity as
	\begin{align*}
		0=&\Braket{S^{(k,0)}_{s,t}(P_k)\otimes S^{(\infty,k)}_{s,t}(P-P_k) (\one-{\rm Q}_k)\phi, (\one-{\rm Q}_k)\psi}
		\\&+\Braket{S^{(k,0)}_{s,t}(P_k)\otimes S^{(\infty,k)}_{s,t}(P-P_k) {\rm Q}_k\phi, (\one-{\rm Q}_k)\psi}\\
		&+\Braket{S^{(k,0)}_{s,t}(P_k)\otimes S^{(\infty,k)}_{s,t}(P-P_k) (\one-{\rm Q}_k)\phi, {\rm Q}_k\psi}
		+\Braket{S^{(k,0)}_{s,t}(P_k)\otimes S^{(\infty,k)}_{s,t}(P-P_k) {\rm Q}_k
		\phi, {\rm Q}_k\psi}. 		
	\end{align*}
Now using that
$(\one-{\rm Q}_k)\phi, 
(\one-{\rm Q}_k)\psi,
{\rm Q}_k\phi, 
{\rm Q}_k\psi 
\in \FBP(\Theta)\subset \FBP(\Omega)$,
by \cref{24,25},
all terms in the sum are non-negative, whence 
we have
\[\Braket{S^{(k,0)}_{s,t}(P_k)\otimes S^{(\infty,k)}_{s,t}(P-P_k) {\rm Q}_k
		\phi, {\rm Q}_k\psi}=0.\]
Under the identification
$\mathcal F(\Theta)\cong \mathcal F(\Theta\cap\Omega_k)\otimes\mathcal F(\Theta\setminus \Omega_k)$, the vector
${\rm Q}_k
		\phi$ is of the form
${\rm Q}_k
		\phi	=\phi_k\otimes \Omega_F$ with $\phi_k\in \overline{{\mathcal F}}_+(\Theta)$,
where $\Omega_F$ is the Fock vacuum of 
${\mathcal F}(\Theta\setminus \Omega_k)$. 
		Similarly 
${\rm Q}_k
		\psi	=\psi_k\otimes \Omega_F$ with $\psi_k\in \overline{{\mathcal F}}_+(\Theta)$. 
This yields
\begin{align*}
0&=\Braket{S^{(k,0)}_{s,t}(P_k)\otimes S^{(\infty,k)}_{s,t}(P-P_k) {\rm Q}_k
		\phi, {\rm Q}_k\psi}=
		\Braket{S^{(k,0)}_{s,t}(P_k)\phi_k,
		 \psi_k}
		\Braket{\Omega_F,  S^{(\infty,k)}_{s,t}(P-P_k) \Omega_F}\\
		&=
\Braket{S^{(k,0)}_{s,t}(P_k)\phi_k,
		 \psi_k}{\mathbb E}[{\mathrm e}^{u_{\infty,k} + {\mathrm i}(P-P_k)\cdot(X_t-X_s)}]
		 \end{align*}
and 	hence 	using \cref{eq:vacexp}
	\begin{align*}
		\Braket{S^{(k,0)}_{s,t}(P_k)\phi_k,\psi_k}  = 0.
	\end{align*}
	Thus, since $S^{(k,0)}_{s,t}(P_k)$ is positivity improving on $\mathcal F(\Omega_k)$, we have that  
	$\phi_k = 0$ or $\psi_k =0$ for all $k\in{\mathbb N}$, so $\phi = 0$ or $\psi = 0$. 
	This proves that ${\rm Q}^\Omega_\Theta S^{(\infty,0)}_{s,t}(P){\rm Q}^\Omega_\Theta\lceil \mathcal F(\Theta)$ is indeed positivity improving. Since $\Theta$ can be any arbitrary bounded set, this proves the statement.
\end{proof}
\begin{rem}
	One major technical difficulty of the proof of \cref{thm:posselfen} is the restriction to the bounded subset $\Theta\subset \Omega$, which is due to the fact that Assumption \assu{\Omega_1,\Omega_2,P_1,P_2} is not satisfied for arbitrary sets $\Omega_1,\Omega_2\subset \mathbb R^d$ for $\Psi = \Psi_{\nr}(p)=\half |p|^2$
	An appropriate generalization of the Trotter product formula in \cref{cor:trotter} allowing for $L_{P_1,P_2}(\Omega_1,\Omega_2)$ to be unbounded from below would allow a more direct approach along the lines of our proof, since we could then set $\Theta = \Omega$.
\end{rem}
We conclude with the
\begin{proof}[\textbf{Proof of \cref{thm:Nelsonpos}}]
From \cref{eq:FK,eq:FKUVreg,eq:UVren}, we recall that
	\begin{align*}
		{\mathrm e}^{-tH_{\#,n}(P)}& = {\mathbb E}\Big[{\mathrm e}^{u_{n,t}^\#}F^\omega_{t/2}(U_{n,t}^{\#,-})F^\omega_{t/2}(U_{n,t}^{\#,+})^*{\mathrm e}^{{\mathrm i}(P-{\rm d}\Gamma(\hat p))\cdot X^\#_t}\Big]\\\ 
		&\xrightarrow{n\to\infty}
{\mathbb E}\Big[{\mathrm e}^{u_t^\#}F^\omega_{t/2}(U_t^{\#,-})F^\omega_{t/2}(U_t^{\#,+})^*{\mathrm e}^{{\mathrm i}(P-{\rm d}\Gamma(\hat p))\cdot X^\#_t}\Big]=
{\mathrm e}^{-tH_\#(P)} \end{align*}
in the strong sense. 
Again, we could have also deduced the convergence from \cref{eq:uniformconv}.
We  now apply \cref{thm:posselfen} with $s=0$ to show that 
the right hand side above 
is positivity improving.
	To this end, we choose $\Omega_n=\{k\in{\mathbb R}^d\mid |k|<n\}$, $P_n=P$, $f_n(t)=t$, $g_n^\pm(t) = -v_n=-\lambda \one_{|\cdot|<n}\om(\cdot)^{-1/2}$ and $e_n(s,t) = E_n|s-t|$ for any $s,t\ge 0$.
	Then our definitions $u_{n,t}$ and $U_{n,t}^\pm$ agree with $u_{\alpha_n^-,\alpha_n^+}(0,t)$ and $U_{-v_{n,t}^\pm}(0,t)$, respectively, cf. \cref{eq:uUV,eq:UUV,def:pathintegral}. Further, the positivity and monotonicity assumptions on $g_n^\pm$ are easily verified from the definitions and the assumption $\lambda<0$. We have already recalled the convergence assumptions \cref{eq:convassumstrong.3,eq:convassumstrong.4} in \cref{eq:uconvexample,eq:Uconvexample}  (note that $u_{n,k} = u_n-u_k$ here) and the exponential moment bounds  assumed in \cref{eq:convassumstrong.1,eq:convassumstrong.2} can be found in \cite[Corollary~3.21,~Theorem~4.9]{MatteMoller.2018} for the case $\#=\nr $ and \cite[Theorems~6.6 and 6.7,~Lemma~B.1]{HinrichsMatte.2022} for the case $\#=\sr$.
	The final assumption \cref{eq:vacexp} is trivial in our case, since $u_{n,t}$ is easily seen to be real-valued for any $n\in {\mathbb N}$ and thus also $u_t$ is.
	Hence, all assumptions of \cref{thm:posselfen} are satisfied here and the statement follows.
\end{proof}

\subsection*{Acknowledgements}
The authors thank Oliver Matte for valuable discussions on the subject.
BH acknowledges funding by the Ministry of Culture and Science of the State of North Rhine-Westphalia within the project `PhoQC' (Grant Nr. PROFILNRW-2020-067).
FH is financially supported by JSPS KAKENHI 20K20886 %
and JSPS KAKENHI 20H01808.

\appendix
\section{Evolution Equations for Feynman--Kac Evolution Systems}\label{appendix}

The following proof is a generalization of previous discussions given in \cite{GueneysuMatteMoller.2017,MatteMoller.2018,HinrichsMatte.2022,HinrichsMatte.2023}. Some technical discussions are deferred to those articles.

\begin{proof}[\textbf{Proof of \cref{prop:evolution}}]
	\ 
	
	\smallskip
	\noindent{\em Step 1.} We start by studying the integrand of the path integral
	\begin{align}
		W_{s,t} \coloneqq {\mathrm e}^{u_{\alpha^-,\alpha^+}(s,t)}F_{f(t)/2}(U_{v^+_t}(s,t))F_{f(t)/2}(U_{v^-}(s,t))^*.
	\end{align}
	and prove the integral equation
	\begin{align}\label{eq:intequW}
		W_{s,t}\phi - \phi = -\int_s^t h(r,X_r)W_{s,r}\phi{\rm d}r.
	\end{align}
	To this end, we will first choose $\phi$ to be an exponential vector, i.e., for $h\in L^2(\Omega)$ the Fock space vector given by
	\begin{align}\label{def:expvec}
		{\rm P}\!_n \eps(h) \coloneqq \frac 1{\sqrt{n!}}h^{\otimes n}  \in {\mathcal F}(\Omega).
	\end{align}
	They satisfy
	\begin{align}\label{prop:expvec}
		a(f)\eps(h) = \braket{f,h}\eps(h) \quad\text{and}\quad \braket{\eps(h_1)|{\rm d}\Gamma(\omega)\eps(h_2)} = \braket{h_1|\omega h_2}{\mathrm e}^{\braket{h_1,h_2}}
	\end{align}
for 
$f,h_1\in L^2(\Omega)$ and $ h_2\in{\mathscr D}(\omega)$. 
A direct calculation, cf. \cite[Remark 5.2]{MatteMoller.2018}, yields
	\begin{align*}
		F_t^\omega(f)\eps(h) = \eps(f+{\mathrm e}^{-t\omega}h)
		\quad\text{and}\quad
		F_t^\omega(f)^*\eps(h) = {\mathrm e}^{\braket{f|h}}\eps({\mathrm e}^{-t\omega}h).
	\end{align*}
	Thus
	\begin{align*}
		W_{s,t}\eps(h) = {\mathrm e}^{u_{\alpha^-,\alpha^+}(s,t) - \braket{U_{v^-}(s,t)|h}} \eps ({\mathrm e}^{-f(t)\omega}h - U_{v^+_t}(s,t)).
	\end{align*}
	Inserting our choices \cref{eq:evolutionassum}
	\begin{align*}
		\braket{\eps(h_2)|W_{s,t}\eps(h_1)} = 
		{\mathrm e}^{\int_s^t\! \theta(s,r){\rm d}r + \braket{h_2|h_1}} = \braket{\eps(h_2)|\eps(h_1)} + \int_s^t \theta(s,r)\braket{\eps(h_2)|W_{s,r}\eps(h_1)}{\rm d}r
	\end{align*}
	with
	\begin{align*}
		\theta(s,t) = \braket{{\mathrm e}^{-i\hat p\cdot X_t}g^-(t)| U_{v^+_t}(s,t)}  - \braket{{\mathrm e}^{-{\mathrm i}\hat p \cdot X_t}v^-(t)|h_1} - \braket{h_2|f'(t)\omega({\mathrm e}^{-f(t)\omega}h_1 - U_{v^+_t}(s,t)) + {\mathrm e}^{-{\mathrm i}\hat p X_t}v_t^+(t)}.
	\end{align*}
	In view of \cref{prop:expvec}, this yields
	\begin{align*}
		\braket{\eps(h_2)| W_{s,t}\eps(h_1)} - \braket{\eps(h_2)|\eps(h_1)} = -\int_s^t \braket{\eps(h_2)|h(r,X_r)W_{s,r}\eps(h_1)}{\rm d}r.
	\end{align*}
	Since the span of exponential vectors is dense in $L^2(\Omega)$ and a core for $h(r,X_r)$, by the Kato--Rellich theorem, cf. \cite{Parthasarathy.1992}, this proves \cref{eq:intequW}.
	
	\smallskip
	\noindent{\em Step 2.} We now prove \cref{eq:IVP}, following the proofs of \cite[Lemma~7.1,~Theorem ~7.3]{HinrichsMatte.2023}. For more details, we refer the reader there
	
	To this end, note that for $\eta\in{\mathscr D}(|{\rm d}\Gamma(\hat p)|^2)$, the map $z\mapsto {\mathrm e}^{{\mathrm i}(P-{\rm d}\Gamma(\hat p)\cdot z)}\eta$ is twice differentiable and we can apply It\^o's formula (cf. \cite[Lemma~3.6]{HinrichsMatte.2022} for the version used here) in conjunction with It\^o's product formula and Step 1 to obtain
	\begin{align*}
		&\Braket{{\mathrm e}^{{\mathrm i}(P-{\rm d}\Gamma(\hat p))\cdot X_t}\eta | W_{s,t}\phi} - \braket{\eta|\phi}\\
		& \qquad = -\int_s^t \Braket{{\mathrm e}^{{\mathrm i}(P-{\rm d}\Gamma(\hat p))\cdot X_r}\eta | \Big(\Psi(P-{\rm d}\Gamma(\hat p)) + h(r,X_r)\Big)W_{s,r}\phi}{\rm d}r\\
		& \qquad \qquad + \int_s^t \Braket{{\mathrm e}^{{\mathrm i}(P-{\rm d}\Gamma(\hat p))\cdot X_{r-}}{\rm d}\Gamma(\hat p) \eta| W_{s,r}\phi} \cdot {\rm d}B^\Psi_{r}\\
		& \qquad \qquad + \int_{(s,t]\times {\mathbb R}^d} \Braket{{\mathrm e}^{{\mathrm i}(P-{\rm d}\Gamma(\hat p))\cdot X_{r-}}\left({\mathrm e}^{{\mathrm i}(P-{\rm d}\Gamma(\hat p))\cdot z} - 1 \right)\eta | W_{s,r-}\phi }{\mathrm d}\tilde N_\Psi(r,z),
	\end{align*}
	where $B^\Psi_r$ is the Brownian part of $X$ and $\tilde N_\Psi(r,z)$ is the Poisson random measure associated with $X$, by the L\'evy--It\^o decomposition, cf. \cite[Theorem 2.4.16]{Applebaum.2009}.
	The terms in the last two lines are martingales starting at zero and hence drop out when taking the expectation.
	Further,
	recalling that $S^\Omega_{s,t}(P) = {\mathbb E}\big[W_{s,t}^* {\mathrm e}^{{\mathrm i}(P-{\rm d}\Gamma(\hat p))\cdot X_t}\big]$
	 as well as ${\mathrm e}^{-{\mathrm i}(P-{\rm d}\Gamma(\hat p))\cdot X_t}h(t,X_t) = h(t,0){\mathrm e}^{-{\mathrm i}(P-{\rm d}\Gamma(\hat p))\cdot X_t} $, we find
	\begin{align*}
		S^\Omega_{s,t_2}(P)\phi - S_{s,t_1}\phi = -\int_{t_1}^{t_2} S^\Omega_{s,r}(P) \left(\Psi(P-{\rm d}\Gamma(\hat p)) + h(t,0)\right)\phi {\rm d}r
	\end{align*}
	Selfadjointness of $\Psi(P-{\rm d}\Gamma(\hat p)) + h(t,0)$ and continuity of $t\mapsto S^\Omega_{s,t}(P)$ imply the desired differential equation \cref{eq:IVP}. 
	
	\smallskip
	\noindent{\em Step 3.} It remains to prove \cref{eq:flow}.
	
	Directly from our definitions of $u$ and $U$, it is straightforward to verify
	\begin{align*}
		{\mathrm e}^{-{\mathrm i}(P-{\rm d}\Gamma(\hat p))(X_t-X_s)}W_{s,t} = {\mathrm e}^{-{\mathrm i}(P-{\rm d}\Gamma(\hat p))\cdot (X_t-X_r)}W_{r,t} {\mathrm e}^{-{\mathrm i}(P-{\rm d}\Gamma(\hat p))\cdot (X_r-X_s)} W_{s,r},
	\end{align*}
	by first calculating the action of both sides on exponential vectors and then using a density argument,
	see for example \cite[Lemma~4.18]{MatteMoller.2018} for more details on the explicit calculations.
	By taking expectations and using the stationary and independent increments of our L\'evy process, one easily arrives at \cref{eq:flow}, see for example \cite[Theorem ~6.6 and Corollary 6.7]{HinrichsMatte.2023} for an in-depth discussion.
%
\end{proof}


\bibliographystyle{halpha-abbrv}
\bibliography{../../Literature/00lit}

\begin{thebibliography}{GMM17}
\expandafter\ifx\csname url\endcsname\relax
  \def\url#1{\texttt{#1}}\fi
\expandafter\ifx\csname doi\endcsname\relax
  \def\doi#1{\burlalt{doi:#1}{http://dx.doi.org/#1}}\fi
\expandafter\ifx\csname urlprefix\endcsname\relax\def\urlprefix{URL }\fi
\expandafter\ifx\csname href\endcsname\relax
  \def\href#1#2{#2}\fi
\expandafter\ifx\csname burlalt\endcsname\relax
  \def\burlalt#1#2{\href{#2}{#1}}\fi

\bibitem[Amm00]{Ammari.2000}
Z.~Ammari.
\newblock Asymptotic Completeness for a Renormalized Nonrelativistic
  {H}amiltonian in Quantum Field Theory: The {N}elson Model.
\newblock {\em Math. Phys. Anal. Geom.}, 3(3):217--285, 2000.
\newblock \doi{10.1023/A:1011408618527}.

\bibitem[App09]{Applebaum.2009}
D.~Applebaum.
\newblock {\em {L}{\'{e}}vy Processes and Stochastic Calculus}, volume 116 of
  {\em Cambridge Studies in Advanced Mathematics}.
\newblock Cambridge University Press, Cambridge, 2nd edition, 2009.
\newblock \doi{10.1017/CBO9780511809781}.

\bibitem[Ara18]{Arai.2018}
A.~Arai.
\newblock {\em Analysis on {F}ock Spaces and Mathematical Theory of Quantum
  Fields}.
\newblock World Scientific, New Jersey, 2018.
\newblock \doi{10.1142/10367}.

\bibitem[BDG21]{BeaudDybalskiGraf.2021}
V.~Beaud, W.~Dybalski, and G.~M. Graf.
\newblock Infraparticle States in the Massless Nelson Model: Revisited.
\newblock In Press, 2021,
  \burlalt{arXiv:2105.05723}{http://arxiv.org/abs/2105.05723},
  \doi{10.1007/s00023-022-01261-2}.

\bibitem[BDP12]{BachmannDeckertPizzo.2012}
S.~Bachmann, D.-A. Deckert, and A.~Pizzo.
\newblock The mass shell of the {N}elson model without cut-offs.
\newblock {\em J. Funct. Anal.}, 263(5):1224--1282, 2012,
  \burlalt{arXiv:1104.3271}{http://arxiv.org/abs/1104.3271}.
\newblock \doi{10.1016/j.jfa.2012.04.021}.

\bibitem[BFS98]{BachFroehlichSigal.1998a}
V.~Bach, J.~Fr{\"o}hlich, and I.~M. Sigal.
\newblock Renormalization Group Analysis of Spectral Problems in Quantum Field
  Theory.
\newblock {\em Adv. Math.}, 137(2):205--298, 1998.
\newblock \doi{10.1006/aima.1998.1733}.

\bibitem[Can71]{Cannon.1971}
J.~Cannon.
\newblock Quantum field theoretic properties of a model of {N}elson: Domain and
  eigenvector stability for perturbed linear operators.
\newblock {\em J. Funct. Anal.}, 8(1):101--152, 1971.
\newblock \doi{10.1016/0022-1236(71)90023-1}.

\bibitem[Dam20]{Dam.2018}
T.~N. Dam.
\newblock Absence of Ground States in the Translation Invariant Massless
  {N}elson Model.
\newblock {\em Ann. Henri Poincar\'e}, 21(8):2655--2679, 2020,
  \burlalt{arXiv:1808.00088}{http://arxiv.org/abs/1808.00088}.
\newblock \doi{10.1007/s00023-020-00928-y}.

\bibitem[DH22]{DamHinrichs.2021}
T.~N. Dam and B.~Hinrichs.
\newblock Absence of ground states in the renormalized massless
  translation-invariant {N}elson model.
\newblock {\em Rev. Math. Phys.}, 34(10):2250033, 2022,
  \burlalt{arXiv:1909.07661}{http://arxiv.org/abs/1909.07661}.
\newblock \doi{10.1142/S0129055X22500337}.

\bibitem[DP14a]{DeckertPizzo.2014}
D.-A. Deckert and A.~Pizzo.
\newblock Ultraviolet Properties of the Spinless, One-Particle {Y}ukawa Model.
\newblock {\em Commun. Math. Phys.}, 327(3):887--920, 2014,
  \burlalt{arXiv:1208.2646}{http://arxiv.org/abs/1208.2646}.
\newblock \doi{10.1007/s00220-013-1877-9}.

\bibitem[DP14b]{DybalskiPizzo.2014}
W.~Dybalski and A.~Pizzo.
\newblock Coulomb Scattering in the Massless {N}elson model {I}. {F}oundations
  of Two-Electron Scattering.
\newblock {\em J. Stat. Phys.}, 154(1):543--587, 2014,
  \burlalt{arXiv:1302.5001}{http://arxiv.org/abs/1302.5001}.
\newblock \doi{10.1007/s10955-013-0857-y}.

\bibitem[Far72]{Faris.1972}
W.~G. Faris.
\newblock Invariant Cones and Uniqueness of the Ground State for Fermion
  Systems.
\newblock {\em J. Math. Phys.}, 13(8):1285--1290, 1972.
\newblock \doi{10.1063/1.1666133}.

\bibitem[Fr{\"o}54]{Frohlich.1954}
H.~Fr{\"o}hlich.
\newblock Electrons in Lattice Fields.
\newblock {\em Adv. Phys.}, 3(11):325--361, 1954.
\newblock \doi{10.1080/00018735400101213}.

\bibitem[Fr{\"{o}}73]{Frohlich.1973}
J.~Fr{\"{o}}hlich.
\newblock On the infrared problem in a model of scalar electrons and massless
  scalar bosons.
\newblock {\em Ann. de l'Inst. Henri Poincar\'{e}}, 19(1):1--103, 1973.

\bibitem[Fr{\"{o}}74]{Frohlich.1974}
J.~Fr{\"{o}}hlich.
\newblock Existence of dressed one-electron states in a class of persistent
  models.
\newblock {\em Fortschr. Phys.}, 22(3):159--198, 1974.
\newblock \doi{10.1002/prop.19740220304}.

\bibitem[GHL14]{GubinelliHiroshimaLorinczi.2014}
M.~Gubinelli, F.~Hiroshima, and J.~L\H{o}rinczi.
\newblock Ultraviolet renormalization of the {N}elson {H}amiltonian through
  functional integration.
\newblock {\em J. Funct. Anal.}, 267(9):3125--3153, 2014,
  \burlalt{arXiv:1304.6662}{http://arxiv.org/abs/1304.6662}.
\newblock \doi{10.1016/j.jfa.2014.08.002}.

\bibitem[GJ70]{GlimmJaffe.1970b}
J.~Glimm and A.~Jaffe.
\newblock The {$\lambda (\varphi\sp{4})\sb{2}$} quantum field theory without
  cutoffs: {II}. The field operators and the approximate vacuum.
\newblock {\em Ann. Math.}, 91:362--401, 1970.
\newblock \doi{10.2307/1970582}.

\bibitem[GMM17]{GueneysuMatteMoller.2017}
B.~G{\"{u}}neysu, O.~Matte, and J.~S. M{\o }ller.
\newblock Stochastic differential equations for models of non-relativistic
  matter interacting with quantized radiation fields.
\newblock {\em Probab. Theory Relat. Fields}, 167(3-4):817--915, 2017,
  \burlalt{arXiv:1402.2242}{http://arxiv.org/abs/1402.2242}.
\newblock \doi{10.1007/s00440-016-0694-4}.

\bibitem[Gro71]{Gross.1971}
L.~Gross.
\newblock A noncommutative extension of the {P}erron-{F}robenius theorem.
\newblock {\em Bull. Amer. Math. Soc.}, 77:343--347, 1971.
\newblock \doi{10.1090/S0002-9904-1971-12686-1}.

\bibitem[Gro72]{Gross.1972}
L.~Gross.
\newblock Existence and uniqueness of physical ground states.
\newblock {\em J. Funct. Anal.}, 10(1):52--109, 1972.
\newblock \doi{10.1016/0022-1236(72)90057-2}.

\bibitem[Gro73]{Gross.1973}
L.~Gross.
\newblock The relativistic polaron without cutoffs.
\newblock {\em Commun. Math. Phys.}, 31(1):25--73, 1973.
\newblock \doi{10.1007/BF01645589}.

\bibitem[GW18]{GriesemerWuensch.2018}
M.~Griesemer and A.~W{\"u}nsch.
\newblock On the domain of the {N}elson {H}amiltonian.
\newblock {\em J. Math. Phys.}, 59(4):042111, 2018,
  \burlalt{arXiv:1711.10916}{http://arxiv.org/abs/1711.10916}.
\newblock \doi{10.1063/1.5018579}.

\bibitem[HHS24]{HaslerHinrichsSiebert.2023}
D.~Hasler, B.~Hinrichs, and O.~Siebert.
\newblock Non-{F}ock ground states in the translation-invariant {N}elson model
  revisited non-perturbatively.
\newblock {\em J. Funct. Anal.}, 286(7):110319, 2024,
  \burlalt{arXiv:2302.06998}{http://arxiv.org/abs/2302.06998}.
\newblock \doi{10.1016/j.jfa.2024.110319}.

\bibitem[Hir00]{Hiroshima.2000b}
F.~Hiroshima.
\newblock Ground states of a model in nonrelativistic quantum electrodynamics.
  {II}.
\newblock {\em J. Math. Phys.}, 41(2):661--674, 2000.
\newblock \doi{10.1063/1.533158}.

\bibitem[Hir07]{Hiroshima.2007}
F.~Hiroshima.
\newblock Fiber {H}amiltonians in non-relativistic quantum electrodynamics.
\newblock {\em J. Funct. Anal.}, 252(1):314--355, 2007,
  \burlalt{arXiv:math-ph/0607014}{http://arxiv.org/abs/math-ph/0607014}.
\newblock \doi{10.1016/j.jfa.2007.06.006}.

\bibitem[Hir15]{Hiroshima.2015}
F.~Hiroshima.
\newblock Translation invariant models in QFT without ultraviolet cutoffs.
\newblock Preprint, 2015,
  \burlalt{arXiv:1506.07514}{http://arxiv.org/abs/1506.07514}.

\bibitem[HL20]{HiroshimaLorinczi.2020}
F.~Hiroshima and J.~L{\H{o}}rinczi.
\newblock {\em Feynman-{K}ac-Type Theorems and {G}ibbs Measures on Path Space.
  {V}olume 2: Applications in Rigorous Quantum Field Theory}, volume 34/2 of
  {\em De Gruyter Studies in Mathematics}.
\newblock De Gruyter, Berlin, 2nd edition, 2020.
\newblock \doi{10.1515/9783110403541}.

\bibitem[HL24]{HinrichsLampart.2023}
B.~Hinrichs and J.~Lampart.
\newblock A Lower Bound on the Critical Momentum of an Impurity in a
  Bose--Einstein Condensate.
\newblock {\em C.\,R. Math.}, 362:1399--1412, 2024,
  \burlalt{arXiv:2311.05361}{http://arxiv.org/abs/2311.05361}.
\newblock \doi{10.5802/crmath.652}.

\bibitem[HM22]{HiroshimaMatte.2019}
F.~Hiroshima and O.~Matte.
\newblock Ground states and their associated path measures in the renormalized
  {N}elson model.
\newblock {\em Rev. Math. Phys.}, 34(2):2250002, 2022,
  \burlalt{arXiv:1903.12024}{http://arxiv.org/abs/1903.12024}.
\newblock \doi{10.1142/S0129055X22500027}.

\bibitem[HM23]{HinrichsMatte.2023}
B.~Hinrichs and O.~Matte.
\newblock {F}eynman--{K}ac formula for fiber {H}amiltonians in the relativistic
  {N}elson model in two spatial dimensions.
\newblock Preprint, 2023,
  \burlalt{arXiv:2309.09005}{http://arxiv.org/abs/2309.09005}.

\bibitem[HM24a]{HinrichsMatte.2022}
B.~Hinrichs and O.~Matte.
\newblock {F}eynman--{K}ac Formula and Asymptotic Behavior of the Minimal
  Energy for the Relativistic {N}elson Model in Two Spatial Dimensions.
\newblock {\em Ann. Henri Poincar\'e}, 25(6):2877--2940, 2024,
  \burlalt{arXiv:2211.14046}{http://arxiv.org/abs/2211.14046}.
\newblock \doi{10.1007/s00023-023-01369-z}.

\bibitem[HM24b]{HinrichsMatte.2024}
B.~Hinrichs and O.~Matte.
\newblock {F}eynman--{K}ac formulas for semigroups generated by multi-polaron
  {H}amiltonians in magnetic fields and on general domains.
\newblock Preprint, 2024,
  \burlalt{arXiv:2403.12147}{http://arxiv.org/abs/2403.12147}.

\bibitem[Lam20]{Lampart.2019}
J.~Lampart.
\newblock The renormalized {B}ogoliubov-{F}r\"{o}hlich {H}amiltonian.
\newblock {\em J. Math. Phys.}, 61(10):101902, 2020,
  \burlalt{arXiv:1909.02430}{http://arxiv.org/abs/1909.02430}.
\newblock \doi{10.1063/5.0014217}.

\bibitem[Lam21]{Lampart.2020}
J.~Lampart.
\newblock The Resolvent of the {N}elson {H}amiltonian Improves Positivity.
\newblock {\em Math. Phys. Anal. Geom.}, 24(1):2, 2021,
  \burlalt{arXiv:2010.03235}{http://arxiv.org/abs/2010.03235}.
\newblock \doi{10.1007/s11040-021-09374-6}.

\bibitem[Lam23]{Lampart.2023}
J.~Lampart.
\newblock Hamiltonians for Polaron Models with Subcritical Ultraviolet
  Singularities.
\newblock {\em Ann. Henri Poincar{\'e}}, 24(8):2687--2728, 2023,
  \burlalt{arXiv:2203.07253}{http://arxiv.org/abs/2203.07253}.
\newblock \doi{10.1007/s00023-023-01285-2}.

\bibitem[LHB20]{LorincziHiroshimaBetz.2020}
J.~L\H{o}rinczi, F.~Hiroshima, and V.~Betz.
\newblock {\em Feynman-{K}ac-Type Theorems and {G}ibbs Measures on Path Space.
  Volume 1: Feynman-Kac-Type Formulae and Gibbs Measures}, volume 34/1 of {\em
  De Gruyter Studies in Mathematics}.
\newblock De Gruyter, Berlin, 2nd edition, 2020.
\newblock \doi{10.1515/9783110330397}.
\newblock Feynman-Kac-type formulae and Gibbs measures.

\bibitem[LLP53]{LeeLowPines.1953}
T.-D. Lee, F.~E. Low, and D.~Pines.
\newblock The motion of slow electrons in a polar crystal.
\newblock {\em Phys. Rev.}, 90(2):297--302, 1953.
\newblock \doi{10.1103/PhysRev.90.297}.

\bibitem[LS19]{LampartSchmidt.2019}
J.~Lampart and J.~Schmidt.
\newblock On {N}elson-Type {H}amiltonians and Abstract Boundary Conditions.
\newblock {\em Commun. Math. Phys.}, 367(2):629--663, 2019,
  \burlalt{arXiv:1803.00872}{http://arxiv.org/abs/1803.00872}.
\newblock \doi{10.1007/s00220-019-03294-x}.

\bibitem[LY58]{LiebYamazaki.1958}
E.~H. Lieb and K.~Yamazaki.
\newblock Ground-State Energy and Effective Mass of the Polaron.
\newblock {\em Phys. Rev.}, 111:728--733, 1958.
\newblock \doi{10.1103/PhysRev.111.728}.

\bibitem[Miu03]{Miura.2003}
Y.~Miura.
\newblock On order of operators preserving selfdual cones in standard forms.
\newblock {\em Far East J. Math. Sci.}, 8(1):1--9, 2003.

\bibitem[Miy19]{Miyao.2018}
T.~Miyao.
\newblock On the semigroup generated by the renormalized {N}elson
  {H}amiltonian.
\newblock {\em J. Funct. Anal.}, 276(6):1948--1977, 2019,
  \burlalt{arXiv:1803.08659}{http://arxiv.org/abs/1803.08659}.
\newblock \doi{10.1016/j.jfa.2018.11.001}.

\bibitem[Miy21]{Miyao.2019}
T.~Miyao.
\newblock On Renormalized {H}amiltonian Nets.
\newblock {\em Ann. Henri Poincar\'{e}}, 22(9):2935--2973, 2021,
  \burlalt{arXiv:1810.12716}{http://arxiv.org/abs/1810.12716}.
\newblock \doi{10.1007/s00023-021-01029-0}.

\bibitem[MM18]{MatteMoller.2018}
O.~Matte and J.~S. M\o{}ller.
\newblock {F}eynman-{K}ac Formulas for the Ultra-Violet Renormalized {N}elson
  Model.
\newblock {\em Ast\'{e}risque}, 404, 2018,
  \burlalt{arXiv:1701.02600}{http://arxiv.org/abs/1701.02600}.
\newblock \doi{10.24033/ast.1054}.

\bibitem[M{\o{}}l05]{Moller.2005}
J.~S. M{\o{}}ller.
\newblock The Translation Invariant Massive {N}elson Model: {I}. {T}he Bottom
  of the Spectrum.
\newblock {\em Ann. Henri Poincar\'{e}}, 6(6):1091--1135, 2005.
\newblock \doi{10.1007/s00023-005-0234-8}.

\bibitem[M{\o{}}l06]{Moller.2006}
J.~S. M{\o{}}ller.
\newblock On the Essential Spectrum of the Translation Invariant {N}elson
  Model.
\newblock In {\em Mathematical Physics of Quantum Mechanics}, volume 690 of
  {\em Lecture Notes in Physics}, pages 179--195. Springer, Berlin, 2006.
\newblock \doi{10.1007/3-540-34273-7\_15}.

\bibitem[Nel64]{Nelson.1964}
E.~Nelson.
\newblock Interaction of Nonrelativistic Particles with a Quantized Scalar
  Field.
\newblock {\em J. Math. Phys.}, 5(9):1190--1197, 1964.
\newblock \doi{10.1063/1.1704225}.

\bibitem[Par92]{Parthasarathy.1992}
K.~R. Parthasarathy.
\newblock {\em An Introduction to Quantum Stochastic Calculus}, volume~85 of
  {\em Monographs in Mathematics}.
\newblock Birkh\"{a}user, Basel, 1992.
\newblock \doi{10.1007/978-3-0348-0566-7}.

\bibitem[Piz03]{Pizzo.2003}
A.~Pizzo.
\newblock One-particle (improper) States in {N}elson's Massless Model.
\newblock {\em Ann. Henri Poincar\'{e}}, 4(3):439--486, 2003.
\newblock \doi{10.1007/s00023-003-0136-6}.

\bibitem[Piz05]{Pizzo.2005}
A.~Pizzo.
\newblock Scattering of an {\it {I}nfraparticle}: The One Particle Sector in
  {N}elson's Massless Model.
\newblock {\em Ann. Henri Poincar\'{e}}, 6(3):553--606, 2005.
\newblock \doi{10.1007/s00023-005-0216-x}.

\bibitem[Pol23]{Polzer.2023}
S.~Polzer.
\newblock Renewal approach for the energy--momentum relation of the
  {F}r\"{o}hlich polaron.
\newblock {\em Lett. Math. Phys.}, 113(4):90, 2023,
  \burlalt{arXiv:2206.14425}{http://arxiv.org/abs/2206.14425}.
\newblock \doi{10.1007/s11005-023-01711-w}.

\bibitem[RS78]{ReedSimon.1978}
M.~Reed and B.~Simon.
\newblock {\em Analysis of Operators}, volume~4 of {\em Methods of Modern
  Mathematical Physics}.
\newblock Academic Press, San Diego, 1978.

\bibitem[Sch19]{Schmidt.2019}
J.~Schmidt.
\newblock On a direct description of pseudorelativistic {N}elson
  {H}amiltonians.
\newblock {\em J. Math. Phys.}, 60(10):102303, 2019,
  \burlalt{arXiv:1810.03313}{http://arxiv.org/abs/1810.03313}.
\newblock \doi{10.1063/1.5109640}.

\bibitem[Sim73]{Simon.1973}
B.~Simon.
\newblock Ergodic Semigroups of Positivity Preserving Self-Adjoint Operators.
\newblock {\em J. Funct. Anal.}, 12:335--339, 1973.
\newblock \doi{10.1016/0022-1236(73)90084-0}.

\bibitem[Slo74]{Sloan.1974}
A.~D. Sloan.
\newblock The polaron without cutoffs in two space dimensions.
\newblock {\em J. Math. Phys.}, 15:190--201, 1974.
\newblock \doi{10.1063/1.1666620}.

\bibitem[Vui10]{Vuillermot.2010}
P.-A. Vuillermot.
\newblock A generalization of {C}hernoff's product formula for time-dependent
  operators.
\newblock {\em J. Funct. Anal.}, 259(11):2923--2938, 2010.
\newblock \doi{10.1016/j.jfa.2010.07.018}.

\end{thebibliography}

\end{document}